\documentclass[submission,copyright,creativecommons]{eptcs}
\usepackage{underscore}           

\usepackage{graphicx}
\usepackage{mathrsfs}
\usepackage{xspace}
\usepackage[dvips]{epsfig}
\usepackage{color}
\usepackage{amsmath}
\usepackage{amsthm}
\usepackage{amssymb}
\usepackage{mathtools}
\usepackage{thm-restate}
\usepackage{multirow}
\usepackage{multicol}

\usepackage{wasysym}
\usepackage[dvips]{graphicx}
\usepackage{stmaryrd}
\usepackage{ebproof} 
\usepackage{tikz}
\usepackage{cmll}
\usepackage{virginialake}
\usepackage{marginnote}

\usepackage{pzccal}
\usepackage{cleveref}
\usepackage{adjustbox}

\usetikzlibrary{graphs}

\def\N{\mathbb N}

\def\fequiv{\simeq}
   
\def\cons#1{\{#1\}}
\def\symbol{\bullet}
\def\contextsym{\mathsf C}
\newcommand{\ctx}{\contextsym \cons{}}
\newcommand{\ctxp}[1]{\contextsym \cons{#1}}
\newcommand{\ctxpp}[1]{\contextsym' \cons{#1}}

\def\sizeof#1{|#1|}

\def\cneg#1{\bar #1}
\def\widecneg#1{\overline{#1}}

\def\rclr#1{{\color{red}#1}}
\def\bclr#1{{\color{blue}#1}}


\def\set#1{\{#1\}}

\newtheorem{theorem}{Theorem}[section]

\theoremstyle{plain}

\newtheorem{fact} [theorem]{Fact}

\newtheorem{corollary}  [theorem]{Corollary}
\newtheorem{lemma}      [theorem]{Lemma}
\newtheorem{proposition}[theorem]{Proposition}

\theoremstyle{definition}
\newtheorem{definition}[theorem]{Definition}
\newtheorem{remark}[theorem]{Remark}

\def\LL{\mathsf{LL}}
\def\MLL{\mathsf{MLL}}
\def\SDill{\mathsf{DiLL}_0}
\def\DDI{\mathsf{DDI}}

\def\MELL{\mathsf{MELL}}
\def\SDDI{\mathsf{SDDI}}
\def\SDDIs{\SDDI}
\def\DILL{\mathsf{DiLL}}
\def\DiLL{\DILL}

\def\ie{\textit{i.e.}\xspace}
\def\resp{resp.\xspace}

\def\cS{\mathcal S}

\def\upfrag{\set{\aiur, \ocdur, \wndur, \ocwur, \allowbreak\wnwur, \allowbreak\occur, \allowbreak\wncur, \allowbreak\pur, \zur, \swir}}
\def\downfrag{\set{\aidr, \allowbreak\ocddr, \allowbreak\wnddr, \ocwdr, \wnwdr, \occdr, \wncdr, \pdr, \zdr, \swir}}

\def\SDDIup{\DDI^\uparrow} 
\def\SDDIdown{\DDI^\downarrow} 
\def\DDIdown{\SDDIdown} 
\def\DDIup{\SDDIup} 
\def\DMELL{\DDIdown_{-}} 

\def\isneg#1{{#1}^\bullet}

\def\lplus{\mathsf{+}}
\def\lone{{\mathsf{1}}}
\def\lbot{\mathsf{\bot}}
\def\lzero{\mathsf{0}}
\newcommand{\provesym}{\tikz[baseline=-.65ex]{\draw[very thick] (0,0)--(.4,0);\draw[thin] (0,-.1)--(0,.1);}}
\newcommand{\provevia}[1]{\mathrel{\stackrel{\scriptscriptstyle#1}{\provesym}}}

\def\inttrans#1{\mathsf{#1}}
\def\lones{\inttrans{n}}
\newcommand\loness[1]{\inttrans{#1}}

\def\IH{\mathsf{IH}}
\odbackgroundfalse
\odframetrue

\def\la{a}
\def\lb{b}
\def\lc{c}
\def\lA{A}
\def\lB{B}
\def\lC{C}
\def\lD{D}

\def\llone{\lone}
\def\llbot{\lbot}

\def\lplus{+}
\def\lpar{\parr}
\def\ltens{\otimes}

\def\axrule{\mathsf {ax}}
\def\cutr{\mathsf {cut}}
\def\cutrule{\cutr}

\def\wnwrule{\wn\mathsf{w}}
\def\ocwrule{\oc\mathsf{w}}
\def\wncrule{\wn\mathsf{c}}
\def\occrule{\oc\mathsf{c}}
\def\ocdrule{\oc\mathsf{d}}
\def\wndrule{\wn\mathsf{d}}
\def\ocprule{\oc\mathsf{p}}

\def\sumrule{\mathsf{sum}}

\def\zerorule{\mathsf{zero}}

\def\mixr{\mathsf{mix}}

\def\onerule{\llone}
\def\botrule{\llbot}

\def\prule{\mathsf{\oc p}}

\def\unaryrule{{\mathsf{r}_{1}}}
\def\binaryrule{{\mathsf{r}_{2}}}

\def\downr#1{{#1}^\downarrow}
\def\upr#1{{#1}^\uparrow}

\def\ruler{\mathsf{\rho}}
\def\rules{\mathsf \sigma}
\def\rulet{\mathsf \tau}

\def\aidr{\downr {\mathsf{ai}}}
\def\idr{\downr {\mathsf {i}}}

\def\swir{\mathsf {s}}

\def\wnwdr{\downr {\mathsf{\wn w}}}
\def\ocwdr{\downr {\mathsf{\oc w}}}

\def\wncdr{\downr {\mathsf{\wn c}}}
\def\occdr{\downr {\mathsf{\oc c}}}

\def\wnddr{\downr {\mathsf{\wn d}}}
\def\ocddr{\downr {\mathsf{\oc d}}}

\def\pdr{\downr {\lplus}}
\def\zdr{\downr {\lzero}}


\def\aiur{\upr {\mathsf{ai}}}
\def\iur{\upr {\mathsf{i}}}

\def\wnwur{\upr {\mathsf{\wn w}}}
\def\ocwur{\upr {\mathsf{\oc w}}}

\def\wncur{\upr {\mathsf{\wn c}}}
\def\occur{\upr {\mathsf{\oc c}}}

\def\wndur{\upr {\mathsf{\wn d}}}
\def\ocdur{\upr {\mathsf{\oc d}}}

\def\pur{\upr {\lplus}}
\def\zur{\upr {\lzero}}

\def\dD{\mathcal D}
\newcommand{\deriv}[4]{#1 \mathbin{\,\triangleright\,} #2 \provevia{#3} #4}


\def\etared{\rightsquigarrow_\eta}
\def\cutred{\rightsquigarrow_\cutr}
\def\normSym{\mathsf{norm}}
\def\normred{\rightsquigarrow_{\normSym}}
\def\normeq{\simeq_\normSym}

\def\translatesto{\overset{\toform{\,}}\to}
\def\toform#1{\llbracket#1\rrbracket}
\def\toformnew#1{\llparenthesis#1\rrparenthesis}
\def\translatesnewto{\overset{\toformnew{\,}}\to}




\tikzstyle{edgestyle}=[>=stealth,overlay,remember picture,thin, opacity=1]

\tikzstyle{edgestyle}=[>=stealth,overlay,remember picture,thin, opacity=1]  
\tikzstyle{dashedgestyle}=[>=stealth,overlay,remember picture,thin, opacity=1,dashed]



%


%

\title{A Deep Inference System for Differential Linear Logic}

\author{Matteo Acclavio 
	\institute{University of Luxembourg, Belval, Luxembourg}
	\and
	Giulio Guerrieri 
	\institute{University of Bath, Bath, United Kingdom}
}



\begin{document}

\maketitle

\begin{abstract}
Differential linear logic ($\mathsf{DiLL}$) provides a fine analysis of resource consumption in cut-elimination. 
We investigate the subsystem of $\mathsf{DiLL}$ without promotion 
in a deep inference formalism, where cuts are at an atomic level.
In our system every provable formula admits a derivation in normal form, via a normalization procedure that commutes with the translation from sequent calculus to deep inference. 

\end{abstract}

\section{Introduction}
\label{sect:introduction}

Girard \cite{Girard87} introduced linear logic ($\LL$) as a refinement of intuitionistic and classical logics, built around cut-elimination. 
In $\LL$, a pair of dual modalities (the \emph{exponentials} $\oc$ and $\wn$) give a logical status to the operations of erasing and copying (sub-)proofs in the cut-elimination procedure. 
The idea is that \emph{linear} proofs (\ie~proofs without exponentials) use their hypotheses exactly once,
whilst \emph{exponential} proofs may use their hypotheses at will. 
In particular, the \emph{promotion rule} makes a (sub-)proof available to be erased or copied an unbounded number of times, provided that its hypotheses are as well (it is a contextual rule).
Via Curry--Howard correspondence between programs and proofs, $\LL$ gives a logical status to the operations of erasing and copying data in the evaluation process. 
Linear proofs correspond to programs which call their arguments exactly once, exponential proofs to programs which call their arguments at will. 
The study of $\LL$ contributed to unveil the logical nature of resource consumption.

\paragraph{The importance of being differential.} 
A further tool for the analysis of resource consumption in cut-elimination
came from Ehrhard and Regnier's work on 
differential $\lambda$-calculus \cite{EhrhardRegnier03} and \emph{differential linear logic} ($\DiLL$, \cite{EhrhardRegnier06,Pagani09}). 
Despite the fact that $\DiLL$ is inconsistent (every sequent $\vdash \Gamma$ can be proved), it has a cut-elimination theorem \cite{Pagani09,Gimenez11} and internalizes notions from denotational models of $\LL$ into the syntax.
In particular, 
$\SDill$ (the \emph{promotion-free} fragment of $\DiLL$, \cite{EhrhardRegnier06}) is a logic 
corresponding to the semantic constructions defined by Ehrhard's finiteness spaces \cite{Ehrhard05}. 
Finiteness spaces interpret linear proofs as linear functions on certain topological vector spaces, on which one can define an operation of derivative.
Exponential proofs are interpreted as analytic functions, in the sense that they can be arbitrarily approximated by the semantic equivalent of a Taylor expansion \cite{Ehrhard05,Ehrhard18}, which becomes available thanks to the presence of a derivative operator.
In syntactic terms, 
these constructions take an interesting form: 
they correspond to ``symmetrizing'' the exponential modalities, \ie~in $\SDill$ 
the rules handling the dual exponential modalities $\oc$ and $\wn$ are perfectly symmetrical, although the logic is not self-dual.
%
Indeed, in $\LL$, only the promotion rule introduces the $\oc$ modality, creating inputs that can be called an unbounded number of times. 
In $\SDill$ the rules handling the $\oc$ modality ($\oc$-dereliction $\ocdrule$, $\oc$-contraction $\occrule$, $\oc$-weakening $\ocwrule$) are the duals of the usual rules dealing with the $\wn$ modality ($\wn$-dereliction $\wndrule$, $\wn$-contraction $\wncrule$, $\wn$-weakening $\wnwrule$). 
In particular, \emph{$\oc$-dereliction} expresses in the syntax the semantic derivative: it releases inputs of type $\oc \lA$ that must be called exactly \emph{once}, so that executing a program $f$ on a ``$\oc$-derelicted'' input $x$ (\ie performing cut-elimination on a proof $f$ cut with a ``$\oc$-derelicted'' sub-proof $x$) amounts to compute the best linear approximation of $f$ on $x$. 
This imposes non-deterministic choices: if in an evaluation the program $f$ 
needs several copies of the input $x$ (\ie~if the proof $f$ uses several times the hypothesis $\oc \lA$), then there are different executions of $f$ on $x$, depending on which sub-routine (\ie hypothesis) of $f$ is fed with the unique available copy of $x$. 
Thus we get a formal \emph{sum}, where each term represents a possibility.
The rules $\oc$-contraction and $\oc$-weakening put together a finite (possibly $0$) number of copies of an input, so that it can be called a \emph{bounded} number of times during execution. 

What is also interesting is that $\LL$ promotion rule  can be encoded in $\SDill$ through the notion of syntactic \emph{Taylor expansion} \cite{EhrhardRegnier06b,EhrhardRegnier08,MazzaPagani07,PaganiTasson09,GuerrieriPellissierTortora16,Carvalho18,GuerrieriPellissierTortora19,GuerrieriPellissierTortora20}:
a proof in $\LL$ can be decomposed into a possibly infinite set of (promotion-free) proofs in $\SDill$. 
Given a proof in $\LL$ with exactly one promotion rule $\ocprule$, the idea is to replace $\ocprule$ (which makes the resource $\pi$ available at will)
with an infinite set of ``differential'' proofs in $\SDill$, each of them taking $n \in \mathbb{N}$ copies of $\pi$ so as to make the resource $\pi$ available exactly $n$ times.
The potential infinity of the promotion rule becomes an actual infinite  via the Taylor expansion.

\paragraph{Nets vs. sequents.}
The system $\SDill$ is usually presented in two formalisms: sequent calculus and  Lafont's interaction nets \cite{Lafont90} (a graphical representation of proofs similar to $\LL$ proof-nets).
The \emph{symmetry} of the rules handling the dual exponentials 
$\oc$ and $\wn$ in $\SDill$ is evident in interaction nets, 
but not at all in the sequent calculus. 
In interaction nets for $\SDill$, the rules for $\wn$ and $\oc$ have the \emph{same~geometry}:

\vspace*{-2\baselineskip}
\begin{center}
\includegraphics{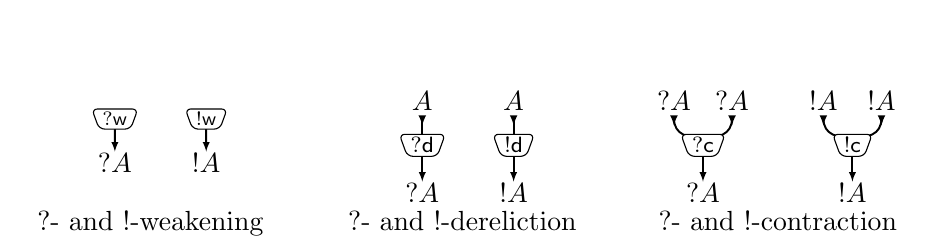}
\end{center}

So, the distinction between $\oc$ and $\wn$ is given only by their different behaviors in correctness graphs (a geometrical characterization of the interaction nets corresponding to proofs in the $\SDill$ sequent calculus).
But meaningful operations in $\SDill$ such as cut-elimination can be defined directly on interaction nets, regardless of being correct or not.
The benefit is that $\SDill$ cut-elimination steps defined on interaction nets are perfectly symmetric: for instance, the step for a cut $\wncrule$/$\ocdrule$ is exactly the \emph{dual} of the step for a cut $\occrule$/$\wndrule$, and similarly for the other steps (see \cite[Fig. 4]{Pagani09}).

This elegant symmetry in the presentation of cut-elimination steps is lost in $\SDill$ sequent calculus, see our \Cref{fig:cut-elim}.
Moreover, cut-elimination in $\SDill$ sequent calculus has to deal with (many) uninteresting commutative steps, while interaction nets get rid of them.
Thus, interaction nets allow one to express $\SDill$ cut-elimination with a sharper account than in sequent calculus. 
Not by chance, all papers dealing with $\SDill$ cut-elimination use only interaction nets \cite{EhrhardRegnier06,Pagani09,Tranquilli09,Gimenez11,Tranquilli11,PaganiTranquilli17}.

However, the interaction net presentation of $\SDill$ has some flaws that do not affect the sequent calculus:
interaction nets do not have an \emph{inductive tree-like structure} and so it is not easy to handle them. 
Moreover, not all 
interaction nets correspond to a derivation in $\SDill$ sequent calculus, a \emph{global} geometrical correctness criterion is required to identify them.

\paragraph{Our contribution.} 
We define a proof system for $\SDill$ in the formalism of \emph{open deduction}~\cite{gug:gun:par:2010} following the principles of \emph{deep inference} ~\cite{gug:str:01,brunnler:tiu:01,gug:SIS,ralph:phd,tub:str:esslli19}.
Such a formalism, which allows rules to be applied deep in a context, provides a more flexible composition of derivations and makes explicit the behavior of the cut-elimination process in $\SDill$ in a more fine-grained way, since it pushes cut-elimination at an \emph{atomic} level.
Besides, our deep inference system for $\SDill$ gathers good qualities of both sequent calculus and interaction nets formalisms: it restores the interaction net \emph{symmetries} lost in the sequent calculus and its derivations keep a handy \emph{inductive} tree-like (or better, sequence-like) structure as in the sequent calculus, without the need for a global correctness criterion like in~interaction~nets.

A first attempt in the direction of a deep inference system for $\SDill$ is in \cite{gim:phd} where, however, the $\sumrule$-rule is absent and, as a consequence, it is not suitable to represent the dynamic behavior of $\SDill$.

To fully recover the expressiveness of this logic, we design our system to include a binary connective $\lplus$ which represents the sum operation.
The rules for $\lplus$ (and for its unit $\lzero$) 
prevent the use of Guglielmi and Tubella's general result \cite{tubella:phd,TubellaGuglielmi18} to show cut-elimination.
However, we are able to define a 
normalization procedure by rule permutations which fully captures the dynamics of $\SDill$ cut-elimination, in a way similar to the one in 
\cite{str:phd,str:MELLinCoS}.
Our system is sound and complete with respect to $\SDill$ sequent calculus, through a translation that commutes with cut-elimination/normalization.

In the normalization procedure, we can classify our rule permutations depending on their behavior: some rule permutations correspond to 
multiplicative cut-elimination steps, 
other permutations correspond to ``resource management'' cut-elimination steps 
(involving the $\wn$ and $\oc$ rules),
other permutations correspond to ``slice management'' operations 
(involving the propagation of $\lplus$ and $\lzero$).


\section{Differential Linear Logic}
\label{sec:DiLL}
We present here the classical, propositional, one-sided sequent calculus for \emph{differential linear logic} without promotion ($\SDill$).
The formulas of $\SDill$ are exactly the same as in the multiplicative exponential fragment of linear logic ($\MELL$).
\emph{$\MELL$ formulas} are defined by the grammar below, where $\la, \lb, \lc, \dots$ range over a countably infinite set of propositional variables:
\begin{center}
	$\lA, \lB  \Coloneqq \la \mid \widecneg \la \mid \llone \mid \llbot \mid \lA\ltens \lB \mid \lA\lpar \lB \mid \oc \lA \mid \wn \lA $
\end{center}
Linear negation $\widecneg{(\cdot)}$ is defined through De Morgan laws so as to be involutive \mbox{($\widecneg{\widecneg{\lA}} = \lA$ for any $\lA$):}
\begin{center}
	$
	\widecneg{(\la)} = \widecneg{\la} \quad
	\widecneg{(\widecneg{\la})} = \la \quad 
	\widecneg{\lA \otimes \lB} = \widecneg{\lA} \parr \widecneg{\lB} \quad
	\widecneg{\lA \parr \lB} = \widecneg{\lA} \otimes \widecneg{\lB} \quad
	\widecneg{\llone} = \llbot \quad
	\widecneg{\llbot} = \llone \quad
	\widecneg{\oc \lA} = \wn \widecneg{\lA} \quad
	\widecneg{\wn \lA} = \oc \widecneg{\lA}
	$
\end{center}

Variables and their negations 
are \emph{atomic}; $\otimes, \parr$ are \emph{multiplicative connectives} and $\llone, \llbot$  are their respective \emph{units}; $\oc, \wn$ are \emph{exponential modalities}. 
A \emph{$\MELL$ sequent} is a (finite) multiset of $\MELL$ formulas $\lA_1, \dots, \lA_n$ (for any $n \geq 0$), and it is ranged over by~$\Gamma, \Delta, \Sigma$.

\begin{figure}[!t]
  \def\myskip{\hskip1em}
   \hbox to\textwidth{\hfil 
   	\scriptsize
    \begin{tabular}{c@{\myskip}c@{\myskip}c@{\myskip}c@{\myskip}c@{\myskip}c@{\myskip}c@{\myskip}c@{\myskip}|@{\myskip}c}
$\vlinf{}{\axrule}{\vdash \lA , \widecneg{\lA}}{}$
&
$\vlinf{}{\lpar}{\vdash \Gamma, \lA\lpar  \lB}{\vdash \Gamma, \lA, \lB}$
&
$\vliinf{}{\ltens}{\vdash \Gamma, \lA \ltens   \lB, \Delta}{\vdash \Gamma, \lA}{\vdash \lB, \Delta}$
&
$\vlinf{}{\onerule}{\vdash\llone}{}$
&
$\vlinf{}{\botrule}{\vdash \Gamma , \llbot}{\vdash \Gamma}$
&
$\vliinf{}{\cutr}{\vdash \Gamma, \Delta}{\vdash \Gamma, \lA}{\vdash \widecneg{\lA}, \Delta}$
&
&
\\
$\vlinf{}{\ocdrule}{\vdash \Gamma, \oc \lA}{\vdash \Gamma, \lA}$
&
$\vlinf{}{\wndrule}{\vdash \Gamma, \wn \lA}{\vdash \Gamma, \lA}$
&
$\vliinf{}{\occrule}{\vdash \Gamma,  \oc \lA,\Delta}{\vdash \Gamma, \oc\lA}{\vdash 	\oc \lA, \Delta}$
&
$\vlinf{}{\wncrule}{\vdash \Gamma, \wn \lA}{\vdash \Gamma, \wn \lA , \wn \lA}$
&
$\vlinf{}{\ocwrule}{\vdash \oc \lA}{}$
&
$\vlinf{}{\wnwrule}{\vdash \Gamma, \wn \lA}{\vdash \Gamma}$
&
$\vliinf{}{\sumrule}{\vdash \Gamma}{\vdash \Gamma}{\vdash \Gamma}$
&
$\vlinf{}{\zerorule}{\vdash \Gamma}{}$
&
$\vlinf{}{\prule}{\vdash \wn\Gamma,  \oc \lA}{\vdash \wn\Gamma, \lA}$
\end{tabular}    
          \hfil}  
	\caption{Sequent calculus rules for $\SDill$ (on the left) 
		and the promotion rule (on the right) \cite{Pagani09}.}
	\label{fig:scrules}
\end{figure}

\Cref{fig:scrules} gives the sequent calculus rules\footnotemark
\footnotetext{\label{note:co}Usually, in the literature on $\LL$ and $\DiLL$, the rules $\wnwrule$, $\wndrule$, $\wncrule$, $\ocwrule$, $\ocdrule$, $\occrule$ are called respectively weakening, dereliction, contraction, co-weakening, co-dereliction and co-contraction. 
	To avoid clashes with the usual terminology in deep inference (see \Cref{note:co-deep}), we call them $\wn$-weakening, $\wn$-dereliction, $\wn$-contraction, $\oc$-weakening, $\oc$-dereliction and $\oc$-contraction, respectively.}
for \emph{differential linear logic} $\SDill$ (without \emph{promotion} $\prule$);
	the rules on the first line correspond to the  \emph{multiplicative linear logic} fragment $\MLL$.
We set:
$$
\begin{array}{l@{\hskip2em}l@{\hskip2em}l}
	\MELL=\MLL\cup\set{\wnwrule, \wndrule, \wncrule,\prule}
 &
	&
	\SDill^-=\SDill\setminus\set{\zerorule, \sumrule}
\end{array}
$$

%
%

We define $\equiv$ as the least congruence on $\SDill$ derivations generated by the relations in \Cref{fig:zero}.
Roughly, the rule $\zerorule$ plays the role of annihilating element with respect to all the other rules but $\sumrule$, for which it is a neutral element; 
whilst the rule $\sumrule$ commutes with any rule below it.
Clearly, $\equiv$ preserves conclusions and can be oriented so as to define a terminating rewriting relation that pushes down the rules $\zerorule$ and $\sumrule$ in a derivation. 
As a consequence, every derivation in $\SDill$ can be rewritten in a \emph{canonical form} (with the same conclusion).

\begin{definition}[Canonical form, slice]
	\label{def:canonical}
	Let $\pi$ be a derivation in $\SDill$:
	\begin{enumerate}
		\item $\pi$ is a \emph{slice} if it is in $\SDill^-$ (\ie~the rules $\zerorule$ and $\sumrule$ do not occur in $\pi$);
		
		\item $\pi$ is \emph{canonical} or in \emph{canonical form} if 
		either it consists of a $\zerorule$ rule, or it is a slice, or if its last rule is $\sumrule$ 
		with a canonical form as left premise and a slice as right premise.
	\end{enumerate}
	
	A \emph{canonical form} of $\pi$ is any canonical derivation $\pi'$ in $\SDill$ such that $\pi \equiv \pi'$.
\end{definition}

\begin{fact}[Canonicity]
	\label{fact:canon}
	Any derivation in $\SDill$ has a canonical form (with the same conclusion).
\end{fact}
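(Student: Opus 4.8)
The plan is to read the paragraph preceding \Cref{def:canonical} as defining an oriented version of $\equiv$: let $\rightarrow$ be the rewriting relation obtained by orienting the generating relations of \Cref{fig:zero} so that each step pushes an occurrence of $\sumrule$ or $\zerorule$ strictly closer to the conclusion, namely absorbing a rule into a $\zerorule$-premise, erasing a $\zerorule$-premise of a $\sumrule$ by neutrality, or commuting a $\sumrule$ below the rule underneath it. Since $\rightarrow\ \subseteq\ \equiv$ and $\equiv$ preserves conclusions, every $\rightarrow$-reduct of a derivation $\pi$ is $\equiv$-equivalent to $\pi$ and has the same conclusion. By the termination of $\rightarrow$ recalled just above, $\pi$ admits at least one $\rightarrow$-normal form $\pi'$. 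It therefore suffices to prove that every $\rightarrow$-normal form is canonical in the sense of \Cref{def:canonical}; then $\pi'$ is the desired canonical form of $\pi$.

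To show that a normal form $\pi'$ is canonical, I would argue by induction on the height of $\pi'$, inspecting its bottommost rule $r$. If $r=\zerorule$, then $\pi'$ consists of a single $\zerorule$ and falls under the first case of \Cref{def:canonical}. If $r$ is neither $\sumrule$ nor $\zerorule$, then normality forbids any premise of $r$ from having $\sumrule$ or $\zerorule$ as its own bottom rule (the former would commute below $r$, the latter would be absorbed by $r$); so each premise is a smaller normal form with a non-$\sumrule$, non-$\zerorule$ bottom rule, which the induction hypothesis turns into a slice, whence $\pi'\in\SDill^-$ is itself a slice (second case). Finally, if $r=\sumrule$ with premises $\pi_L$ and $\pi_R$, both premises are normal forms, neither can be a $\zerorule$ (it would be erased by neutrality), and the induction hypothesis makes each of them canonical.

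The only genuinely delicate point is this last case, since $\sum(\pi_L,\pi_R)$ with $\pi_L,\pi_R$ canonical need not yet have the specific left-comb shape required by the third case of \Cref{def:canonical}: a push-down normal form only guarantees that all $\sumrule$ occurrences accumulate into a single binary tree at the bottom whose leaves are slices. When $\pi_R$ is already a slice we are done; otherwise $\pi_R=\sum(\pi_R',s)$ with $s$ a slice, and using the associativity of $\sumrule$ inside $\equiv$ I would rewrite $\sum(\pi_L,\sum(\pi_R',s))$ into $\sum(\sum(\pi_L,\pi_R'),s)$, whose right premise $s$ is a slice and whose left premise is handled by an inner induction on the size of the right subcomb. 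Finitely many such reassociations, performed within $\equiv$ and hence without changing the conclusion, reorganize the bottom $\sumrule$-tree into the canonical left comb. Combining termination (existence of $\pi'$), the normal-form characterization, and this reassociation yields a canonical derivation $\equiv$-equivalent to $\pi$ with the same conclusion. The hard part will be matching these moves exactly to the relations of \Cref{fig:zero}: in particular, checking that associativity of $\sumrule$ is available in $\equiv$, and that the commutation of $\sumrule$ past binary rules, which duplicates the other premise, remains consistent with the termination already granted.
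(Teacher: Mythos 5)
Your proof is correct and takes essentially the same approach as the paper, which justifies this Fact only by the preceding remark that $\equiv$ preserves conclusions and can be oriented into a terminating rewriting relation pushing $\zerorule$ and $\sumrule$ downwards, whose normal forms are canonical. Your write-up simply makes explicit what the paper leaves implicit---in particular that the push-down normal forms are arbitrary $\sumrule$-trees over slices and must still be reassociated (via the associativity generator of $\equiv$ in \Cref{fig:zero}) into the left-comb shape of \Cref{def:canonical}, and that termination has to cope with the duplication caused by commuting $\sumrule$ past binary rules---so it is, if anything, more careful than the paper's own argument.
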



Intuitively, considering only canonical derivations, slices---\ie~derivations in the subsystem $\SDill^-$---are the ``real and meaningful'' proofs in $\SDill$ (corresponding to simple nets in \cite{EhrhardRegnier06,MazzaPagani07,Pagani09,Tranquilli09}),
while the rules $\sumrule$ and $\zerorule$ are needed to define cut-elimination in $\SDill$, in particular they ensure 
that the conclusion of a derivation is preserved after a cut-elimination step (the subsystem $\SDill^-$ is not closed under cut-elimination, see below).
The rule $\sumrule$ puts together slices with the same conclusions $\vdash\!\Gamma$, similarly to multiset union: it expresses the possibility of several ``real proofs'' of $\vdash\!\Gamma$.
The rule $\zerorule$ then corresponds to the empty multiset of ``real proofs'' of $\vdash\!\Gamma$: it claims $\vdash \!\Gamma$ without a proof (it is reminiscent of daimon in  ludics \cite{Girard01}).
Because of 
it, any $\MELL$ sequent (also the empty one) is provable in~$\SDill$.

\begin{figure}[!t]
	\ebproofset{right label template=\tiny$\inserttext$, label separation=0.2em, separation = 1.2em}
	\scriptsize
	\begin{prooftree}[label separation = 0.1em]
		\infer0[\zerorule]{\vdash \Gamma}
		\infer1[\unaryrule]{\vdash \Delta}
	\end{prooftree}
	\!\!$\equiv$
	\begin{prooftree}[label separation = 0.1em]
		\infer0[\zerorule]{\vdash \Delta}
	\end{prooftree}	
	\qquad
	\begin{prooftree}[separation=0.5em, label separation = 0.1em]
		\infer0[\zerorule]{\vdash \Gamma}
		\hypo{\vdash \Delta}
		\infer2[\binaryrule]{\vdash \Sigma}
	\end{prooftree}
	$\equiv$
	\begin{prooftree}[separation=1em, label separation = 0.1em]
		\infer0[\zerorule]{\vdash \Sigma}
	\end{prooftree}
	$\equiv$
	\begin{prooftree}[separation=1em, label separation = 0.1em]
		\hypo{\vdash \Gamma}
		\infer0[\zerorule]{\vdash \Delta}
		\infer2[\binaryrule]{\vdash \Sigma}
	\end{prooftree}
	\quad \ \ 
	\begin{prooftree}[separation=0.5em, label separation = 0.1em]
		\infer0[\zerorule]{\vdash \Gamma}
		\hypo{\vdash \Gamma}
		\infer2[\sumrule]{\vdash \Gamma}
	\end{prooftree}
	$\equiv$
	\begin{prooftree}
		\hypo{\vdash \Gamma}
	\end{prooftree}	
	$\equiv$
	\begin{prooftree}[separation=1em, label separation = 0.1em]
		\hypo{\vdash \Gamma}
		\infer0[\zerorule]{\vdash \Gamma}
		\infer2[\sumrule]{\vdash \Gamma}
	\end{prooftree}
	\qquad
	\begin{prooftree}[separation=1em, label separation = 0.1em]
		\hypo{\vdash \Gamma}
		\hypo{\vdash \Gamma}
		\infer2[\sumrule]{\vdash \Gamma}
		\infer1[\unaryrule]{\vdash \Delta}
	\end{prooftree}
	\!\!$\equiv$
	\begin{prooftree}[separation=1em, label separation = 0.1em]
		\hypo{\vdash \Gamma}
		\infer1[\unaryrule]{\vdash \Delta}
		\hypo{\vdash \Gamma}
		\infer1[\unaryrule]{\vdash \Delta}
		\infer2[\sumrule]{\vdash \Delta}
	\end{prooftree}
	\\[5pt]
	\begin{prooftree}
		\hypo{\vdash \Gamma}
		\hypo{\vdash \Gamma}
		\infer2[\sumrule]{\vdash \Gamma}
		\hypo{\vdash \Delta}
		\infer2[\binaryrule]{\vdash \Sigma}
	\end{prooftree}
	$\equiv$
	\begin{prooftree}
		\hypo{\vdash \Gamma}
		\hypo{\vdash \Delta}
		\infer2[\binaryrule]{\vdash \Sigma}
		\hypo{\vdash \Gamma}
		\hypo{\vdash \Delta}
		\infer2[\binaryrule]{\vdash \Sigma}
		\infer2[\sumrule]{\vdash \Sigma}
	\end{prooftree}
	$\equiv$
	\begin{prooftree}[separation=1em]
		\hypo{\vdash \Gamma}
		\hypo{\vdash \Delta}
		\hypo{\vdash \Delta}
		\infer2[\sumrule]{\vdash \Delta}
		\infer2[\binaryrule]{\vdash \Sigma}
	\end{prooftree}
	\qquad
	\begin{prooftree}
		\hypo{\vdash \Gamma}
		\hypo{\vdash \Gamma}
		\hypo{\vdash \Gamma}
		\infer2[\sumrule]{\vdash \Gamma}
		\infer2[\sumrule]{\vdash \Gamma}
	\end{prooftree}
	\ $\equiv$ \
	\begin{prooftree}
		\hypo{\vdash \Gamma}
		\hypo{\vdash \Gamma}
		\infer2[\sumrule]{\vdash \Gamma}
		\hypo{\vdash \Gamma}
		\infer2[\sumrule]{\vdash \Gamma}
	\end{prooftree}
	\caption{The congruence $\equiv$ on derivations generated by the rules $\zerorule$ and $\sumrule$ in $\SDill$, where $\unaryrule$ is any unary rule in $\SDill$, and $\binaryrule$ is any binary rule in $\SDill$ but $\sumrule$.}
	\label{fig:zero}
\end{figure}

Let $\pi$ be a derivation in $\SDill$. 
We say that $\pi$ is \emph{with atomic axioms} (or \emph{$\eta$-expanded}) if every instance of the rule $\axrule$ introduces a $\MELL$ sequent of the form $\vdash \la, \widecneg{\la}$, where $\la$ is a propositional variable.
We say that $\pi$ is \emph{cut-free} is it does not contain any instance of the rule $\cutr$, \ie $\pi$ is a derivation in $\SDill \setminus \set{\cutr}$.

\begin{proposition}[Atomic axioms]
	\label{prop:atomic}
	For every derivation $\pi$ in $\SDill$ with conclusion $\vdash \Gamma$, there exists a $\eta$-expanded derivation $\pi'$ in $\SDill$ with conclusion $\vdash \Gamma$.
	If, moreover, $\pi$ is canonical (resp.~a slice) then $\pi'$ is canonical (resp.~a slice).
\end{proposition}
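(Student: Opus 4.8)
The plan is to reduce the statement to a single \emph{axiom-expansion} lemma and then argue by a purely local substitution at the leaves of $\pi$. Concretely, I would first prove the auxiliary claim: \emph{for every $\MELL$ formula $\lA$ there is a cut-free $\eta$-expanded slice $\eta_\lA$ with conclusion $\vdash \lA, \widecneg{\lA}$}. Granting this, the proposition follows by replacing, inside $\pi$, every instance of $\axrule$ whose conclusion $\vdash \lA, \widecneg{\lA}$ has a non-atomic $\lA$ by the corresponding subderivation $\eta_\lA$. Since each $\eta_\lA$ has the same conclusion as the axiom it replaces, the conclusion $\vdash \Gamma$ of the whole derivation is unchanged, and the resulting $\pi'$ contains only atomic axioms, i.e.\ it is $\eta$-expanded.

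I would prove the auxiliary claim by induction on the structure of $\lA$. The base cases $\lA = \la$ and $\lA = \widecneg{\la}$ are immediate, taking $\eta_\lA$ to be the atomic axiom itself. For the units, $\eta_{\llone}$ (and symmetrically $\eta_{\llbot}$) is obtained by applying $\botrule$ to an instance of $\onerule$, giving $\vdash \llone, \llbot$. For $\lA = \lB \ltens \lC$, with $\widecneg{\lA} = \widecneg{\lB} \lpar \widecneg{\lC}$, I would combine the derivations $\eta_\lB$ and $\eta_\lC$ given by the induction hypothesis by one application of $\ltens$ (yielding $\vdash \widecneg{\lB}, \lB \ltens \lC, \widecneg{\lC}$) followed by one application of $\lpar$. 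The crucial case is the exponential one: for $\lA = \oc \lB$, with $\widecneg{\lA} = \wn \widecneg{\lB}$, I would apply $\ocdrule$ and then $\wndrule$ to $\eta_\lB$, obtaining $\vdash \oc \lB, \wn \widecneg{\lB}$; the case $\lA = \wn \lB$ is perfectly symmetric, using $\wndrule$ and $\ocdrule$. This is exactly where the \emph{symmetry} of $\SDill$ is used: unlike in $\LL$, no promotion is needed, because the $\oc$-dereliction rule $\ocdrule$ plays for $\oc$ the role that $\wndrule$ plays for $\wn$. Note that each $\eta_\lA$ so built uses only the rules $\axrule, \lpar, \ltens, \onerule, \botrule, \ocdrule, \wndrule$, hence in particular neither $\sumrule$ nor $\zerorule$: it is indeed a slice.

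Finally I would check that the substitution preserves the two structural properties. Because each $\eta_\lA$ is a slice and is plugged in at a leaf (in place of an axiom), the substitution neither creates, destroys, nor moves any occurrence of $\sumrule$ or $\zerorule$; it only enlarges the ``slice parts'' of $\pi$. Hence if $\pi$ is a slice then so is $\pi'$, and more generally the $\sumrule$/$\zerorule$ skeleton of $\pi$ is left untouched, so by the inductive shape of \Cref{def:canonical} (a single $\zerorule$, a slice, or a $\sumrule$ with canonical left and slice right premise) canonicity is preserved as well. I expect the only genuinely delicate point to be the bookkeeping in the $\ltens$ case, namely routing the two dual subformulas into the correct premises so that a single $\lpar$ recombines them (made easy here since sequents are multisets), together with the explicit verification that the expansion rules are confined to slices; both are routine once the symmetric exponential case is set up as above.
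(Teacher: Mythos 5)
Your proof is correct and takes essentially the same route as the paper: the paper defines a terminating one-step $\eta$-expansion rewriting $\etared$ on non-atomic instances of $\axrule$, whose three cases ($\ltens/\lpar$, the units, and the exponentials handled via $\ocdrule$/$\wndrule$ rather than promotion) are exactly the cases of your induction on $\lA$, iterated to termination. Your explicit verification that the expansions use neither $\sumrule$, $\zerorule$ nor $\cutr$ (hence preserve slices and canonical forms) is the same observation the paper leaves implicit.
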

\begin{proof}
	Rewrite any non-atomic instance of the rule $\axrule$ according to the \emph{$\eta$-expansion} relation $\etared$ below:
	
	\vspace{-\baselineskip}
\begin{equation}\label{eq:etaexp}
	\ebproofset{right label template=\tiny$\inserttext$, label separation=0.3em, separation = 1.2em}
	\footnotesize
		{\begin{prooftree}
				\infer0[\axrule]{\vdash \lA \ltens \lB, \widecneg{\lA} \lpar \widecneg{\lB}}
		\end{prooftree}}
		\etared
		{\begin{prooftree}
			\infer0[\axrule]{\vdash \lA, \widecneg{\lA}}
			\infer0[\axrule]{\vdash \lB, \widecneg{\lB}}
			\infer2[\ltens]{\vdash \lA \ltens \lB, \widecneg{\lA}, \widecneg{\lB}}
			\infer1[\lpar]{\vdash \lA \ltens \lB, \widecneg{\lA} \lpar \widecneg{\lB}}
		\end{prooftree}}
		\qquad		\qquad
		{\begin{prooftree}
			\infer0[\axrule]{\vdash \lone, \lbot}
		\end{prooftree}}
		\etared~
		{\begin{prooftree}
			\infer0[\onerule]{\vdash \lone}
			\infer1[\lbot]{\vdash \lone, \lbot}
		\end{prooftree}}
		\qquad		\qquad
		{\begin{prooftree}
			\infer0[\axrule]{\vdash \oc \lA, \wn \widecneg{\lA}}
		\end{prooftree}}
				\etared~
		{\begin{prooftree}
			\infer0[\axrule]{\vdash \lA, \widecneg{\lA}}
			\infer1[\wndrule]{\vdash \lA, \wn \widecneg{\lA}}
			\infer1[\ocdrule]{\vdash \oc \lA, \wn \widecneg{\lA}}
		\end{prooftree}}
\end{equation}
	It is immediate to prove that the relation $\etared$ on the derivations of $\SDill$ is terminating.
\end{proof}


\begin{figure}[!ht]
	\ebproofset{right label template=\tiny$\inserttext$, label separation=0.2em, separation = 1.2em}
	\scriptsize
	\begin{prooftree}
		\hypo{\vdash \Gamma, \lA, \lB}
		\infer1[\parr]{\vdash \Gamma, \lA \parr \lB}
		\hypo{\vdash \Delta, \widecneg{\lA}}
		\hypo{\vdash \Sigma, \widecneg{\lB}}
		\infer2[\otimes]{\vdash \Delta, \Sigma, \widecneg{\lA} \otimes \widecneg{\lB}}
		\infer2[\cutr]{\vdash \Gamma, \Delta, \Sigma}
	\end{prooftree}
	$\cutred$~
	\begin{prooftree}
		\hypo{\vdash \Gamma, \lA, \lB}
		\hypo{\vdash \Delta, \widecneg{\lA}}
		\infer2[\cutr]{\vdash \Gamma, \Delta, \lB}
		\hypo{\vdash \Sigma, \widecneg{\lB}}
		\infer2[\cutr]{\vdash \Gamma, \Delta, \Sigma}
	\end{prooftree}
	\qquad
	\begin{prooftree}
		\hypo{\vdash \Gamma}
		\infer1[\bot]{\vdash \Gamma, \bot}
		\infer0[1]{1}
		\infer2[\cutr]{\vdash \Gamma}
	\end{prooftree}
	$\cutred$~
	\begin{prooftree}
		\hypo{\vdash \Gamma}
	\end{prooftree}
	\\[5pt]
	\begin{adjustbox}{max width = \textwidth}
		\begin{prooftree}
			\infer0[\axrule]{\vdash \lA, \widecneg{\lA}}
			\hypo{\vdash \Gamma, \lA}
			\infer2[\cutrule]{\vdash \Gamma, \lA~}
		\end{prooftree}
		$\cutred$
		\begin{prooftree}
			\hypo{\vdash \Gamma, \lA}
		\end{prooftree}
		\qquad
		\begin{prooftree}
			\hypo{\vdash \Gamma}
			\infer1[\wnwrule]{\vdash \Gamma, \wn\lA}
			\infer0[\ocwrule]{\vdash \oc \widecneg{\lA}}
			\infer2[\cutr]{\vdash \Gamma}
		\end{prooftree}
		$\cutred$
		\begin{prooftree}
			\hypo{\vdash \Gamma}
		\end{prooftree}
		\qquad
		\begin{prooftree}
			\hypo{\vdash \Gamma, \lA}
			\infer1[\wndrule]{\vdash \Gamma, \wn \lA}
			\hypo{\vdash \Delta, \widecneg{\lA}}
			\infer1[\ocdrule]{\vdash \Delta, \oc\widecneg{\lA}}
			\infer2[\cutr]{\vdash \Gamma, \Delta}
		\end{prooftree}
		$\cutred$
		\begin{prooftree}
			\hypo{\vdash \Gamma, \lA}
			\hypo{\vdash \Delta, \widecneg{\lA}}
			\infer2[\cutr]{\vdash \Gamma, \Delta}
		\end{prooftree}
	\end{adjustbox}
	\\[5pt]
	\begin{adjustbox}{max width = 1.0\textwidth}
		\begin{prooftree}
			\hypo{\vdash \Gamma}
			\infer1[\wnwrule]{\vdash \Gamma, \wn \lA}
			\hypo{\vdash \Delta, \oc\widecneg{\lA}}
			\hypo{\vdash \Sigma, \oc\widecneg{\lA}}
			\infer2[\occrule]{\vdash \Delta, \Sigma, \oc\widecneg{\lA}}
			\infer2[\cutr]{\vdash\Gamma, \Delta, \Sigma}
		\end{prooftree}
		$\cutred$
		\begin{prooftree}[separation=0.8em, label separation =0.2em]
			\hypo{\vdash \Gamma}
			\infer1[\wnwrule]{\vdash \Gamma, \wn \lA}
			\hypo{\vdash \Delta, \oc\widecneg{\lA}}
			\infer2[\cutr]{\vdash \Gamma, \Delta}
			\infer1[\wnwrule]{\vdash \Gamma, \Delta, \wn\lA}
			\hypo{\vdash \Sigma, \oc\widecneg{\lA}}
			\infer[separation=0.3em]2[\cutr]{\vdash \Gamma, \Delta, \Sigma}
		\end{prooftree}
		\qquad
		\mbox{
			\begin{prooftree}
				\hypo{\vdash \Gamma, \wn\lA, \wn\lA}
				\infer1[\wncrule]{\vdash \Gamma, \wn \lA}
				\infer0[\ocwrule]{\vdash \oc\widecneg{\lA}}
				\infer2[\cutr]{\vdash \Gamma}
			\end{prooftree}
			$\cutred$~
			\begin{prooftree}
				\hypo{\vdash \Gamma, \wn\lA, \wn\lA}
				\infer0[\ocwrule]{\vdash \oc\widecneg{\lA}}
				\infer2[\cutr]{\vdash \Gamma, \wn\lA}
				\infer0[\ocwrule]{\vdash \oc\widecneg{\lA}}
				\infer2[\cutr]{\vdash \Gamma}
			\end{prooftree}
		}
	\end{adjustbox}
	\\[5pt]
	\begin{prooftree}
		\hypo{\vdash \Gamma, \lA}
		\infer1[\wndrule]{\vdash \Gamma, \wn \lA}
		\infer0[\ocwrule]{\vdash \oc\widecneg{\lA}}
		\infer2[\cutr]{\vdash\Gamma}
	\end{prooftree}
	$\cutred$~
	\begin{prooftree}
		\infer0[\zerorule]{\vdash\Gamma}
	\end{prooftree}
	\qquad
	\begin{prooftree}
		\hypo{\vdash \Gamma}
		\infer1[\wnwrule]{\vdash \Gamma, \wn \lA}
		\hypo{\vdash \Delta, \widecneg{\lA}}
		\infer1[\ocdrule]{\vdash \Delta, \oc\widecneg{\lA}}
		\infer2[\cutr]{\vdash\Gamma, \Delta}
	\end{prooftree}
	$\cutred$~
	\begin{prooftree}
		\infer0[\zerorule]{\vdash\Gamma, \Delta}
	\end{prooftree}
	\\[5pt]
	\begin{prooftree}
		\hypo{\vdash \Gamma, \lA}
		\infer1[\wndrule]{\vdash \Gamma, \wn \lA}
		\hypo{\vdash \Delta, \bclr{\oc\widecneg{\lA}}}
		\hypo{\vdash \Sigma, \rclr{\oc\widecneg{\lA}}}
		\infer2[\occrule]{\vdash \Delta, \Sigma, \oc\widecneg{\lA}}
		\infer2[\cutr]{\vdash\Gamma, \Delta, \Sigma}
	\end{prooftree}
	$\cutred$
	\begin{prooftree}[separation=0.8em, label separation =0.2em]
		\hypo{\vdash \Gamma, \lA}
		\infer1[\wndrule]{\vdash \Gamma, \bclr{\wn \lA}}
		\hypo{\vdash \Delta, \bclr{\oc\widecneg{\lA}}}
		\infer2[\cutr]{\vdash \Gamma, \Delta}
		\infer1[\wnwrule]{\vdash \Gamma, \Delta, \rclr{\wn\lA}}
		\hypo{\vdash \Sigma, \rclr{\oc\widecneg{\lA}}}
		\infer2[\cutr]{\vdash \Gamma, \Delta, \Sigma}
		
		\hypo{\vdash \Gamma, \lA}
		\infer1[\wndrule]{\vdash \Gamma, \rclr{\wn \lA}}
		\hypo{\vdash \Sigma, \rclr{\oc\widecneg{\lA}}}
		\infer2[\cutr]{\vdash \Gamma, \Sigma}
		\infer1[\wnwrule]{\vdash \Gamma, \Sigma, \bclr{\wn\lA}}
		\hypo{\vdash \Delta, \bclr{\oc\widecneg{\lA}}}
		\infer2[\cutr]{\vdash \Gamma, \Delta, \Sigma}
		
		\infer2[\sumrule]{\vdash \Gamma, \Delta, \Sigma}
	\end{prooftree}
	\\[5pt]
	\begin{prooftree}
		\hypo{\vdash \Gamma, \rclr{\wn\lA}, \bclr{\wn\lA}}
		\infer1[\wncrule]{\vdash \Gamma, \wn \lA}
		\hypo{\vdash \Delta, \widecneg{\lA}}
		\infer1[\ocdrule]{\vdash \Delta,\oc\widecneg{\lA}}
		\infer2[\cutr]{\vdash \Gamma, \Delta}
	\end{prooftree}
	$\cutred$
	\begin{prooftree}
		\hypo{\vdash \Gamma, \rclr{\wn\lA}, \bclr{\wn\lA}}
		\hypo{\vdash \Delta, \widecneg{\lA}}
		\infer1[\ocdrule]{\vdash \Delta, \bclr{\oc\widecneg{\lA}}}
		\infer2[\cutr]{\vdash \Gamma, \Delta, \rclr{\wn \lA}}
		\infer0[\ocwrule]{\vdash \rclr{\oc\widecneg{\lA}}}
		\infer2[\cutr]{\vdash \Gamma, \Delta}

		\hypo{\vdash \Gamma, \rclr{\wn\lA}, \bclr{\wn\lA}}
		\infer0[\ocwrule]{\vdash \bclr{\oc\widecneg{\lA}}}
		\infer2[\cutr]{\vdash \Gamma, \rclr{\wn\lA}}
		\hypo{\vdash \Delta, \widecneg{\lA}}
		\infer1[\ocdrule]{\vdash \Delta, \rclr{\oc\widecneg{\lA}}}
		\infer2[\cutr]{\vdash \Gamma, \Delta}
		
		\infer2[\sumrule]{\vdash \Gamma, \Delta}
	\end{prooftree}
	\\[5pt]
	\begin{adjustbox}{max width = \textwidth}
		\begin{prooftree}
			\hypo{\vdash \Gamma, \rclr{\wn\lA}, \bclr{\wn\lA}}
			\infer1[\wncrule]{\vdash \Gamma, \wn \lA}
			\hypo{\vdash \Delta, {\oc\widecneg{\lA}}}
			\hypo{\vdash \Sigma, {\oc\widecneg{\lA}}}	
			\infer2[\occrule]{\vdash \Delta, \Sigma, \oc\widecneg{\lA}}
			\infer2[\cutr]{\vdash \Gamma, \Delta, \Sigma}
		\end{prooftree}
		$\cutred$
		\begin{prooftree}
			\hypo{\vdash \Gamma, \rclr{\wn\lA}, \bclr{\wn\lA}}
			\hypo{~}
			\infer1[\axrule]{\vdash \bclr{\wn \lA}, \bclr{\oc\widecneg{\lA}}}
			\hypo{~}
			\infer1[\axrule]{\vdash \bclr{\wn \lA}, \bclr{\oc\widecneg{\lA}}}
			\infer2[\occrule]{\vdash \bclr{\wn \lA}, \bclr{\wn \lA}, \bclr{\oc\widecneg{\lA}}}
			\infer2[\cutr]{\vdash \Gamma, \rclr{\wn \lA}, \bclr{\wn \lA}, \bclr{\wn \lA}}
			\hypo{~}
			\infer1[\axrule]{\vdash \rclr{\wn \lA}, \rclr{\oc\widecneg{\lA}}}
			\hypo{~}
			\infer1[\axrule]{\vdash \rclr{\wn \lA}, \rclr{\oc\widecneg{\lA}}}
			\infer2[\occrule]{\vdash \rclr{\wn \lA}, \rclr{\wn \lA}, \rclr{\oc\widecneg{\lA}}}
			\infer2[\cutr]{\vdash \Gamma, \bclr{\wn \lA}, \bclr{\wn \lA}, \rclr{\wn \lA}, \rclr{\wn \lA}}
			\infer1[\wncrule]{\vdash \Gamma, \bclr{\wn \lA}, {\wn \lA}, \rclr{\wn \lA}}
			\hypo{\vdash \Delta, {\oc\widecneg{\lA}}}
			\infer2[\cutr]{\vdash \Gamma, \Delta, \bclr{\wn \lA}, \rclr{\wn \lA}}
			\infer1[\wncrule]{\vdash \Gamma, \Delta, {\wn \lA}}
			\hypo{\vdash \Sigma, {\oc\widecneg{\lA}}}
			\infer2[\cutr]{\vdash \Gamma, \Delta, \Sigma}
		\end{prooftree}
	\end{adjustbox}
	\caption{Key cases of cut-elimination rewriting rules for $\SDill$ sequent calculus (colors highlight $\cutr$-relations between formula occurrences).
%
}
	\label{fig:cut-elim}
\end{figure}


\paragraph{Cut-elimination.}
Despite its incoherence, $\SDill$ provides a fine analysis of resource consumption in cut-elimination.
Rewriting rules $\cutred$ for cut-elimination in $\SDill$ sequent calculus are defined in 
\Cref{fig:cut-elim}.
They are just the formulation in the sequent calculus formalism of the cut-elimination steps defined and studied in \cite{EhrhardRegnier06,Tranquilli09,Gimenez11} and \cite[Fig. 4]{Pagani09} within the interaction nets formalism.
We represent in \Cref{fig:cut-elim} only the \emph{key cases}, where the principal connectives in the cut formulas are dual (the pairs of dual connectives are $\ltens/\lpar$, $\lone/\lbot$, $\oc/\wn$). 
The way $\SDill$ deals with the \emph{commutative cases} is omitted since is analogous to usual sequent calculi.
With these cut-elimination steps it has been proved in \cite{EhrhardRegnier06,Pagani09,Gimenez11} that the rule $\cutrule$ is admissible in $\SDill$ (and even in $\DiLL$, \ie, the system $\SDill$ plus $\MELL$ promotion~$\prule$).

\begin{theorem}[Cut-elimination, \cite{EhrhardRegnier06,Pagani09,Gimenez11}]\label{thm:cutElimSequent}
	For every derivation $\pi$ in $\SDill$ with conclusion $\vdash \Gamma$, there exists a cut-free derivation $\pi'$ in $\SDill$ with conclusion $\vdash \Gamma$ such that 
	$\pi \cutred^* \pi'$.
\end{theorem}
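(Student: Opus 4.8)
The plan is to prove this by the standard Gentzen-style argument, adapted to the resource-sensitive exponential rules of $\SDill$. First I would orient the rewriting rules of \Cref{fig:cut-elim} from left to right as the \emph{key} (principal) reduction steps, and add the usual \emph{commutative} steps that permute a $\cutr$ upwards past a rule acting on a non-cut formula; these commutative steps are routine and strictly push each $\cutr$ towards the leaves. The goal is then to show that $\cutred$ is terminating (weak normalization suffices for the statement), so that every $\cutr$-normal form is cut-free, and that reductions preserve the conclusion $\vdash\Gamma$ — the latter being exactly why the rules $\sumrule$ and $\zerorule$ are needed (see the dereliction/co-weakening and dereliction/co-contraction cases, which respectively produce a $\zerorule$ and a $\sumrule$ whose two summands both still conclude $\vdash\Gamma$).

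For termination I would use a lexicographic measure on derivations. The primary component is the maximal size $\sizeof{\lA}$ of a cut formula occurring in $\pi$ (the \emph{cut-rank}); the secondary component, computed for a topmost cut of maximal rank, records the heights of its two immediate subderivations. In the multiplicative key cases ($\ltens/\lpar$, $\lone/\lbot$) and in the axiom case, the cut on $\lA$ is replaced by cuts on strict subformulas (or disappears), so the cut-rank drops. In the simple exponential cases $\wnwrule/\ocwrule$ and $\wndrule/\ocdrule$ the cut either vanishes or is replaced by a cut on $\lA$ in place of $\wn\lA/\oc\widecneg{\lA}$, again lowering the rank. A standard double induction (on the cut-rank, then on the height of a topmost maximal cut) thus handles all these cases: commute the chosen cut to the leaves, then fire the appropriate key step.

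The resource-management cases are the delicate ones, because the cut formula $\wn\lA/\oc\widecneg{\lA}$ does not shrink. In the dereliction/co-contraction step $\wndrule/\occrule$ (and dually $\wncrule/\ocdrule$), the single cut is replaced by a $\sumrule$ of two derivations, each containing cuts on the same formula $\wn\lA$; crucially, however, every new cut meets an immediate subderivation of the original $\occrule$ (respectively $\wncrule$), so the secondary, height-based component — read as a multiset of heights ordered by the multiset extension — strictly decreases while the cut-rank is unchanged. The instances of $\sumrule$ and $\zerorule$ produced here are absorbed by working modulo the congruence $\equiv$ and, when convenient, by passing to a canonical form via \Cref{fact:canon}, so that slices can be treated independently.

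The main obstacle is the contraction/co-contraction case $\wncrule/\occrule$ (the bottom rule of \Cref{fig:cut-elim}). Here the reduct introduces fresh axioms $\vdash \wn\lA, \oc\widecneg{\lA}$ together with several new instances of $\occrule$, $\wncrule$ and $\cutr$, so neither the cut-rank nor a naive height measure decreases; contraction and co-contraction interact like a bialgebra, and the termination of their rewriting is exactly the hard combinatorial core. I expect to handle it with a refined measure — for instance a suitably weighted count of the exponential rules lying above a maximal cut, as in the interaction-net analyses of \cite{Pagani09,Gimenez11} — exploiting that promotion is absent: unlike in $\MELL$, no reduction duplicates an entire subderivation inside an exponential box, which is precisely what keeps termination tractable. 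Alternatively, and most economically, I would transfer the result from the interaction-net formalism, where it is established in \cite{EhrhardRegnier06,Pagani09,Gimenez11}, by translating $\SDill$ sequent derivations to nets, applying net cut-elimination, and reading back a cut-free sequent derivation with the same conclusion.
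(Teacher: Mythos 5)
The paper does not prove this theorem at all: it imports it wholesale from the literature (the citations \cite{EhrhardRegnier06,Pagani09,Gimenez11} are part of the theorem's header), remarking only that the steps of \Cref{fig:cut-elim} are the sequent-calculus rendering of the cut-elimination steps defined, and proved terminating, on interaction nets in those works. So the only part of your proposal that coincides with the paper is your closing fallback; everything before it is an attempt to supply a proof the paper never gives, and that is where the problems lie.

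Two gaps are genuine. First, your height-multiset argument for the $\wndrule/\occrule$ (and dually $\wncrule/\ocdrule$) case fails: in the reduct of \Cref{fig:cut-elim} the two residual cuts are \emph{nested}, not parallel --- the left premise of the second cut contains the first cut followed by a $\wnwrule$ --- so the measure attached to the second residual cut can strictly exceed that of the original cut, and the multiset extension of the height order does not decrease. (This is the same phenomenon that forces Gentzen's argument to introduce a mix/hyper-cut device in the contraction case; you would need such a device, or a genuinely different measure, here as well.) Second, you leave the $\wncrule/\occrule$ case open, and in $\SDill$ that bialgebra case is not a loose end but the entire difficulty; ``I expect a refined measure exists'' is not a proof, and no measure of the kind you describe is known to work at the sequent level, which is precisely why the cited works operate on nets. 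Finally, your ``most economical'' route proves less than the statement: translating to nets, normalizing there, and reading back yields a cut-free derivation of $\vdash \Gamma$ (i.e.\ admissibility of $\cutr$), but not the required reduction $\pi \cutred^* \pi'$ for the specific rewriting of \Cref{fig:cut-elim}. To obtain that, you still need a simulation argument relating $\cutred$ to net reduction (key steps project to net steps, commutative steps project to identities and terminate on their own, net reduction is strongly normalizing); this is implicit in the works the paper cites but absent from your sketch.
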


Cut-elimination preserves atomic axioms: if $\pi \cutred \pi'$ and $\pi$ is $\eta$-expanded, then $\pi'$ is $\eta$-expanded.
Note that if $\pi \cutred \pi'$ with $\pi$ canonical then $\pi'$ is not necessarily canonical (\textit{e.g.} if in $\pi$ a cut $\wncrule/\ocdrule$ or $\wndrule/\ocwrule$ is above another rule), but $\pi'$ can be rewritten in a canonical form (see \Cref{fact:canon} above).
Indeed, $\SDill^-$ is not closed under cut-elimination: steps $\wncrule/\ocdrule$ or $\wndrule/\ocwrule$ create instances of the rule $\sumrule$ or $\zerorule$.

To explain the importance of the rules $\sumrule$ and $\zerorule$ as \emph{resource management}, we give an informal account of the cut-elimination steps in \Cref{fig:cut-elim} for the key cases involving $\oc/\wn$.
Roughly, they follow the ``law of supply and demand'' so as to be \emph{resource-sensitive}: in each slice no duplication or erasure is allowed.
The rules for $\wn$ ($\wnwrule$, $\wndrule$, $\wncrule$) \emph{ask for} a number of resources of type $\oc \lA$ ($0$, $1$, and the sum of the numbers asked by its premises, respectively), while the rules for $\oc$ ($\ocwrule$, $\ocdrule$, $\occrule$) \emph{supply} a number of resources of type $\oc \lA$ ($0$, $1$, and the sum of the numbers supplied by its premises, respectively). 
Cases:
\begin{enumerate}
	\item If the numbers of demanded and supplied resources match, the cut-elimination proceeds normally (see the steps $\wndrule/\ocdrule$ and $\wnwrule/\ocwrule$). 
	
	\item The step $\wncrule/\occrule$ is slightly more complex: 
	intuitively, it connects the dual premises of a $\wn$-contraction and of a $\oc$-contraction in all possible ways.
	
	\item The step $\wncrule/\ocwrule$ duplicates the rule $\ocwrule$, spreading the information that there are no available resources to the premises of $\wncrule$. 
	
	\item The step $\wndrule/\ocwrule$ represents a \emph{mismatch} in supply and demand: $\wn$-dereliction asks for a resource but $\oc$-weakening says that it is not available;
	the rule $\zerorule$ in the resulting derivation keeps track of this mismatch, as a sort of error in computation, and ensures that the conclusion is preserved.
	
	\item In the step $\wncrule/\ocdrule$, $\wn$-contraction represents 
	two possible demands for a resource, but according to  $\oc$-dereliction only one resource is available, so there is a \emph{non-deterministic choice} 
	on which request will be fed, the other one will receive a $\oc$-weakening; the rule $\sumrule$ has to be intended as a way to keep track of all possible choices, \emph{not} as a way to \emph{duplicate resources}; said differently, in the step $\wncrule/\ocdrule$ a derivation reduces to a pair of derivations (of slices, if we consider their canonical forms).
\end{enumerate}
By duality, the discussion above about resource management is similar for the steps $\wnwrule/\occrule$, $\wnwrule/\ocdrule$ and $\wndrule/\occrule$, respectively.
\Cref{fig:ex:cutelim1} provides an example of the cut-elimination procedure in $\SDill$.

\begin{figure}[!t]
\begin{center}
	\footnotesize
	\ebproofset{right label template=\tiny$\inserttext$, label separation=0.3em, separation = 1.2em}
	{\begin{prooftree}
			\hypo{}
			\ellipsis{$\pi_1$}{\vdash \lA, \lA}
			\infer1[\wndrule]{\vdash \lA, \wn \lA}
			\infer1[\wndrule]{\vdash \wn \lA, \wn \lA}
			\infer1[\wncrule]{\vdash \wn \lA}
			\hypo{}
			\ellipsis{$\pi$}{\vdash \Gamma,  \widecneg{\lA}}
			\infer1[\ocdrule]{\vdash \Gamma, \oc \widecneg{\lA}}
			\infer2[\cutrule]{\vdash \Gamma}
	\end{prooftree}}
	$\cutred~$
	{\begin{prooftree}[separation=1.2em, label separation=0.2em]
			\hypo{}
			\ellipsis{$\pi_1$}{\vdash \lA, \lA}
			\infer1[\wndrule]{\vdash \lA, \wn \lA}
			\infer1[\wndrule]{\vdash \wn \lA, \wn \lA}
			\hypo{}
			\ellipsis{$\pi$}{\vdash \Gamma, \widecneg{\lA}}
			\infer1[\ocdrule]{\vdash \Gamma, \oc\widecneg{\lA}}
			\infer2[\cutr]{\vdash \Gamma, \wn \lA}
			\infer0[\ocwrule]{\oc \widecneg{\lA}}
			\infer2[\cutr]{\vdash \Gamma}
			\hypo{}
			\ellipsis{$\pi_1$}{\vdash \lA, \lA}
			\infer1[\wndrule]{\vdash \lA, \wn \lA}
			\infer1[\wndrule]{\vdash \wn \lA, \wn \lA}
			\infer0[\ocwrule]{\oc \widecneg{\lA}}
			\infer2[\cutr]{\vdash \wn \lA}
			\hypo{}
			\ellipsis{$\pi$}{\vdash \Gamma, \widecneg{\lA}}
			\infer1[\ocdrule]{\vdash \Gamma, \oc\widecneg{\lA}}
			\infer2[\cutr]{\vdash \Gamma}
			\infer2[\sumrule]{\vdash \Gamma}
	\end{prooftree}}
	$\cutred$
	{\begin{prooftree}[separation=1.2em, label separation=0.2em]
			\hypo{}
			\ellipsis{$\pi_1$}{\vdash \lA, \lA}
			\infer1[\wndrule]{\vdash \lA, \wn \lA}
			\infer1[\wndrule]{\vdash \wn \lA, \wn \lA}
			\hypo{}
			\ellipsis{$\pi$}{\vdash \Gamma, \widecneg{\lA}}
			\infer1[\ocdrule]{\vdash \Gamma, \oc\widecneg{\lA}}
			\infer2[\cutr]{\vdash \Gamma, \wn \lA}
			\infer0[\ocwrule]{\oc \widecneg{\lA}}
			\infer2[\cutr]{\vdash \Gamma}
			\infer0[\zerorule]{\vdash \wn \lA}
			\hypo{}
			\ellipsis{$\pi$}{\vdash \Gamma, \widecneg{\lA}}
			\infer1[\ocdrule]{\vdash \Gamma, \oc\widecneg{\lA}}
			\infer2[\cutr]{\vdash \Gamma}
			\infer2[\sumrule]{\vdash \Gamma}
	\end{prooftree}}
	$\equiv~$
	{\begin{prooftree}[separation=1.2em, label separation=0.2em]
			\hypo{}
			\ellipsis{$\pi_1$}{\vdash \lA, \lA}
			\infer1[\wndrule]{\vdash \lA, \wn \lA}
			\infer1[\wndrule]{\vdash \wn \lA, \wn \lA}
			\hypo{}
			\ellipsis{$\pi$}{\vdash \Gamma, \widecneg{\lA}}
			\infer1[\ocdrule]{\vdash \Gamma, \oc\widecneg{\lA}}
			\infer2[\cutr]{\vdash \Gamma, \wn \lA}
			\infer0[\ocwrule]{\oc \widecneg{\lA}}
			\infer2[\cutr]{\vdash \Gamma}
	\end{prooftree}}
	$\cutred~$
	{\begin{prooftree}[separation=1.2em, label separation=0.2em]
			\hypo{}
			\ellipsis{$\pi_1$}{\vdash \lA, \lA}
			\infer1[\wndrule]{\vdash \lA, \wn \lA}
			\hypo{}
			\ellipsis{$\pi$}{\vdash \Gamma, \widecneg{\lA}}
			\infer2[\cutr]{\vdash \Gamma, \wn \lA}
			\infer0[\ocwrule]{\oc \widecneg{\lA}}
			\infer2[\cutr]{\vdash \Gamma}
	\end{prooftree}}
	$\cutred~$
	{\begin{prooftree}[separation=1.2em, label separation=0.2em]
			\hypo{}
			\ellipsis{$\pi_1$}{\vdash \lA, \lA}
			\infer1[\wndrule]{\vdash \lA, \wn \lA}
			\infer0[\ocwrule]{\oc \widecneg{\lA}}
			\infer2[\cutr]{\vdash  \lA}
			\hypo{}
			\ellipsis{$\pi$}{\vdash \Gamma, \widecneg{\lA}}
			\infer2[\cutr]{\vdash \Gamma}
	\end{prooftree}}
	$\cutred~$
	{\begin{prooftree}
			\infer0[\zerorule]{\vdash \lA}
			\hypo{}
			\ellipsis{$\pi$}{\vdash \Gamma, \widecneg{\lA}}
			\infer2[\cutr]{\vdash \Gamma}
	\end{prooftree}}
	$\equiv~$
	{\begin{prooftree}
			\infer0[\zerorule]{\vdash \Gamma}
	\end{prooftree}}
\end{center}
\caption{An example of the cut-elimination procedure in $\SDill$ sequent calculus.}
\label{fig:ex:cutelim1}
\end{figure}


It is worth comparing cut-elimination steps as defined for $\SDill$ sequent calculus (\Cref{fig:cut-elim}) and for $\SDill$ interaction nets (\cite[Sect.~2]{EhrhardRegnier06}, \cite[Fig.~4]{Pagani09}): symmetry and duality in the latter are lost in the former.

As there is no promotion rule $\prule$, in $\SDill$ transforming a derivation in $\SDill$ into one with atomic axioms does not commute with cut-elimination. 
For instance, derivation $\pi$ below reduces to $\pi'$ via cut-elimination; 
but derivation $\pi_\eta$ with atomic axioms, obtained from $\pi$ through $\eta$-expansion (the procedure described in the proof of \Cref{prop:atomic}) reduces to $\pi_\eta' \neq \pi'$ via cut-elimination.

\vspace*{-\baselineskip}
\begin{align*}
\small
\ebproofset{right label template=\tiny$\inserttext$, label separation=0.2em, separation = 1em}
	\pi =
	{\begin{prooftree}
	\infer0[\axrule]{\vdash \oc \widecneg{a}, \wn a}
	\infer0[\wnwrule]{\vdash \wn a}
	\infer2[\cutr]{\vdash \wn {a}}
	\end{prooftree}}
\ \
\cutred
\ \
	{\begin{prooftree}
	\infer0[\wnwrule]{\vdash \wn {a}}
	\end{prooftree}}
	= \pi'
	\qquad\qquad
	\pi_\eta = 
	{\begin{prooftree}
	\infer0[\axrule]{\vdash \widecneg{a},  a}
	\infer1[\wndrule]{\vdash \widecneg{a}, \wn a}
	\infer1[\ocdrule]{\vdash \oc \widecneg{a}, \wn a}
	\infer0[\wnwrule]{\vdash \wn a}
	\infer2[\cutr]{\vdash \wn {a}}
	\end{prooftree}}
	\ \
	\cutred
	\ \
	{\begin{prooftree}
	\infer0[\zerorule]{\vdash \wn {a}}
	\end{prooftree}}
	= \pi_\eta'
\end{align*}


\section{A Calculus of Structures for $\SDill$}\label{sec:SDDI}
%

In this section,
we introduce a deep inference system \cite{gug:str:01,brunnler:tiu:01,gug:SIS,tub:str:esslli19} suitable for $\SDill$, using the \emph{open deduction} formalism \cite{gug:gun:par:2010,TubellaGuglielmi18}.
As a first novelty, we internalize the rules $\zerorule$ and $\sumrule$ of $\SDill$ sequent calculus at the level of formulas.
In fact, derivations in deep inference systems have a sequence structure instead of the more general tree-like structure of sequent calculus: every rule in deep inference has exactly one premise, consisting of one formula. 
This because the meta-connectives for sequent composition (the comma) and sequent juxtaposition (derivation branching) are internalized by $\lpar$ and $\ltens$, respectively. 
To internalize the $\SDill$  meta-connective for sum, together with its unit, we introduce the (commutative and associative) binary connective $\lplus$ and its unit $\lzero$.\footnotemark
\footnotetext{Here, the new connective $\lplus$ has nothing to do with the additive disjunction $\oplus$ in $\LL$; and the unit $\lzero$ for $\lplus$ must not be confused with the additive unit $0$ for $\oplus$ in $\LL$.}
In this way, the rule $\sumrule$ branches the derivation tree with a connective, $\lplus$;
and similarly, the rule $\zerorule$ has is own premise, $\lzero$.
Thus, \emph{formulas} are~defined~by: 
\begin{center}
	$A, B \Coloneqq a \mid \cneg a \mid A\ltens B \mid A\lpar B \mid \lone \mid \lbot \mid \oc A \mid \wn B \mid \lzero \mid A \lplus B$
\end{center}
where $a, b, c, \dots$ range over the usual countably infinite set of propositional 
variables (so, a $\MELL$ formula as defined on p.~\pageref{sec:DiLL} is a formula 
with no occurrences of $\lplus$ and $\lzero$).
Formulas are identified up to the \emph{equivalence} $\fequiv$ defined as the least congruence on formulas generated by the relations in \eqref{eq:form}.
\begin{equation}\label{eq:form}
	\small
	\begin{gathered}
		\begin{array}{c@{\quad\quad }c@{\quad\quad}c}
			A\lpar B \fequiv B\lpar A				&
			A\ltens B \fequiv B\ltens A 			&
			A\lplus B \fequiv B \lplus A           
			\\
			A\lpar(B\lpar C)  \fequiv (A\lpar B)\lpar C 	&
			A\ltens(B\ltens C) \fequiv (A\ltens B)\ltens C &
			A \lplus (B\lplus C) \fequiv (A\lplus B)\lplus C
			\\
			A\lpar\lbot \fequiv A 	\quad\ \ \,		&
			A\ltens\lone  \fequiv A \quad\ \		&
			A \lplus \lzero \fequiv A \quad\ \  
		\end{array}
		\\[-2pt]
		\begin{array}{c@{\qquad\qquad}c@{\qquad}c}
			A\lpar(B\lplus C) \fequiv (A\lpar  B)\lplus (A\lpar C) 	
			&
			A\ltens (B\lplus C) \fequiv(A\ltens B)\lplus (A\ltens C )
			\\
			\oc(A\lplus B) \fequiv \oc A \lplus \oc B
			&
			\wn(A\lplus B) \fequiv \wn A \lplus \wn B
			\\    
			\lzero \ltens A \fequiv \lzero
			\qquad \quad
			\lzero \lpar A \fequiv  \lzero 
			& 
			\wn \lzero \fequiv \lzero 
			\qquad \quad
			\oc \lzero \fequiv \lzero
		\end{array}
	\end{gathered}
\end{equation}  

Some equivalences in \eqref{eq:form} correspond to well-known isomorphisms in $\MLL$.
With respect to $\fequiv$, the formula 
$\lzero$ is an annihilating element for all other connectives but $\lplus$, for which it is a neutral element; 
every connective other than $\lplus$ distributes over $\lplus$.

An \emph{additive normal} formula $A$ is a sum of $\MELL$ formulas,
\ie $A = \lA_1 \lplus \cdots \lplus \lA_n$ ($n \in \N$) where all $\lA_i$'s are $\MELL$ formulas ($A = \lzero$ for $n =0$).
For any $n \in \N$, we set $\lones=\underbrace{\lone+\cdots +\lone}_{n \textup{ times}}$. 
Note that, by the equivalences in \eqref{eq:form},  $\loness n\ltens \loness m =\loness{k}$ where $k = n \times m$.

A \emph{context} (\resp~$\MELL$ \emph{context}) $\ctx$ is a formula (\resp~$\MELL$ formula) with exactly one occurrence of the hole $\cons{\,}$ (which can be thought of as a special propositional variable).
We write $\ctxp{A}$ for the formula obtained from the context $\ctx$ by replacing its hole with the formula $A$.

\begin{remark}[Additive normal form]
	\label{rmk:additive}
	By definition of $\fequiv$, if $\ctx$ is a context, $\ctxp{A \lplus \lzero} \fequiv \ctxp{A}$ and $\ctxp{A \lplus B} \fequiv \ctxp{A} \lplus \ctxp B$. 
	If $\ctxp{}$ is a $\MELL$ context, $\ctxp{\lzero} \fequiv \lzero$.
	In general, 
	any formula $A$ has an additive normal formula $A'$ such that $A' \fequiv A$. 
	Indeed,
	equivalences in \eqref{eq:form} but the ones on the 
	first line can be oriented to define a terminating rewriting relation whose normal forms are \mbox{additive normal} formulas. 
\end{remark}

\paragraph{Derivations.}
We present deep inference derivations in the open deduction formalism, according to their ``\emph{synchronal}'' form, where there is maximal parallelism between inference steps (see \cite{gug:gun:par:2010,TubellaGuglielmi18}).

A \emph{deep inference system} $\cS$ is a set of unary inference rules. 
A \emph{derivation} $\dD$ from a \emph{premise} $B$ to a \emph{conclusion} $A$ in a deep inference  system $\cS$, noted $\vlderivation{\vlde{\dD}{\cS}{A}{\vlhy{B}}}$ or $\deriv{\dD}{B}{\cS}{A}$, is defined as follows:
\begin{itemize}
	\item (\emph{assumption}) a formula $A$ is a derivation (denoted by $A$) with premise and conclusion $A$;
	
	\item (\emph{horizontal composition}) 
	if for all $i \in \{1,2\}$ $\dD_i$ is a derivation 
	from $B_i$ 
	to $A_i$, 
	then for any $\symbol \in \set{\lpar, \ltens, \lplus}$, $\dD_1 \symbol \dD_2$ is a derivation 
	from  $B_1\symbol B_2$ 
	to $A_1\symbol A_2$ (see \eqref{eq:composition} below on the left);

	\item (\emph{vertical composition}) if 
	$\vlinf{\rho}{}{B_2}{A_1} \,\in \cS$
	and, for all $i \in\{1,2\}$, $\dD_i$ is a derivation from $B_i$ 	to $A_i$, then 
	$\dD_1 \circ_\ruler \dD_2$ 	is a derivation from $B_1$ to $A_2$ (see \eqref{eq:composition} below on the right).\footnotemark
	\footnotetext{\label{footnote:equivalence} We can write $\vlinf{\fequiv}{}{B}{A}$
		as a rule in a derivation if $A \fequiv B$, although formally its use is implicit as formulas are identified up to $\fequiv$.
		Said differently, our formalism for derivations can be seen as a \emph{calculus of structures} (in the sense of \cite{TubellaGuglielmi18}) that extends the equivalence relation $\fequiv$ on formulas\,---\,defined in \eqref{eq:form}\,---\,to an equivalence relation on derivations.
	}
\end{itemize}
\begin{small}
	\begin{equation}\label{eq:composition}
		\vlderivation{\odd{\odh{B_1\symbol B_2}}{\dD_1\symbol \dD_2}{A_1\symbol A_2}{\cS}}
		=
		\vlderivation
		{
			\odh{
				\od{\odd{\odh{B_1}}{\dD_1}{A_1}{\cS}}
				\symbol
				\od{\odd{\odh{B_2}}{\dD_2}{A_2}{\cS}}
			}
		}
		\quad\mbox{ for $\symbol \in \set{\lpar, \ltens, \lplus}$}
		\qquad\qquad
		\vlderivation{\odd{\odh{B_1}}{\dD_1 \circ_\ruler \dD_2}{A_2}{\cS}}
		=
		\vlinf
		{\ruler}
		{}
		{\vlderivation{\odd{\odh{B_2}}{\dD_2}{A_2}{\cS}}}
		{\vlderivation{\odd{\odh{B_1}}{\dD_1}{A_1}{\cS}}}
		\quad   
		\mbox{ for $\ruler \in \cS$}
	\end{equation}
\end{small}

We write $B \provevia{\cS} A$ if there is a derivation $\deriv{\dD}{B}{\cS}{A}$. 
	A rule 	$\vlinf{\rho}{}{A}{B}$
	is \emph{derivable} in $\cS$  if  $B \provevia{\cS} A$.

The system $\SDDIs$ is defined by the rules in \Cref{fig:rules}.
All rules in $\SDDIs$ have exactly one premise, as usual in deep inference.
The \emph{down-fragment} and \emph{up-fragment}\footnotemark
\footnotetext{\label{note:co-deep}Usually in the literature on deep inference, the dual rule $\upr{\mathsf{r}}$ of a rule $\downr{\mathsf{r}}$  is called ``co-$\mathsf{r}$''. 
	We avoid these names because they clash with the usual terminology in the literature on $\SDill$, see \Cref{note:co}.
} 
of $\SDDIs$ are the following sets of rules:
$$
\begin{array}{c}
	\SDDIdown=\downfrag
	\qquad
\SDDIup=\upfrag
\end{array}
$$

Note the 
up/down symmetry between $\SDDIdown$ and $\SDDIup$, and that 
$\SDDIs = \SDDIdown \cup \SDDIup$ with $\SDDIdown \cap \SDDIup = \set{\swir}$.
We set $\DMELL = \SDDIdown \setminus \set{\pdr, \zdr} $. 
Note that in a $\DMELL$ derivation only $\MELL$ formulas occur.

Roughly, rules in $\SDDIdown$ somehow mimic the ones in $\SDill \setminus \{\cutr\}$.
Rules in $\SDDIup$ are their duals, turning them upside down.
Derivations in $\DMELL$ correspond to cut-free slices in $\SDill^-$ \mbox{(see \Cref{thm:SDillSDDIs})}.

\begin{remark}[Deep]
	\label{rmk:deep}
	The idea of deep inference is that inference rules can be applied ``deep'' in any context: 
	in a deep inference system $\cS$, if $\vlinf{\ruler}{}{A}{B} \in \cS$ then, for any context $\ctx$, $\vlinf{\ruler}{}{\ctxp A}{\ctxp B}$ is derivable in $\cS$.
	Therefore, a derivation in $\cS$ can be seen as a finite \emph{sequence} of ``deep'' 
	rules: for instance,
	the derivation 
	$\small
	\vlsmash{\vlderivation{
		\odh{
			\od{\odi{\odh{\la}}{\wnddr}{\wn \la}{}}
			\ltens
			\od{\odi{\odh{\lb}}{\ocddr}{\oc \lb}{}}
		}
	}}
	$ 
	in $\SDDIdown$ (with parallel $\wnddr$ and $\ocddr$) can be ``sequenced'' as both
	\begin{prooftree}[label separation = 0.2em]
		\hypo{\la \ltens \lb}
		\infer[left label = \scriptsize$\ocddr$]1{\la \ltens \oc \lb}
		\infer[left label = \scriptsize$\wnddr$]1{\wn \la \ltens \oc \lb}
	\end{prooftree}
	and 
	\begin{prooftree}[label separation = 0.2em]
		\hypo{\la \ltens \lb}
		\infer[left label = \scriptsize$\wnddr$]1{\wn \la \ltens \lb}
		\infer[left label = \scriptsize$\ocddr$]1{\wn \la \ltens \oc \lb}
	\end{prooftree}. 
	Often we implicitly identify a derivation in a deep inference system $\cS$ with its sequenced presentations.
\end{remark}

\Cref{rmk:deep} means that every derivation can be also presented in \emph{sequenced} form, where inference rules are in a total order.
This is useful to have a notion of \emph{last} rule of a derivation, and to work by induction on the number of inference rules in a derivation.
Clearly, a derivation may have several sequenced forms, but each of them has the same number of inference rules as in its ``synchronal'' open deduction form.
A formal definition of how to sequence a derivation in oped deduction is in \cite{TubellaGuglielmi18}.

\begin{figure}[!t]
	\small
	\def\myskip{\hskip1.2em}  
	\hbox to\textwidth{\hfil 
		\begin{tabular}{c@{\myskip}c@{\myskip}c@{\myskip}c@{\myskip}c@{\myskip}c@{\myskip}c@{\myskip}c@{\myskip}c@{\myskip}c@{\myskip}c@{\myskip}c}
			$\vlinf{\aidr}{}{a\lpar \cneg a}{\lone}$
			&
			$\vlinf{\ocddr}{}{\oc \lA}{\lA}$
			&
			$\vlinf{\wnddr}{}{\wn \lA}{\lA}$
			&
			$\vlinf{\ocwdr}{}{\oc \lA}{\lone}$
			&
			$\vlinf{\wnwdr}{}{\wn \lA}{\lbot}$
			&
			$\vlinf{\occdr}{}{\oc \lA}{\oc \lA \ltens \oc \lA}$
			&
			$\vlinf{\wncdr}{}{\wn \lA}{\wn \lA \lpar \wn \lA}$
			&
			$\vlinf{\pdr}{}{\lA}{\lA \lplus \lA}$
			&
			$\vlinf{\zdr}{}{\lA}{\lzero}$
			&
			\multirow[b]{2}{*}{$\vlinf{\swir}{}{(A\ltens B)\lpar C}{A \ltens (B \lpar C)}$}
			\\[10pt]
			$\vlinf{\aiur}{}{\lbot}{a\ltens \cneg a}$
			&
			$\vlinf{\ocdur}{}{\lA}{\wn \lA}$
			&
			$\vlinf{\wndur}{}{\lA}{\oc \lA}$
			&
			$\vlinf{\ocwur}{}{\lbot}{\wn \lA}$
			&
			$\vlinf{\wnwur}{}{\lone}{\oc \lA}$
			&
			$\vlinf{\occur}{}{\wn \lA  \lpar \wn \lA}{\wn \lA}$
			&
			$\vlinf{\wncur}{}{\oc \lA \ltens \oc \lA}{\oc \lA}$
			&
			$\vlinf{\pur}{}{\lA \lplus \lA}{\lA}$
			&
			$\vlinf{\zur}{}{\lzero}{\lA}$
		\end{tabular}  
		\hfil}  
	\caption{The rules of the deep inference system $\SDDIs$ ($A,B,C$ are $\MELL$ formulas).
}
	\label{fig:rules}
\end{figure}

\begin{remark}[Big one]
	\label{rmk:one}
	For any formula $\lA$ and $n \in \N$, if $\deriv{\dD}{\lones}{\SDDI}{\lA}$, then there is a derivation $\deriv{\dD'}{\lone}{\SDDI}{\lA}$.
	Indeed, $\dD'$ is built from $\dD$ by adding one rule $\zur$ if $n = 0$, or $n-1$ rules $\pur$ if $n > 1$, on top of $\dD$.
\end{remark}

System $\SDDIs$ has only the \emph{atomic} introduction rules $\aidr$ and $\aiur$ (indeed $\la$ is a propositional variable in \Cref{fig:rules}): they can be seen as the atomic version of $\axrule$- and $\cutr$-rules of sequent calculus, respectively.
The \emph{non-atomic} versions of the rules $\aidr$ and $\aiur$ are respectively:
$$
\qquad
\vlsmash{\vlinf{\idr}{}{A\lpar \cneg A}{\lone}}
%
\qquad\qquad\qquad 
\vlsmash{\vlinf{\iur}{}{\lbot}{A\ltens \cneg A}}
\qquad\qquad\qquad 
\text{(where } A \text{ is a $\MELL$ formula)}
$$

However, the rules $\idr$ and $\iur$ are derivable in $\SDDIs$ (\Cref{lemma:idDer}).
Derivability of $\idr$ 
is analogous to the fact that a derivation can be transformed to one with atomic axioms in $\SDill$ sequent calculus (\Cref{prop:atomic}), but derivability of $\iur$ is a typical result in deep inference systems that does not have a corresponding result in the sequent calculus: it says that restricting cuts to an \emph{atomic} level is not limiting.

\begin{lemma}[Atomic axioms and atomic cuts]	
	\label{lemma:idDer}
	The rule $\idr$ is derivable in $\set{\aidr, \swir, \wnddr, \ocddr 	}$;
	and the rule $\iur$ is derivable in $\set{\aiur, \swir, \wndur, \ocdur	}$.
\end{lemma}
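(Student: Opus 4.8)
The plan is to prove both statements by induction on the structure of the $\MELL$ formula $\lA$, treating $\idr$ and $\iur$ as dual. I would carry out the induction for $\idr$ in detail and then obtain $\iur$ by the up/down symmetry of the two rule sets: negating every formula and turning a derivation upside down sends $\aidr, \ocddr, \wnddr$ to $\aiur, \wndur, \ocdur$, while $\swir$ is its own dual. Concretely, for $\idr$ the goal is to build, for each $\MELL$ formula $\lA$, a derivation $\lone \provevia{\set{\aidr, \swir, \wnddr, \ocddr}} \lA \lpar \cneg \lA$, proceeding deep in context (\Cref{rmk:deep}) and freely using the identifications provided by $\fequiv$.

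For the base cases: if $\lA = \la$ (or $\lA = \cneg \la$) then $\lA \lpar \cneg \lA \fequiv \la \lpar \cneg \la$ and a single instance of $\aidr$ suffices; if $\lA = \lone$ (or $\lA = \lbot$) then $\lA \lpar \cneg \lA \fequiv \lone$ already holds modulo $\fequiv$ (since $\lone \lpar \lbot \fequiv \lone$), so the assumption derivation $\lone$ does the job. For the exponential case $\lA = \oc \lB$, I apply the induction hypothesis to get $\lone \provevia{} \lB \lpar \cneg \lB$, then apply $\ocddr$ deep in the context $\cons{} \lpar \cneg \lB$ and $\wnddr$ deep in the context $\oc \lB \lpar \cons{}$, reaching $\oc \lB \lpar \wn \cneg \lB = \lA \lpar \cneg \lA$; the case $\lA = \wn \lB$ is symmetric, with the roles of $\ocddr$ and $\wnddr$ swapped.

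The crux is the multiplicative case $\lA = \lB \ltens \lC$. Here I use the two induction hypotheses in parallel: starting from $\lone \fequiv \lone \ltens \lone$, horizontal ($\ltens$-)composition of the sub-derivations gives $\lone \provevia{} (\lB \lpar \cneg \lB) \ltens (\lC \lpar \cneg \lC)$, and then two applications of $\swir$ (modulo the associativity and commutativity of $\lpar$ and $\ltens$ in $\fequiv$) reassociate this into $(\lB \ltens \lC) \lpar \cneg \lB \lpar \cneg \lC = \lA \lpar \cneg \lA$: the first switch extracts $\cneg \lC$, yielding $((\lB \lpar \cneg \lB) \ltens \lC) \lpar \cneg \lC$, and the second extracts $\cneg \lB$. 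The case $\lA = \lB \lpar \lC$ reduces to the tensor case via $\cneg{\cneg \lA} = \lA$ and commutativity of $\lpar$. Dually, for $\iur$ I build $\lA \ltens \cneg \lA \provevia{\set{\aiur, \swir, \wndur, \ocdur}} \lbot$: atoms use $\aiur$; the exponential case first collapses $\oc \lB \ltens \wn \cneg \lB$ to $\lB \ltens \cneg \lB$ with $\wndur$ and $\ocdur$ and then applies the induction hypothesis; and the multiplicative case uses two $\swir$ steps to turn $(\lB \ltens \lC) \ltens (\cneg \lB \lpar \cneg \lC)$ into $(\lB \ltens \cneg \lB) \lpar (\lC \ltens \cneg \lC)$ before applying the induction hypotheses deep in context, producing $\lbot \lpar \lbot \fequiv \lbot$.

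The main obstacle — really the only non-routine point — is the multiplicative case: I must choose the correct pair of $\swir$ instances and carefully track the implicit rewriting by $\fequiv$ (associativity, commutativity, and the unit laws $\lone \lpar \lbot \fequiv \lone$ and $\lbot \ltens \lone \fequiv \lbot$) so that the two switches route the identity (resp. the cut) coming from $\lB$ and $\lC$ into their proper positions. Everything else follows mechanically, and since $\MELL$ formulas are closed under subformulas and the rules involved produce only $\MELL$ formulas when applied to $\MELL$ formulas, the induction never leaves the $\MELL$ fragment.
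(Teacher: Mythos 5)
Your proof is correct and follows essentially the same route as the paper's: structural induction on the $\MELL$ formula, with atoms handled by $\aidr$, units by $\fequiv$, the tensor case by composing the two induction hypotheses horizontally and reassociating with two $\swir$ instances, the exponential case by applying $\ocddr$ and $\wnddr$ deep on top of the induction hypothesis, and $\iur$ obtained by up/down duality. The only immaterial deviations are that you reduce the $\lpar$ case to the $\ltens$ case via involutivity of negation rather than treating it symmetrically, and that you unfold the dual construction for $\iur$ explicitly where the paper leaves it implicit.
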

\begin{proof}
Concerning  $\idr$, the proof is by induction on 
the $\MELL$ formula $A$ in $\vlsmash{\vlinf{\idr}{}{A \lpar \cneg A}{\lone}}$~:
\begin{itemize}
	\item if $A=a$ is a propositional variable (and similarly if $\lA = \cneg{a}$), then {\small$\vlinf{\aidr}{}{a \lpar \cneg a}{ \lone}$}~;
	
	\item if $A= \lone$  (and similarly if $A = \lbot$), then 
	{\small \begin{prooftree}
			\hypo{\lone}
			\infer[left label = \scriptsize$\fequiv$, label separation = 0.2em]1{\lone \lpar \lbot}
	\end{prooftree}}~;
	
	\item if $A = B \ltens C$ (and similarly for $A = B \lpar C$), then 
	{\small
		$
		\od{
			\odi{
				\odh{
					\vlderivation{
						\vlin
						{\fequiv}
						{}
						{
							\od{\odd{\odh{\lone}}{\IH}{B \lpar \cneg B}{}} 
							\ltens 
							\od{\odd{\odh{\lone}}{\IH}{C \lpar \cneg C}{}}
						}
						{\vlhy{\lone}}
					}
				}
			}
			{2\times \swir}
			{ ((B \ltens C)) \lpar ((\cneg B \lpar \cneg C))}
			{}
		}
		$}~; 
	
	\item if $A=\oc B$ (and similarly if $A = \wn B$), then 
	{\small
		$
		\vlderivation{
			\odd 
			{\odh{\lone}}
			{\IH}
			{
				\boxed{\vlderivation{\odi{\odh{B}}{\ocddr}{\oc B}{}}}
				\lpar 
				\boxed{\vlderivation{\odi{\odh{\cneg B}}{\wnddr}{\wn \cneg B}{}}}
			}
			{}
		}
		$
	}~.
	
	
	
\end{itemize}
The proof for $\iur$ is dual, using $\aiur$, $\ocdur$ and $\wndur$ instead of $\aidr$, $\ocddr$ and $\wnddr$, respectively.
\end{proof}

%

The rule $\iur$ plays a special role in deep inference systems, as the $\cutr$ does in sequent calculi.
Thanks to $\fequiv$ and $\zdr$, it makes superfluous all the rules in $\DDIup$ (second line in \Cref{fig:rules}) but $\zur$ and $\swir$.
Note that $\aiur$ is not enough for that, because $\iur$ needs $\ocdur$ and $\wndur$ to be simulated by $\aiur$, as seen in \Cref{lemma:idDer}.

\begin{lemma}
	[Getting rid of up-rules via $\iur$ and $\zdr$]
	\label{lemma:puDer}	\label{thm:SDDIcut}\label{prop:up-derivability}
	Any rule $\upr{\ruler} \!\in\! \{\ocdur,\wndur,\occur,\wncur,\ocwur,\wnwur\}$	is derivable 
	in $\set{\downr{\ruler}, \iur, \idr, \swir}$; 
	the rule $\pur$ is derivable in \mbox{$\{ \zdr\}$}. 	
\end{lemma}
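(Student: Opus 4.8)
The plan is to give one uniform derivation scheme that turns every down-rule into its dual up-rule, and then to dispatch $\pur$ by a separate one-line argument. This is the standard deep-inference move: a co-rule is derived from its own rule together with a (generic) identity, cut, and switch.

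First I record the duality that pairs the two fragments. Writing a down-rule $\downr{\ruler}$ as a rule with premise $B$ and conclusion $A$, the up-rule $\upr{\ruler}$ is exactly its De Morgan dual, obtained by negating both sides and swapping premise and conclusion, i.e.\ it has premise $\cneg A$ and conclusion $\cneg B$, once the schematic $\MELL$ formula is relabelled through negation. For instance $\ocddr$ rewrites $\lA$ into $\oc \lA$, so its dual rewrites $\cneg{\oc \lA} = \wn \cneg{\lA}$ into $\cneg{\lA}$, which after the substitution $\lA \mapsto \cneg{\lA}$ is precisely $\ocdur : \wn \lA \to \lA$; the analogous checks for $\wnddr/\wndur$, $\occdr/\occur$, $\wncdr/\wncur$, $\ocwdr/\ocwur$ and $\wnwdr/\wnwur$ are immediate, and I would tabulate them.

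Given this pairing, I derive $\upr{\ruler}$ (premise $\cneg A$, conclusion $\cneg B$) inside $\set{\downr{\ruler}, \iur, \idr, \swir}$ by the single scheme
\[
\begin{aligned}
\cneg A
&\fequiv \cneg A \ltens \lone
\overset{\idr}{\longrightarrow} \cneg A \ltens (B \lpar \cneg B)
\overset{\swir}{\longrightarrow} (\cneg A \ltens B) \lpar \cneg B\\
&\overset{\downr{\ruler}}{\longrightarrow} (\cneg A \ltens A) \lpar \cneg B
\overset{\iur}{\longrightarrow} \lbot \lpar \cneg B
\fequiv \cneg B ,
\end{aligned}
\]
where each $\overset{\rho}{\longrightarrow}$ is a single application of $\rho$ and $\fequiv$ is the equivalence of \eqref{eq:form}. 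Here $\idr$ replaces the unit $\lone$ by $B \lpar \cneg B$, $\swir$ is the displayed switch, $\downr{\ruler}$ fires on the subformula $B$ (turning it into $A$), and $\iur$ cuts the subformula $\cneg A \ltens A \fequiv A \ltens \cneg A$ to $\lbot$. All four rules act on subformulas sitting in purely $\ltens/\lpar$ contexts, so every deep application is legitimate by horizontal composition (\Cref{rmk:deep}). Since the scheme uses exactly one instance each of $\idr$, $\swir$, $\downr{\ruler}$ and $\iur$, the up-rule $\upr{\ruler}$ is derivable in the stated set.

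Finally, $\pur$ falls outside this scheme because $\lplus$ and its unit $\lzero$ are not governed by $\idr$, $\iur$, $\swir$; instead it is obtained from the annihilation behaviour of $\lzero$ encoded in $\fequiv$ together with $\zdr$ (which rewrites $\lzero$ into $\lA$), fired on the $\lzero$-summand:
\[
\lA \fequiv \lA \lplus \lzero \overset{\zdr}{\longrightarrow} \lA \lplus \lA ,
\]
so $\pur$ is derivable in $\set{\zdr}$. The only real care needed is the bookkeeping of two points: checking rule-by-rule that negating the schematic variable aligns the dual of $\downr{\ruler}$ with $\upr{\ruler}$, and confirming that no step of the scheme asks a rule to act under $\oc$ or $\wn$ (all contexts are $\ltens/\lpar$), which is exactly what justifies the uniform argument. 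Neither point is deep, but both must be made explicit for the scheme to be sound.
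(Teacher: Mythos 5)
Your proposal is correct and matches the paper's own proof essentially verbatim: the paper derives each $\upr{\ruler}$ by exactly your scheme $\cneg A \fequiv \cneg A \ltens \lone \to_{\idr} \cneg A \ltens (B \lpar \cneg B) \to_{\swir} (\cneg A \ltens B)\lpar \cneg B \to_{\downr{\ruler}} (\cneg A \ltens A) \lpar \cneg B \to_{\iur} \cneg B$ (equation \eqref{eq:derivable-rho}, up to renaming of the schematic letters and commutativity of $\ltens$), and handles $\pur$ by the same one-step use of $A \fequiv A \lplus \lzero$ followed by $\zdr$ (equation \eqref{eq:derivable-plus}). Your two explicit bookkeeping points (the duality pairing of $\downr{\ruler}$ with $\upr{\ruler}$, and that all deep applications occur in $\ltens/\lpar$ contexts) are left implicit in the paper but are sound.
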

\begin{proof}
		For a rule 
		$\vldownsmash{\vlinf{\upr{\ruler}}{}{\cneg A}{\cneg B}}$
		with 
		$\upr{\ruler} \in \set{\ocdur,\wndur,\occur,\wncur,\ocwur,\wnwur}$, see \eqref{eq:derivable-rho}. 
		For the rule $\vldownsmash{\vlinf{\pur}{}{A \lplus A}{A}}$, see \eqref{eq:derivable-plus}.

		\begin{minipage}{0.4\textwidth}
			\small
		\begin{equation}
		\label{eq:derivable-rho}	
			{\vlderivation{
					\vlin{\fequiv}{}
					{
						\vlin{\swir}{}
						{\cneg A \lpar \od{\odn{\odn{A}{\downr{\ruler}}{B}{}\ltens \cneg B}{\iur}{\lbot}{}}}
						{\cneg A}
					}
					{	\vlin{\fequiv}{}{
							\od{\odn{\lone}{\idr}{\cneg A\lpar A}{}}
							\ltens\cneg B
						}
						{\vlhy{\cneg B}}
					}	
			}}
		\end{equation}
		\end{minipage}
	\qquad
	\begin{minipage}{0.4\textwidth}
		\begin{equation}
		\label{eq:derivable-plus}
		{\vlupsmash{\vlderivation{
				\vlinf
				{\fequiv}
				{}
				{
					\odn{\lzero}{\zdr}{A }{}
					\lplus A
				}
				{A}
		}}
		}
		\end{equation}
	\end{minipage}

\vspace{-\baselineskip}
\end{proof}

\section{Correspondence between \texorpdfstring{$\SDill$}{DiLL0} and \texorpdfstring{$\SDDIs$}{SDDI}}\label{sec:correspondence}



In this section we prove that $\SDDIs$  is a \emph{sound} and \emph{complete} proof system for $\SDill$ 
sequent calculus.
At first sight, this result is obvious because the rules $\zerorule$ in $\SDill$ and $\zdr$ in $\SDDI$ make everything provable. 
But the interest is to show that the fragments without $\zerorule$ and $\zdr$ correspond to each other.

If $\Gamma= \lA_1, \dots, \lA_n$ (with $n \in \N$) is a $\MELL$ sequent, we set $\toform \Gamma=\lA_1 \lpar \cdots \lpar \lA_n$ (so, $\toform \Gamma= \bot$ for $n = 0$).

\begin{figure}[t]
	\begin{center}
		\footnotesize
		$
		\vlinf{}{\axrule}{\vdash \la, \cneg \la}{}
		\translatesto \
		\vlinf{\aidr}{}{a \lpar \cneg a}{\lone}
		$
		\quad \ \ 
		$
		\vlderivation{\vliin{}{\cutr}{\vdash \Gamma, \Delta}{\vlpr{}{\pi_1}{\vdash \Gamma, \lA}}{\vlpr{}{\pi_2}{\vdash \Delta , \widecneg \lA}}} 
		\translatesto  \ 
		\vlderivation{
			\vlin
			{2\times \swir}
			{}
			{ \toform{\Gamma} \lpar \od{\odn{\lA \ltens \widecneg{\lA}}{\iur}{\bot}{}\lpar \toform{\Delta}}} 
			{
				\vlin
				{\fequiv}
				{}
				{
					\od{\od{\odd{\odh{\lones_1}}{\toform{\pi_1}}{\toform{\Gamma} \lpar \lA}{}}} 
					\ltens 
					\od{\od{\odd{\odh{\lones_2}}{\toform{\pi_2}}{\toform{\Delta} \lpar \widecneg\lA}{}}} 
				}
				{\vlhy{\loness{m}}}
			}
		}
		$
		\quad \ \
		$
		\vlderivation{\vliin{}{\ltens}{\vdash \Gamma, \lA \ltens \lB , \Delta}{\vlpr{}{\pi_1}{\vdash \Gamma, \lA}}{\vlpr{}{\pi_2}{\vdash \Delta , \lB}}} 
		\translatesto \
		\vlderivation{
			\vlin
			{2\times \swir}
			{}
			{\toform{\Gamma} \lpar (\lA \ltens \lB) \lpar \toform{\Delta}} 
			{
				\vlin
				{\fequiv}
				{}
				{
					\od{\od{\odd{\odh{\lones_1}}{\toform{\pi_1}}{\toform{\Gamma} \lpar \lA}{}}} 
					\ltens 
					\od{\od{\odd{\odh{\lones_2}}{\toform{\pi_2}}{\toform{\Delta} \lpar \lB}{}}} 
				}
				{\vlhy{\loness{m}}}
			}
		}
		$
		
		$
		\vlderivation{\vlin{}{\lpar}{\vdash \Gamma, \lA \lpar \lB}{\vlpr{}{\pi}{\vdash \Gamma, \lA, \lB}}}
		\translatesto \
		\vlderivation{\od{\odd{\odh{\lones}}{\toform{\pi}}{\toform{\Gamma} \lpar \lA \lpar \lB}{}} }
		$
		\qquad
		$
		\vlderivation{\vlin{}{\onerule}{\vdash \llone}{\vlhy{}}}
		\translatesto \ 
		\vlderivation{\lone}
		$
		\qquad
		$
		\vlderivation{\vlin{}{\lbot}{\vdash \Gamma, \llbot}{\vlpr{}{\pi}{\vdash \Gamma}}}
		\translatesto \ 
		\vlderivation{
			\vlin
			{\fequiv}
			{}
			{\toform{\Gamma} \lpar \lbot}
			{
				\odd{\odh{\lones}}{\toform{\pi}}{\toform{\Gamma}}{}
			}
		}
		$
		\qquad
		$
		\vlderivation{\vliin{}{\sumrule}{\vdash \Gamma}{\vlpr{}{\pi_1}{\vdash \Gamma}}{\vlpr{}{\pi_2}{\vdash \Gamma}}} 
		\translatesto 
		\vlderivation{\vlin{\pdr}{}{ \toform{\Gamma} } {\vlhy{\od{				\od{\odd{\odh{\lones_1}}{\toform{\pi_1}}{\toform{\Gamma}}{}}} \lplus \od{\od{\odd{\odh{\lones_2}}{\toform{\pi_2}}{\toform{\Gamma}}{}}} }}}
		$
		
		$
		\vlderivation{\vlin{}{\wnwrule}{\vdash \Gamma, \wn \lA}{\vlpr{}{\pi}{\vdash \Gamma}}}
		\translatesto
		\vlderivation{
			\vlin
			{\fequiv}
			{}
			{\toform \Gamma \lpar \od{\odn{\lbot}{\wnwdr}{\wn  \lA}{}}}
			{\odd{\odh{\lones}}{\toform{\pi}}{\toform{\Gamma}}{} }
		}
		$
		\qquad
		$
		\vlinf{}{\ocwrule}{\vdash \oc \lA}{} 
		\translatesto \
		\vlinf{\ocwdr}{}{\oc  \lA}{\lone}
		$
		\qquad
		$
		\vlderivation{\vlin{}{\ocdrule}{\vdash \Gamma, \oc \lA}{\vlpr{}{\pi}{\vdash \Gamma, \lA}}}
		\translatesto
		\vlderivation{\od{\odd{\odh{\lones}}{\toform{\pi}}{\toform{\Gamma} \lpar \odn{\lA}{\ocddr}{\oc \lA}{} }{}} }
		$
		\qquad
		$
		\vlderivation{\vlin{}{\wndrule}{\vdash \Gamma, \wn \lA}{\vlpr{}{\pi}{\vdash \Gamma, \lA}}}
		\translatesto
		\od{\odd{\odh{\lones}}{\toform{\pi}}{\toform{\Gamma} \lpar \odn{\lA}{\wnddr}{\wn \lA}{} }{}} 
		$
		
		$
		\vlderivation{\vlin{}{\wncrule}{\vdash \Gamma, \wn \lA}{\vlpr{}{\pi}{\vdash \Gamma,  \wn \lA, \wn \lA}}}
		\translatesto
		\vlderivation{\od{\odd{\odh{\lones}}{\toform{\pi}}{\toform{\Gamma} \lpar \odn{\wn \lA \lpar \wn \lA}{\wncdr}{\wn \lA}{} }{}}}
		$
		\qquad
		$
		\vlderivation{\vliin{}{\occrule}{\vdash \Gamma, \oc \lA , \Delta}{\vlpr{}{\pi_1}{\vdash \Gamma, \oc \lA}}{\vlpr{}{\pi_2}{\vdash \Delta ,\oc \lA}}} 
		\translatesto \ 
		\vlderivation{
			\vlin
			{2\times \swir}
			{}
			{ \toform{\Gamma} \lpar \od{\odn{\oc \lA \ltens \oc{\lA}}{\occdr}{\oc \lA}{}\lpar \toform{\Delta}} } 
			{
				\vlin
				{\fequiv}
				{}
				{
					\vlderivation{
						\od{\od{\odd{\odh{\lones_1}}{\toform{\pi_1}}{\toform{\Gamma} \lpar \oc\lA}{}}}
					} 
					\ltens 
					\vlderivation{
						\od{\od{\odd{\odh{\lones_2}}{\toform{\pi_2}}{\toform{\Delta} \lpar \oc\lA}{}}}
					} 
				}
				{\vlhy{\loness{m}}}	
			}
		}
		$
		\qquad
		$
		\vlderivation{\vlin{}{\zerorule}{\vdash \Gamma}{\vlhy{}}}
		\translatesto \
		\vlderivation{\vlin{\zdr}{}{\toform \Gamma}{\vlhy{\lzero}}}
		$
	\end{center}
	\caption{
		Translation of $\eta$-expanded $\SDill$ sequent calculus 
		derivations 
		into  $\DDIdown \cup \{\iur\}$ derivations ($m = n_1 \times n_2$).
	}
	\label{fig:seqToDeep}
\end{figure}


\begin{theorem}[Completeness]\label{thm:SDilltoSDDIs}
	Let $\Gamma$ be a $\MELL$ sequent.
	If $\provevia{\SDill} \Gamma $ (\resp\!\!$\provevia{\SDill\setminus \set{\cutr}} \Gamma $), then  $\lones \!\provevia{\DDIdown\cup\set{\iur}}\! \toform \Gamma$ (\resp $\lones \!\provevia{\DDIdown}\! \toform \Gamma$) for some $n \in \N$,	and $\lone \provevia {\SDDIs} \toform \Gamma$.
	Moreover, 
if $\provevia{\SDill^-} \Gamma $ then $\lone \!\provevia{\DDIdown_{-}\cup\set{\iur}}\! \toform \Gamma$;
if $\provevia{\SDill^-\setminus\set{\cutr}} \Gamma $ then $\lone \!\provevia{\DDIdown_{-}}\! \toform \Gamma$;
if $\provevia{\set{\zerorule}} \Gamma$ then $\lzero \provevia{\set{\zdr}} \toform{\Gamma}$.
\end{theorem}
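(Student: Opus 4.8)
The plan is to build the required deep inference derivations directly from the translation $\translatesto$ displayed in \Cref{fig:seqToDeep}, which is defined by induction on $\eta$-expanded $\SDill$ sequent derivations. First I would reduce to that case: by \Cref{prop:atomic} every $\SDill$ derivation $\pi$ of $\vdash\Gamma$ yields an $\eta$-expanded one $\pi'$ of $\vdash\Gamma$, preserving being a slice; and since the $\eta$-expansion steps of \eqref{eq:etaexp} never introduce $\cutr$, $\sumrule$, or $\zerorule$, they also preserve cut-freeness and membership in $\SDill^-$ and $\SDill^-\setminus\set{\cutr}$. Thus it suffices to translate $\eta$-expanded derivations, which is exactly the domain on which \Cref{fig:seqToDeep} is defined.

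Next I would prove, by induction on $\pi'$, that $\toform{\pi'}$ is a derivation $\lones \provevia{\DDIdown\cup\set{\iur}} \toform\Gamma$ for some $n\in\N$. Each case is the routine check that the open-deduction composition drawn in \Cref{fig:seqToDeep} is well typed, the only nontrivial bookkeeping being the equivalences $\loness{n_1}\ltens\loness{n_2}\fequiv\loness{n_1\times n_2}$ and $\loness{n_1}\lplus\loness{n_2}\fequiv\loness{n_1+n_2}$ from \eqref{eq:form}. Run in parallel with this induction I would track the multiplicity $n$: the leaves $\axrule$, $\onerule$, $\ocwrule$ give $n=1$; every unary rule leaves $n$ unchanged; the binary rules $\cutr$, $\ltens$, $\occrule$ turn $(n_1,n_2)$ into $n_1\times n_2$; $\sumrule$ turns it into $n_1+n_2$; and $\zerorule$ forces $n=0$.

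From this single induction all the fragment claims follow by inspecting which target rule each sequent rule produces. The rule $\iur$ is produced only by $\cutr$, so for cut-free $\pi'$ the translation lands in $\DDIdown$; the rule $\pdr$ is produced only by $\sumrule$ and $\zdr$ only by $\zerorule$, so for $\pi'\in\SDill^-$ it lands in $\DMELL\cup\set{\iur}$. Moreover, when neither $\sumrule$ nor $\zerorule$ occurs the only operations on $n$ are the base value $1$ and the product $\times$, whence $n=1$ throughout; this gives $\lone\provevia{\DMELL\cup\set{\iur}}\toform\Gamma$ and, combined with cut-freeness, $\lone\provevia{\DMELL}\toform\Gamma$. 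Finally a derivation in $\set{\zerorule}$ is a lone $\zerorule$-instance, translating to $\zdr$ and giving $\lzero\provevia{\set{\zdr}}\toform\Gamma$.

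It remains to obtain $\lone\provevia{\SDDIs}\toform\Gamma$ in the general case. Since $\DDIdown\subseteq\SDDIs$ and $\iur$ is derivable in $\SDDIs$ by \Cref{lemma:idDer}, the derivation above is already a $\SDDIs$ derivation $\lones\provevia{\SDDIs}\toform\Gamma$; then \Cref{rmk:one} rectifies the premise, prefixing one $\zur$ when $n=0$ or $n-1$ copies of $\pur$ when $n>1$. The hard part will not be any single case but the two orthogonal tracking arguments threaded through the induction: the rule-provenance count and the arithmetic of $n$. The genuinely delicate point is the equality $n=1$ for $\SDill^-$, which hinges on the binary rules multiplying multiplicities while every leaf contributes exactly $1$, so that the product stays at $1$ precisely as long as the additive rules $\sumrule$ and $\zerorule$ are absent; everything else is routine diagram-chasing through \Cref{fig:seqToDeep}.
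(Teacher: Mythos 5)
Your proposal is correct and follows essentially the same route as the paper's own proof: reduce to $\eta$-expanded derivations via \Cref{prop:atomic}, build $\toform{\pi}$ by induction following \Cref{fig:seqToDeep}, read off the fragment claims from which target rule each sequent rule produces, and conclude with \Cref{lemma:idDer} and \Cref{rmk:one}. Your explicit bookkeeping of the multiplicity $n$ (leaves give $1$, binary rules multiply, $\sumrule$ adds, $\zerorule$ gives $0$) just spells out what the paper leaves implicit in the annotations of \Cref{fig:seqToDeep}.
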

\begin{proof}

	If we show that $ \provevia{\SDill} \Gamma$ implies $\lones \provevia {\DDIdown \cup\set{\iur}} \toform\Gamma$, then $\lones \provevia{\SDDIs} \toform\Gamma$ by \Cref{lemma:idDer}, and thus $\lone \provevia{\SDDIs} \toform{\Gamma}$ by \Cref{rmk:one}. 
	So, let $\pi$ be a derivation of $\vdash \!\Gamma$ in $\SDill$. 
	By  \Cref{prop:atomic} we can assume that $\pi$ is $\eta$-expanded.
	By induction on $\pi$, we define a derivation  $\deriv{\toform \pi}{\lones}{\DDIdown\cup\set{\iur}}{\toform{\Gamma}}$ 
	(for some $n \in \N$) as shown in \Cref{fig:seqToDeep}.
	According to this translation, if $\pi$ is in $\SDill\setminus\set{\cutr}$ (\resp in $\SDill^-\setminus\set{\cutr}$; in $\SDill^-$; in $\set{\zerorule}$) then $\toform{\pi}$ is in $\DDIdown$ (\resp in $\DMELL$ and $n = 1$; in $\DMELL \cup \{\iur\}$ and $n=1$; in $\set{\zdr}$~and~$n = 0$).
\end{proof}


Completeness (\Cref{thm:SDilltoSDDIs}) says that slices of a derivation in $\SDill$ (\ie~derivations in $\SDill^-$) with atomic axioms correspond to derivations in $\DMELL \cup \{\iur\}$ (and so in $\SDDIs \setminus \{\pur,\zur,\pdr,\zdr\}$, by rewriting $\iur$ according to \Cref{lemma:idDer}) with only $\MELL$ formulas, via the translation $\toform{\cdot}$ defined in 
\Cref{fig:seqToDeep}.
Soundness (\Cref{thm:SDDIstoSDill}) says somehow that the converse holds too.

\def\cotranslatesto{\to}
\begin{figure}[t]
	\footnotesize
	$$
	\vlinf{\aidr}{}{a\lpar \cneg a}{\lone}
	\ \cotranslatesto \
	\vlderivation{\vlin{}{\botrule}{\vdash \la\lpar \cneg \la, \llbot}{\vlin{}{\lpar}{\vdash \la\lpar \cneg \la}{\vlin{}{\axrule}{\vdash \la, \cneg \la}{\vlhy{}}}}}
	\qquad
	\vlinf{\iur}{}{\lbot}{\lA \ltens \widecneg \lA}
	\ \cotranslatesto \ 
	\vlderivation{\vlin{}{\botrule}{\vdash \widecneg\lA\lpar  \lA, \llbot}{\vlin{}{\lpar}{\vdash\widecneg \lA\lpar  \lA}{\vlin{}{\axrule}{\vdash  \widecneg \lA, \lA}{\vlhy{}}}}}
	\qquad 
	\vlinf{\ocwdr}{}{\oc \lA}{\lone}
	\ \cotranslatesto \ 
	\vlderivation{\vlin{}{\botrule}{\vdash \oc \lA, \llbot}{\vlin{}{\ocwrule}{\vdash\oc \lA}{\vlhy{}}}}
	\qquad
	\vlinf{\wnwdr}{}{\wn \lA}{\lbot}
	\ \cotranslatesto \ 
	\vlderivation{\vlin{}{\wnwrule}{\vdash \llone, \wn \lA}{\vlin{}{\onerule}{\vdash \llone}{\vlhy{}}}}
	$$
	
	\vspace{-\baselineskip}
	$$
	\vlinf{\wnddr}{}{\wn \lA}{\lA}
	\ \cotranslatesto \ 
	\vlderivation{
		\vlin{}{\wndrule}{\vdash \cneg \lA, \wn \lA}{ \vlin{}{\axrule}{\vdash  \widecneg \lA, \lA}{\vlhy{}}}
	}
	\qquad
	\vlinf{\ocddr}{}{\oc \lA}{\lA}
	\ \cotranslatesto \ 
	\vlderivation{
		\vlin{}{\ocdrule}{\vdash \widecneg \lA, \oc \lA}{\vlin{}{\axrule}{\vdash  \widecneg \lA, \lA}{\vlhy{}}}
	}
	\qquad
	\vlinf{\swir}{}{(\lA\ltens \lB)\lpar \lC}{\lA \ltens (\lB \lpar \lC)}
	\ \cotranslatesto \
	\vlderivation{
		\vliq
		{}
		{2\times \lpar}
		{\vdash(\lA\ltens \lB)\lpar \lC, \widecneg \lA \lpar  (\widecneg \lB \ltens \widecneg \lC)}
		{
			\vliin
			{}
			{\ltens}
			{\vdash\lA \ltens \lB, \lC, \widecneg \lA,  \widecneg \lB \ltens \cneg \lC}
			{\vlin{}{\axrule}{\vdash  \widecneg \lA, \lA}{\vlhy{}}}
			{
				\vliin
				{}
				{\ltens}
				{\vdash\lB, \lC, \widecneg \lB \ltens \widecneg \lC}
				{\vlin{}{\axrule}{\vdash  \widecneg \lB, \lB}{\vlhy{}}}
				{\vlin{}{\axrule}{\vdash  \widecneg \lC, \lC}{\vlhy{}}}
			}
		}
	}
	$$
	
	\vspace{-\baselineskip}
	$$
	\vlinf{\wncdr}{}{\wn \lA}{\wn \lA \lpar \wn \lA}
	\ \cotranslatesto \ 
	\vlderivation{
		\vlin
		{}
		{\wncrule}
		{\vdash \oc \widecneg \lA \ltens \oc \widecneg \lA, \wn \lA}
		{
			\vliin
			{}
			{\ltens }
			{\vdash \oc \widecneg \lA \ltens \oc \widecneg \lA, \wn \lA, \wn\lA}
			{
				\vlin
				{}
				{\axrule}
				{\vdash \oc  \widecneg \lA, \wn \lA}
				{\vlhy{}}
			}
			{
				\vlin
				{}
				{\axrule}
				{\vdash  \oc \widecneg \lA, \wn\lA}
				{\vlhy{}}
			}
		}
	}
	\qquad
	\vlinf{\occdr}{}{\oc \lA}{\oc \lA \ltens \oc \lA}
	\ \cotranslatesto \
	\vlderivation{
		\vlin{}{\lpar }{\vdash \oc  \lA , \wn\widecneg \lA\lpar \wn \widecneg\lA}{
			\vliin{}{\occrule }{\vdash  \wn \widecneg\lA, \wn \widecneg\lA, \oc  \lA }{\vlin{}{\axrule}{\vdash  \wn\widecneg \lA, \oc \lA}{\vlhy{}}}{\vlin{}{\axrule}{\vdash  \wn \widecneg \lA, \oc \lA}{\vlhy{}}}
	}}
	$$
	
	\vspace{-\baselineskip}
	$$
	\vlinf
	{\fequiv}
	{}
	{(\lA \lpar \lB)\lpar \lC}
	{\lA \lpar (\lB \lpar \lC)}
	\ \cotranslatesto \
	\vlderivation{
		\vliq
		{}
		{2\times \lpar}
		{\vdash \widecneg\lA \ltens (\widecneg \lB \ltens \widecneg \lC), (\lA \lpar \lB) \lpar \lC}
		{
			\vliin
			{}
			{\ltens}
			{\vdash \widecneg\lA \ltens (\widecneg\lB \ltens \widecneg \lC), \lA, \lB, \lC}
			{\vlin{}{\axrule}{\vdash  \widecneg \lA, \lA}{\vlhy{}}}
			{
				\vliin
				{}
				{\ltens}
				{\vdash\lB, \lC, \widecneg \lB \ltens \widecneg \lC}
				{\vlin{}{\axrule}{\vdash  \widecneg \lB, \lB}{\vlhy{}}}
				{\vlin{}{\axrule}{\vdash  \widecneg \lC, \lC}{\vlhy{}}}
			}
		}
	}
	\qquad
	\vlinf
	{\fequiv}
	{}
	{\lA \lpar (\lB \lpar \lC)}
	{(\lA \lpar \lB)\lpar \lC}
	\ \cotranslatesto \
	\vlderivation{
		\vliq
		{}
		{2\times \lpar}
		{\vdash (\widecneg{\lA} \ltens \widecneg\lB) \ltens \widecneg{\lC}, \lA \lpar  (\lB \lpar \lC)}
		{
			\vliin
			{}
			{\ltens}
			{\vdash(\widecneg{\lA} \ltens \widecneg\lB) \ltens \widecneg{\lC}, \lA, \lB, \lC}
			{
				\vliin
				{}
				{\ltens}
				{\vdash\lB, \lA, \widecneg \lA \ltens \widecneg \lB}
				{\vlin{}{\axrule}{\vdash  \widecneg \lA, \lA}{\vlhy{}}}
				{\vlin{}{\axrule}{\vdash  \widecneg \lB, \lB}{\vlhy{}}}
			}
			{
				\vlin
				{}{\axrule}{\vdash  \widecneg \lC, \lC}{\vlhy{}}
			}
		}
	}
	\qquad
	$$
	
	\vspace{-\baselineskip}
	$$
	\vlinf{\fequiv}{}{\lB \lpar \lA}{\lA \lpar \lB}
	\ \cotranslatesto \ 
	\vlderivation{
		\vlin
		{}
		{\lpar}
		{\vdash \widecneg \lA \ltens \widecneg \lB, \lB \lpar \lA}
		{
			\vliin
			{}
			{\ltens }
			{\vdash  \widecneg \lA \ltens \widecneg \lB, \lB, \lA}
			{
				\vlin
				{}
				{\axrule}
				{\vdash  \widecneg \lA,  \lA}
				{\vlhy{}}
			}
			{
				\vlin
				{}
				{\axrule}
				{\vdash  \widecneg \lB, \lB}
				{\vlhy{}}
			}
		}
	}
	\qquad
	\vlinf{\fequiv}{}{\lA \lpar \lbot}{\lA}
	\ \cotranslatesto \ 
	\vlderivation{
		\vlin
		{}
		{\lpar}
		{\vdash \widecneg \lA, \lA \lpar \lbot}
		{
			\vlin
			{}
			{\lbot}
			{\vdash  \widecneg \lA, \lA, \lbot}
			{
				\vlin
				{}
				{\axrule}
				{\vdash  \widecneg \lA,  \lA}
				{\vlhy{}}
			}
		}
	}
	\qquad
	\vlinf{\fequiv}{}{\lA}{\lA \lpar \lbot}
	\ \cotranslatesto \ 
	\vlderivation{
		\vliin
		{}
		{\lpar}
		{\vdash \widecneg \lA \ltens \lone, \lA}
		{
			\vlin
			{}
			{\axrule}
			{\vdash  \widecneg \lA,  \lA}
			{\vlhy{}}
		}
		{
			\vlin
			{}
			{\lone}
			{\vdash  \lone}
			{\vlhy{}}
		}
	}	
	$$	
	\caption{Interpretation of the rules in $\DMELL \cup \{\iur\}$ and of $\fequiv$ as derivations in $\SDill$ sequent calculus.}
	\label{fig:ddiToDill}
\end{figure}

\begin{theorem}[Soundness]\label{thm:SDDIstoSDill}
	For any $\MELL$ sequent $\Gamma$ and any $n \in \N$, if $\lones \provevia{\SDDIs} \toform \Gamma$
	then  $\provevia{\SDill} \Gamma$;
	and more precisely,
	if 
	$ \lone \!\!\!\provevia{\DMELL \cup \{\iur\}}\!\!\! \toform{\Gamma} $, then $\provevia{\SDill^-} \Gamma$.
\end{theorem}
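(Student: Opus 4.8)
The first implication is trivial: the rule $\zerorule$ proves every $\MELL$ sequent, so $\provevia{\SDill}\Gamma$ holds for all $\Gamma$ regardless of the hypothesis. The whole content therefore lies in the \emph{more precise} statement, which I would prove by exhibiting an inverse translation turning any derivation $\deriv{\dD}{B}{\DMELL\cup\{\iur\}}{A}$ into an $\SDill^-$ proof of $\vdash \widecneg B, A$. On single rules and on the equivalence steps $\fequiv$ this translation is read off directly from \Cref{fig:ddiToDill}; crucially, each sequent proof there uses neither $\sumrule$ nor $\zerorule$, so it is a slice, which is exactly what keeps the output inside $\SDill^-$. Note also that all formulas involved are $\MELL$ formulas, since $\DMELL\cup\{\iur\}$ derivations contain no occurrence of $\lplus$ or $\lzero$.

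The first genuine step is a \emph{context-closure} lemma: if $\vdash \widecneg B, A$ is provable in $\SDill^-$, then so is $\vdash \widecneg{\ctxp B}, \ctxp A$ for every $\MELL$ context $\ctx$. I would prove it by induction on $\ctx$. The multiplicative cases are routine functoriality: for $\ctxp{\cdot}=\ctxpp{\cdot}\ltens D$ one combines the induction hypothesis with an axiom $\vdash \widecneg D, D$ via $\ltens$ and $\lpar$ (and dually for $\lpar$). The exponential cases are the subtle ones, and I expect them to be the main obstacle. For $\ctxp{\cdot}=\oc\,\ctxpp{\cdot}$ one must pass from $\vdash \widecneg{\ctxpp B}, \ctxpp A$ to $\vdash \wn\widecneg{\ctxpp B}, \oc\ctxpp A$, \ie to $\vdash \widecneg{\ctxp B}, \ctxp A$; this is achieved by a $\wndrule$ on $\widecneg{\ctxpp B}$ followed by an $\ocdrule$ on $\ctxpp A$ (and symmetrically under $\wn$). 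This is precisely where the differential symmetry of $\SDill$ is indispensable: transporting a rule under an exponential requires both a dereliction and a co-dereliction, so the argument would break in promotion-only $\LL$.

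Next, composition becomes cut. Using the sequenced presentation of derivations (\Cref{rmk:deep}), I would write $\dD$ as a chain $B=D_0\to D_1\to\cdots\to D_k=A$ in which each step is a single rule applied in some context. Context closure turns step $i$ into an $\SDill^-$ proof of $\vdash \widecneg{D_i}, D_{i+1}$, and I would then build $\vdash \widecneg B, A$ by induction on $k$: the base $k=0$ is the axiom $\vdash \widecneg B, B$, and each further step is glued on with a $\cutr$ on the cut-formula pair $D_i/\widecneg{D_i}$. Since $\cutr$ belongs to $\SDill^-$ and no step introduces $\sumrule$ or $\zerorule$, the resulting derivation stays in $\SDill^-$.

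Finally I would specialise to $B=\lone$ and $A=\toform\Gamma$, obtaining an $\SDill^-$ proof of $\vdash \widecneg\lone, \toform\Gamma$, that is $\vdash \llbot, \toform\Gamma$. Cutting the displayed $\llbot$ against $\vdash \llone$ (rule $\onerule$) yields $\vdash \toform\Gamma$, and iterated $\lpar$-inversion\,---\,equivalently a $\cutr$ against the axiom $\vdash \widecneg{\toform\Gamma}, \Gamma$\,---\,recovers $\vdash \Gamma$, still within $\SDill^-$. Thus $\provevia{\SDill^-}\Gamma$, as required. The only genuinely non-routine ingredient in this plan is the exponential case of the context-closure lemma.
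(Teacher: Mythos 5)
Your proposal is correct and follows essentially the same route as the paper's proof: rule-by-rule translation via \Cref{fig:ddiToDill}, context closure by induction on $\MELL$ contexts (with dereliction/co-dereliction handling the exponential case), composition of the sequenced derivation by cuts, and reversibility of $\lpar$ at the end. The only cosmetic difference is that you prove the general statement $\provevia{\SDill^-} \widecneg{B}, A$ and specialise $B = \lone$ afterwards (costing one extra cut against $\vdash \llone$), whereas the paper's induction builds $\vdash \toform{\Delta}$ directly starting from $\vdash \llone$.
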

\begin{proof}
%

	Clearly, for any $\MELL$ sequent $\Gamma$, there is a derivation in $\SDill$ sequent calculus:
	\begin{prooftree}[label separation = 0.3em]
		\infer0[\footnotesize$\zerorule$]{\vdash \Gamma}
	\end{prooftree}.
	
	
	Let us assume that we have a derivation $\dD$ in $\DMELL \cup \{\iur\}$  from  $\lone$ to $\toform \Gamma$.
	To define the derivation of $\vdash \Gamma$ in $\SDill^-$, we consider the formulas occurring in $\dD$ (which actually are $\MELL$ formulas) not up to $\fequiv$, so when $\fequiv$ is required, its use is made explicit 
	as if it were an inference rule (see also \Cref{footnote:equivalence}). 
	For any $\ruler \in \DMELL \cup \{\iur, \fequiv\}$, if $\vldownsmash{\vlinf{\ruler}{}{A}{B}}$
	then $ \provevia{\SDill^-} \widecneg\lB,\lA$ as shown in \Cref{fig:ddiToDill}.
	Since $\lA$ and $\lB$ are $\MELL$ formulas, many equivalences in \eqref{eq:form} cannot occur in $\dD$.
	The cases for $\ruler = {\fequiv}$ corresponding to $\lA \otimes \lB \fequiv \lB \otimes \lA$ and $\lA \otimes (\lB \otimes \lC) \fequiv (\lA \otimes \lB) \otimes \lC$ and $\lA \otimes \lone \fequiv \lA$ are omitted in \Cref{fig:ddiToDill} as they are analogous to the ones for $\lpar$.
	
	Consider $\dD$ as sequenced (\Cref{rmk:deep}).
	By induction on the $\MELL$ context $\ctx$, we prove that if 
	$\vlsmash{\vlinf{\rho}{}{\ctxp{\lA}}{\ctxp{\lB}}}$ occurs in $\dD$,
	then  $ \provevia{\SDill^-} \widecneg{\ctxp{\lB}}, \ctxp{\lA}$.
	We have just shown the case $\ctx = \cons$.
	Other cases:
$$
	\small
	\begin{array}{c@{\qquad}|@{\qquad}c}
	\ctxp{\lA} = \lD \lpar \ctxpp{\lA}
	\text{ (or similarly }
	\ctxp{\lA} = \lD \ltens \ctxpp{\lA}\text{)}
	&
	\ctxp{\lA}= \oc \ctxpp{\lA}
	\text{ (or smilarly }
	\ctxp{\lA}= \wn \ctxpp{\lA}\text{)}
	\\
	\vlderivation{
		\vlin
		{}
		{\lpar }
		{\vdash \cneg \lD \ltens \widecneg{ \ctxpp{\lB}}, \lD \lpar \ctxpp{\lA}}
		{
			\vliin
			{}
			{\ltens}
			{\vdash \cneg \lD \ltens \widecneg{ \ctxpp{\lB}}, \lD, \ctxpp{\lA}}
			{\vlin{}{\axrule}{\vdash \lD,\cneg \lD}{\vlhy{}}}
			{
				\vlpr{\IH}{}{\vdash \widecneg{ \ctxpp{\lB}}, \ctxpp{\lA} }
			}
		}
	}
&
	\vlderivation{
		\vlin
		{}
		{\ocdrule}
		{\vdash \wn\widecneg{ \ctxpp{\lB}}, \oc\ctxpp{\lA}}
		{
			\vlin
			{}
			{\wndrule}
			{\vdash \wn\widecneg{ \ctxpp{\lB}}, \ctxpp{\lA}}
			{
				\vlpr{\IH}{}{\vdash \widecneg{ \ctxpp{\lB}}, \ctxpp{\lA} }
			}
		}
	}
	\end{array}
$$

	We define a derivation of $\vdash \toform{\Gamma}$ in $\SDill^-$ by induction on the number of rules in $\dD$ as follows:
	\begin{center} 
		\footnotesize
		$
		\lone
		\ \cotranslatesto \
		\vlinf{}{\onerule}{\vdash \llone}{}
		\qquad 
		\qquad
		\qquad\qquad\qquad
		\vlsmash{\vlderivation{\vlin{\ruler}{}{\toform \Gamma}{\vlde{}{\DMELL \cup \set{\iur}}{\toform \Delta}{\vlhy{\lone}}}}}
		\quad 
		\cotranslatesto
		\quad
		\vlderivation{\vliin{}{\cutr}{\vdash \toform{\Gamma}}{\vlpr{\IH\,}{\SDill^-}{\vdash \toform{\Delta}}}{\vlpr{}{\SDill^-}{\vdash \widecneg {\toform{\Delta}}, \toform{\Gamma}}}} 
		$
	\end{center}
	By reversibility of $\lpar$ (if $\provevia{\SDill^-} \lA \lpar \lB$ then $\provevia{\SDill^-} \lA, \lB$), we have $\provevia{\SDill^-} \Gamma$.
\end{proof}

Let us sum up the correspondence between $\SDill$ sequent calculus and $\SDDIs$ deep inference system.

\begin{theorem}[Sequent calculus vs.~deep inference]
	\label{thm:SDillSDDIs}	

	Let $\Gamma$ be a $\MELL$ sequent.
	\begin{enumerate}
		\item\label{p:SDillSDDIs-cut}\emph{$\SDill$ vs.~$\SDDIs$}: $\provevia{\SDill} \Gamma $ if and only if $\lone \provevia{\SDDIs} \toform{\Gamma}$. 
		
		\item\label{p:SDillSDDIs-cut-free}\emph{$\SDill$ cut-free vs.~$\DDIdown$}: $\provevia{\SDill \setminus \set\cutr} \Gamma  $ if and only if $\lones \provevia {\DDIdown} \toform \Gamma$ for some $n \in \N$.
		
		\item\label{p:slice-cut-free}\emph{$\SDill^-$ cut-free vs.~$\DMELL$}: if $\provevia{\SDill^- \setminus \set\cutr} \Gamma  $ 
		then $\lone \provevia {\DMELL} \toform \Gamma$.
		
	\end{enumerate}
\end{theorem}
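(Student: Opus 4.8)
The statement follows by assembling the Completeness (\Cref{thm:SDilltoSDDIs}) and Soundness (\Cref{thm:SDDIstoSDill}) theorems already established, reading off from each the direction relevant to the three claimed equivalences; the only point requiring genuine extra work is recovering \emph{cut-freeness} in the backward direction of item~\ref{p:SDillSDDIs-cut-free}, for which I would invoke cut-elimination (\Cref{thm:cutElimSequent}).

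For item~\ref{p:SDillSDDIs-cut}, the left-to-right implication is exactly the last conclusion of \Cref{thm:SDilltoSDDIs}, which from $\provevia{\SDill} \Gamma$ produces $\lone \provevia{\SDDIs} \toform{\Gamma}$. For the converse I would specialize \Cref{thm:SDDIstoSDill} to $n = 1$: reading $\lone \provevia{\SDDIs} \toform{\Gamma}$ as the instance $\lones \provevia{\SDDIs} \toform{\Gamma}$ with $n = 1$, Soundness yields $\provevia{\SDill} \Gamma$. Item~\ref{p:slice-cut-free} is likewise immediate, being precisely the clause of \Cref{thm:SDilltoSDDIs} stating that $\provevia{\SDill^- \setminus \set{\cutr}} \Gamma$ implies $\lone \provevia{\DMELL} \toform{\Gamma}$; so nothing beyond citing Completeness is needed there.

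The interesting case is item~\ref{p:SDillSDDIs-cut-free}. The forward implication is again read off from \Cref{thm:SDilltoSDDIs}: if $\provevia{\SDill \setminus \set{\cutr}} \Gamma$ then $\lones \provevia{\DDIdown} \toform{\Gamma}$ for some $n \in \N$. For the backward implication I would proceed in two steps. First, since $\DDIdown \subseteq \SDDIs$, a derivation witnessing $\lones \provevia{\DDIdown} \toform{\Gamma}$ is in particular one witnessing $\lones \provevia{\SDDIs} \toform{\Gamma}$, so \Cref{thm:SDDIstoSDill} applies and gives $\provevia{\SDill} \Gamma$. This derivation need not be cut-free: the translation underlying Soundness (\Cref{fig:ddiToDill}) turns each sequential composition of deep-inference rules into an instance of $\cutr$, producing one cut per rule of the deep derivation. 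Second, I would delete these cuts by applying cut-elimination (\Cref{thm:cutElimSequent}) to the resulting $\SDill$ derivation of $\vdash \Gamma$; since cut-elimination preserves the conclusion, this yields a cut-free derivation, i.e.\ $\provevia{\SDill \setminus \set{\cutr}} \Gamma$, as required.

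The main obstacle is exactly this last point: Soundness on its own cannot deliver a cut-free sequent derivation, because its translation is inherently cut-producing, so the cut-free half of item~\ref{p:SDillSDDIs-cut-free} must be bridged through the cut-elimination theorem rather than extracted directly from the translation. Here one should only verify that the target fragment is respected: cut-elimination may reintroduce occurrences of $\sumrule$ and $\zerorule$, but these are permitted in $\SDill \setminus \set{\cutr}$, so no rule outside the intended fragment appears.
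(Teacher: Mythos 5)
Your proposal is correct and follows essentially the same route as the paper: both read items~1 and~3 directly off Completeness (\Cref{thm:SDilltoSDDIs}) and Soundness (\Cref{thm:SDDIstoSDill}), and both obtain the backward direction of item~2 by first applying Soundness to the $\DDIdown$ derivation (the paper phrases this as an appeal to item~1) and then invoking sequent-calculus cut-elimination (\Cref{thm:cutElimSequent}) to recover cut-freeness. Your added remarks---that Soundness is inherently cut-producing and that the reintroduced $\sumrule$/$\zerorule$ rules stay within $\SDill \setminus \set{\cutr}$---are accurate and only make explicit what the paper leaves implicit.
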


\begin{proof}
	\begin{enumerate}
		\item For $\Rightarrow$, by completeness (\Cref{thm:SDilltoSDDIs}); for $\Leftarrow$, by soundness (\Cref{thm:SDDIstoSDill}).
		\item [2.--3.] For $\Rightarrow$,	
		see \Cref{thm:SDilltoSDDIs}.
		For $\Leftarrow$ (only for \Cref{p:SDillSDDIs-cut-free}), by \Cref{thm:cutElimSequent,thm:SDillSDDIs}.\ref{p:SDillSDDIs-cut}.
		\qedhere
	\end{enumerate}
\end{proof}

%
%


\section{Normalization in $\SDDIs$}
\label{sect:normalization}

In this section we define a \emph{standard} form for derivations in $\SDDIs$ and a \emph{normalization} procedure to obtain a ``cut-free'' standard  derivation in $\DDIdown$ for any formula $A$ provable in $\SDDIs$.
	The usual approach to prove normalization in deep inference system relies on the 
\emph{splitting technique}  \cite{gug:SIS,horne2019morgan,acc:hor:str:LBF,gug:str:02,str:phd}.
However, the presence in our syntax of the connective $\lplus$ and its unit $\lzero$ prevents us to use Guglielmi and Tubella's normalization result \cite{tubella:phd,TubellaGuglielmi18}, 
which covers and generalizes the splitting proofs.
This is mainly due to the fact that 
$\lzero$ is an absorbing element for $\ltens$, $\lpar$, $\wn$ and $\oc$, together with 
the distributivity over $\lplus$.

For this reason, 
following \cite{str:phd,str:MELLinCoS},
we define the normalization process in terms of \emph{rule permutations}, which play the same role as cut-elimination steps in $\SDill$.
%
In some cases, their definition relies on the rules for the connective $\lplus $ and its unit $\lzero$. This behavior is coherent with the dynamics of cut-elimination in $\SDill$ \cite{EhrhardRegnier06,Pagani09,Gimenez11}, where the rules $\sumrule$ and $\zerorule$ step in to deal with non-deterministic choices or mismatches between ``supply and demand'' (see \Cref{sec:DiLL}).
Interestingly, these permutations in $\SDDIs$ mimic the elegant symmetries of cut-elimination steps as defined for interaction nets \cite[Fig. 4]{Pagani09}, instead of the awkward rewrite rules defined for the sequent calculus (\Cref{fig:cut-elim}).

%
The fact that  the syntax for $\SDDI$ is more flexible and symmetric than the sequent calculus, and internalizes the connective $\lplus$ and its unit $\lzero$, allows for a more fine-grained analysis of the normalization process than in $\SDill$. 
In particular, we can distinguish three kinds of rule permutations corresponding to three distinct phases in normalizing: $\MLL$ \emph{cut-elimination steps} (involving $\aiur$, $\aidr$ and $\swir$ only), \emph{resource management steps} (involving the $\wn$- and the $\oc$-rules only) and \emph{slice operations} (the process of duplicating or removing a slice, which is less evident in the $\SDill$ sequent calculus and $\SDill$ interaction nets).

\begin{definition}[Permutation]
	In $\SDDI$, a rule $\ruler$ \emph{permutes over}  a rule $\rules$ (or $\rules$ \emph{permutes under} $\ruler $) if,
	for any derivation 
	{\small $
	\vldownsmash
	{\vlderivation{\vlin{\ruler}{}{A}{\vlin{\rules}{}{B}{\vlhy C}}}}$}\,, one of the following holds:
	
	\vspace{-\baselineskip}
	$$
	\def\myskip{\hskip3em}
	\begin{array}{c@{\; ; \myskip\!\!\!\!\!\!}c@{\; ;  \myskip\!\!\!\!}c@{\; ;  \myskip\!\!\!\!}c@{\; ;  \myskip\!\!\!\!\!}c}
		\quad{\small A=C}
		&
	\quad{\small\vlinf{\rules}{}A C}
	&
	{\small \vlinf{\ruler}{}A C}
	&
	{\small\vlsmash{\vlderivation{\vlin{\rules}{}{A}{\vlin{\ruler}{}{B'}{\vlhy C}}}}} \; \; \mbox{    for some formula }B'
	&
	{\small \vlsmash{\vlderivation{\vlin{\rules}{}{A}{\vlin{2\times \ruler}{}{B'}{\vlhy C}}}}} \; \; \mbox{    for some formula }B'.
	\end{array}
$$
	
	A rule $\ruler$ \emph{permutes over} a rule $\rules$  (or $\rules$ \emph{permutes under} $\ruler $) \emph{by} a set of rules $\cS$ if $\ruler$ permutes over $\rules$, 
	or 
	$$
	\def\myskip{\hskip1.5em}
	\begin{array}{c@{\myskip}c@{\; , \ \mbox{ or } \ }c@{\; . \myskip}c@{\; .  \myskip}c}
\mbox{for any derivation $
	\vldownsmash
	{\small \vlderivation{\vlin{\ruler}{}{A}{\vlin{\rules}{}{B}{\vlhy C}}}}$\,,  one of the following holds:}
		&
{\small \vlderivation{\vlde{}{\cS}{A}{\vlhy C}}}
		&
{\small \vlupsmash{\vlderivation{\vlin{\rules}{}{A}{\vlde{}{\cS}{B'}{\vlin{\ruler}{}{B''}{\vlhy C}}}}}} \quad\mbox{ for some formulas }  B', B''
	\end{array}
	$$
\end{definition}

Roughly, permuting $\rules$ under $\ruler$ means that $\rules$ can be pushed below $\ruler$ in a derivation with same premise and conclusion. 
In this operation, $\ruler$ or $\rules$ might disappear or other rules might appear in between.
The definition of rule permutation is asymmetric: 
two $\rho$'s can be above one $\sigma$, but not two~$\sigma$'s~below~one~$\rho$.

We call \emph{trivial} the rule permutations identified by the open deduction syntax, such as the one below.
$$
\small
\vlderivation{
\vlin{\ruler_2}{}{A_1\ltens A_2}{\vlin{\ruler_1}{}{A_1\ltens B_2}{\vlhy{B_1\ltens B_2}}}
}
=
\od{\od{\odi{\odh{B_1}}{\ruler_1}{A_1}{}}\ltens \od{\odi{\odh{B_2}}{\ruler_2}{A_2}{}}}
=
\vlderivation{
	\vlin{\ruler_1}{}{A_1\ltens A_2}{\vlin{\ruler_2}{}{B_1\ltens A_2}{\vlhy{B_1\ltens B_2}}}
}
$$

The following lemma is analogous to canonicity (\Cref{fact:canon}) 
for $\SDill$ sequent calculus.
It means that, in $\SDDIs$, rules $\zdr$ and $\pdr$ can be pushed down in a derivation, and rules $\zur$ and $\pur$ can be pushed up.

\begin{lemma}[Permuting $\lzero$ and $\lplus$]
	\label{lemma:sumPermutations}
	Any rule in $\SDDIs$ permutes over $\zdr$ and $\pdr$, and  under $\pur$ and $\zur$.
\end{lemma}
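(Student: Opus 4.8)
The plan is to prove \Cref{lemma:sumPermutations} by a systematic case analysis over all rules $\ruler \in \SDDIs$, checking the four permutation schemes against each of the four target rules ($\zdr$, $\pdr$ above; $\pur$, $\zur$ below). The key observation driving the whole argument is that the connective $\lzero$ and the sum $\lplus$ interact with every other connective and modality through the equivalences in \eqref{eq:form}: $\lzero$ is absorbing for $\ltens, \lpar, \oc, \wn$ and neutral for $\lplus$, while \emph{every} connective distributes over $\lplus$. So the permutations are not really ``computations'' but rather consequences of these distributivity/absorption laws applied inside a context.

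First I would handle the cases for $\zdr$ and $\zur$, which are the absorption cases. Consider a rule $\ruler$ with $\vlinf{\ruler}{}{\ctxp{A'}}{\ctxp{B'}}$ acting deep in a context $\ctx$, sitting above a $\zdr$ that introduces $B' \fequiv \lzero$ somewhere inside. If the redex of $\ruler$ lies entirely within the subformula replaced by $\lzero$, then by the absorption equivalences $\ctxp{\lzero} \fequiv \lzero$ (when the surrounding context is built from $\ltens, \lpar, \oc, \wn$, i.e.\ a $\MELL$ context, as in \Cref{rmk:additive}) the whole derivation collapses and we land in the scheme $A = C$, or we can simply replay $\zdr$ at the top, matching $\vlinf{\zdr}{}{A}{C}$. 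If the redex of $\ruler$ is disjoint from the $\lzero$-occurrence, the two rules act on independent parts of the formula and the permutation is \emph{trivial} in the open-deduction sense (as in the displayed trivial permutation above $\ruler_1, \ruler_2$): they simply commute. The symmetric reasoning with $\zur$ pushed up through $\ruler$ is dual.

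Next I would treat $\pdr$ and $\pur$, the distributivity cases. Here the essential point is that when a rule $\ruler$ acts on a formula and a $\pdr$ (introducing $A \lplus A$ below) duplicates a subformula, distributivity of every connective over $\lplus$ lets us push $\ruler$ into both summands. When $\ruler$ is a unary rule whose redex lies inside the duplicated slice, permuting produces \emph{two} copies of $\ruler$, one in each summand --- precisely the fourth scheme $\vlsmash{\vlderivation{\vlin{\rules}{}{A}{\vlin{2\times \ruler}{}{B'}{\vlhy C}}}}$ with $\rules = \pdr$. When the redex is outside, we get the ordinary single-copy scheme or a trivial commutation. The case where $\ruler$ itself is $\pdr$, $\pur$, $\zdr$, or $\zur$ is settled directly from the associativity, commutativity and neutrality laws for $\lplus$ together with $\lzero \lplus A \fequiv A$.

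The main obstacle I expect is \textbf{bookkeeping rather than depth}: I must verify that for \emph{every} one of the eighteen rules in \Cref{fig:rules} the redex can always be classified as either inside the $\lplus$/$\lzero$-affected subformula, outside it, or straddling it, and that the straddling case never actually arises in a way not covered by the schemes. The subtle point is the interaction with the modalities: I need the equivalences $\oc(A \lplus B) \fequiv \oc A \lplus \oc B$, $\wn(A \lplus B) \fequiv \wn A \lplus \wn B$, $\oc \lzero \fequiv \lzero$ and $\wn \lzero \fequiv \lzero$ to be sure that a $\oc$- or $\wn$-rule can still be pushed through a $\lplus$ or $\lzero$ appearing under its modality. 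Since all these are already built into $\fequiv$ and derivations are taken up to $\fequiv$ (see \Cref{footnote:equivalence}), each case reduces to choosing the right instance of the equivalence and invoking the definition of permutation; I would present a representative sample (one absorption case, one distributivity case producing two copies, and one modal case) in full and note that the remaining cases are strictly analogous.
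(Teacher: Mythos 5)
Your proposal is correct and takes essentially the same route as the paper's own proof: the paper's argument consists of exactly the schematic permutations you describe (duplication of the other rule when pushed past $\pdr$/$\pur$, collapse to a single $\zdr$/$\zur$ in the zero cases, and cancellation of the pairs $\pur/\pdr$ and $\zur/\zdr$), given in \eqref{eq:sumZeroPerm}, and it resolves your ``straddling'' concern precisely as you anticipate, by invoking the $\fequiv$-laws (distributivity of every connective over $\lplus$, absorption by $\lzero$) to surface a deep occurrence before permuting, illustrated in \eqref{eq:fequivPerm} for $\aidr$ and noted to be analogous for $\occur$ and $\wncur$. The only cosmetic slip is that the duplication case you quote is the fifth scheme of the permutation definition, not the fourth.
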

\begin{proof}
We define the rule permutations below, for $\ruler, \rulet, \rules \in \SDDIs$ with $\ruler\neq \pur$,~ $\rulet\neq \zur$ and \mbox{$\rules \not\in\set{\pdr, \zdr}$}.
	\begin{equation}\label{eq:sumZeroPerm}
		\small
		\begin{array}{c@{\qquad}c@{\qquad}c}
		\odn{ B \lplus B}{\pdr}{\odn{B}{\ruler}{A}{}}{} 
		\normred
		\odn
		{\odn{B}{\ruler}{A}{} 
			\lplus   
		\odn{B}{\ruler}{A}{}}{\pdr}{A}{}
&
	\vlderivation{\vlin{\pur}{}{B\lplus B}{\vlin{\pdr}{}{B}{\vlhy{B\lplus B}}}}
		\normred
	B\lplus B
&		
	\vlderivation{\vlin{\rulet}{}{A}{\vlin{\zdr}{}{B}{\vlhy{\lzero}}}}
		\normred
	\vlinf{\zdr}{}{A}{\lzero}
\\
	\vlderivation{\vlin{\zur}{}{\lzero}{\vlin{\zdr}{}{B}{\vlhy{\lzero}}}}
		\normred
	\lzero
	&
	{\odn{\odn{B}{\rules}{A}{}}	{\pur}	{ A \lplus A}	{}}	
		\normred	
	{\odn{B}{\pur}{  \odn{B}{\rules}{A}{} \lplus   \odn{B}{\rules}{A}{} }{}}\ 
	&
	\vlsmash{\vlderivation{\vlin{\zur}{}{\lzero}{\vlin{\rules}{}{A}{\vlhy{B}}}}}
	\normred
	\vlinf{\zur}{}{\lzero}{B}
\end{array}
	\end{equation}

Note that some of the rule permutations in \eqref{eq:sumZeroPerm} may 
implicitly use formula equivalence $\fequiv$ in order to be applied, for example in the following rule permutations concerning $\aidr$ and $\pur$, or $\aidr$ and  $\zur$:
\begin{equation}
	\small
	\label{eq:fequivPerm}
	\begin{array}{c}
	\vlderivation{\vlin{\aidr}{}{ \od{\odn{a}{\pur}{a\lplus a}{}} \lpar \cneg a}{\vlhy{\lone}}}
	\quad=\;
	\vlderivation{\vlin{\aidr}{}{ \od{\odn{(a\lpar \cneg a)}{\pur}{((a\lplus a)) \lpar \cneg a}{}}}{\vlhy{\lone}}}
	\quad\fequiv\;
	\vlderivation{
		\vlin
		{\aidr}
		{}
		{ 
			\od{
				\odn
				{a\lpar \cneg a}
				{\pur}
				{((a\lpar \cneg a))\lplus((a\lpar \cneg a))}
				{}
			}
		}
		{\vlhy{\lone}}
	}
	\quad\normred\quad
	\vlderivation{
		\vlin
		{\pur}
		{}
		{
			\od{
				\odn
				{\lone}
				{\aidr}
				{a\lpar \cneg a}
				{}
			}
			\lplus 
			\od{
				\odn
				{\lone}
				{\aidr}
				{a\lpar \cneg a}
				{}
			}
		}
		{\vlhy{\lone}}
	}
\\[2em]
\vlderivation{\vlin{\aidr}{}{ \od{\odn{a}{\zur}{\lzero}{}} \lpar \cneg a}{\vlhy{\lone}}}
\quad=\quad
\vlderivation{\vlin{\aidr}{}{ \od{\odn{a\lpar \cneg a}{\zur}{\lzero\lpar \cneg a}{}}}{\vlhy{\lone}}}	
\quad\fequiv\quad
\vlderivation{\vlin{\aidr}{}{ \od{\odn{a\lpar \cneg a}{\zur}{\lzero}{}}}{\vlhy{\lone}}}
\quad\normred\quad
\vlinf{\zur}{}{\lzero}{\lone}
\end{array}
\end{equation}
Similar permutations take place when $\aidr$ is replaced with $\occur$ or $\wncur$, and in their dual configurations.
\end{proof}

As a consequence of \Cref{lemma:sumPermutations}, pushing up the rules $\zur$, $\pur$ and down $\zdr$, $\pdr$ can be interpreted as \emph{slice management} operations: it duplicates and discards the ``slices'' (the subderivations without the rule $\zur$, $\pur$, $\zdr$, $\pdr$) and extends them as much as possible, propagating the non-deterministic choice $\lplus$ and the resource mismatch $\lzero$ all along the derivation.
It generalizes \Cref{rmk:additive} (for formulas) to derivations.


\begin{corollary}[Slice management]
	\label{lemma:sliceMan}
	Let $A$, $A' $, 
	$B$, $B' $ be $\MELL$~formulas, and $\ctxp{\,}$ be a $\MELL$ context. 
	Let $\normred$ be one of the steps defined in \eqref{eq:sumZeroPerm}, and $\normred^n$ be a sequence of $n \in \N$ of such steps.
	Then, 
	$$\scriptsize
	\vlderivation{\vlde{\dD_1}{\SDDI}{A}{\vlde{\dD_2}{\SDDI}{\ctxp{\lzero}}{\vlhy{B}}}}
	\normred^{m} 
	\vlderivation{\vlin{\zdr}{}{A}{\vlin{\zur}{}{\lzero}{\vlhy{B}}}}
	\qquad \qquad
	\vlderivation{\vlde{\dD_1}{\SDDI}{A}{\vlde{\dD_2}{\SDDI}{\ctxp{\lzero}}{\vlhy{\lones}}}}
	\normred^{m+1} 
	\vlinf{\zdr}{}{A}{\lzero}
	\qquad
	\od{
		\odd{\odi{\odi{\odd{\odh{B'}}{\dD_B}{B}{\SDDI}}{\pur}
				{\od{\odd{\odh{B}}{\dD_1}{A'}{\SDDI}} \lplus \od{\odd{\odh{B}}{\dD_2}{A'}{\SDDI}}}
				{}
			}{\pdr}{A'}{}
		}{\dD_A}{A}{\SDDI}}
	\normred^{k}
	\od{
		\odi{\odi{\odh{B'}}{\pur}{
				\od{
					\odd{\odh{B'}}{\dD_B \circ \dD_1 \circ \dD_A}{A}{\SDDI}
				}
				\lplus
				\od{
					\odd{\odh{B'}}{\dD_B \circ \dD_2 \circ \dD_A}{A}{\SDDI}
				}
			}{}}{\pdr}{A}{}
	}
	$$
	where $m=\sizeof{\dD_1}+\sizeof{\dD_2}$ and $k=\sizeof{\dD_A}+\sizeof{\dD_B}$ ($\sizeof{\dD}$ is the number of inference rules in the derivation $\dD$).
\end{corollary}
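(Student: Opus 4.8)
The plan is to obtain all three reductions by iterating the elementary rule permutations collected in \eqref{eq:sumZeroPerm} (established in \Cref{lemma:sumPermutations}), driving the sum/unit rules $\pur,\pdr,\zur,\zdr$ to the top or the bottom of the derivation and counting the absorbed or duplicated rules. I would treat the third reduction first, since it is the cleanest, and then the two involving $\lzero$.

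For the third reduction (slice duplication) the shared prefix $\dD_B$ and suffix $\dD_A$ are pushed into the two $\lplus$-branches. Concretely, I move $\pur$ upwards through $\dD_B$ one rule at a time: each step is an instance of the permutation turning ``$\rules$ then $\pur$'' into ``$\pur$ then $\rules\lplus\rules$'', so after $\sizeof{\dD_B}$ steps the $\pur$ sits at the very top (premise $B'$) and $\dD_B$ has been copied into both branches. Dually, I move $\pdr$ downwards through $\dD_A$, each step being an instance of ``$\pdr$ then $\ruler$'' $\normred$ ``$(\ruler\lplus\ruler)$ then $\pdr$'', so that after $\sizeof{\dD_A}$ further steps $\dD_A$ has been copied into both branches and $\pdr$ reaches the bottom. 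This yields the right-hand side in $k=\sizeof{\dD_A}+\sizeof{\dD_B}$ steps; the two halves are independent, so their order is irrelevant.

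For the first two reductions I first invoke \Cref{rmk:additive}: since $\ctxp{\,}$ is a $\MELL$ context, $\ctxp{\lzero}\fequiv\lzero$, so the derivation literally passes through $\lzero$ in the middle. The crucial observation is that $\lzero$ is absorbing for every connective, while every rule other than $\zur$ and $\zdr$ has $\MELL$ (hence non-zero) principal formulas; consequently no rule but $\zur$ can send a formula $\not\fequiv\lzero$ to one $\fequiv\lzero$, and no rule but $\zdr$ can recover a formula $\not\fequiv\lzero$ from one $\fequiv\lzero$. Since in the first reduction the premise $B$ and the conclusion $A$ are $\MELL$ formulas while the midpoint is $\fequiv\lzero$, this forces $\dD_2$ to contain a $\zur$ and $\dD_1$ to contain a $\zdr$ separating the ``zero'' region from the two non-zero ends. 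I then push that $\zur$ upward through $\dD_2$ using ``$\rules$ then $\zur$'' $\normred$ ``$\zur$'' (which absorbs $\rules$), and push that $\zdr$ downward through $\dD_1$ using ``$\zdr$ then $\rulet$'' $\normred$ ``$\zdr$'' (which absorbs $\rulet$), any $\zdr$ immediately followed by a $\zur$ inside the zero region cancelling by the corresponding permutation of \eqref{eq:sumZeroPerm}. An induction on $\sizeof{\dD_1}+\sizeof{\dD_2}$ then collapses $\dD_2$ to the single $\zur$ from $B$ to $\lzero$ and $\dD_1$ to the single $\zdr$ from $\lzero$ to $A$, giving the right-hand side in $m$ steps. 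The second reduction is the same argument with premise $\lones$ instead of a $\MELL$ formula $B$: the single extra step accounts for the collapse on the premise side, where $\lones$ rather than a single $\MELL$ formula sits, and it is exactly this step that explains the $m+1$ count.

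The main obstacle is not the mechanics of the permutations but the bookkeeping forced by $\fequiv$ and the absorbing behaviour of $\lzero$. I must argue carefully that, up to $\fequiv$, the only rules crossing the boundary between formulas $\fequiv\lzero$ and proper $\MELL$ formulas are $\zur$ and $\zdr$ — including when such a rule fires deep inside a context, where I use that $\ctxp{X}\fequiv\lzero$ either forces $\lzero$ already in $\ctxp{\,}$ or propagates a zero hole outward — so that the central $\lzero$ can genuinely be isolated. I must also verify that each absorption or cancellation accounts for exactly the claimed number of $\normred$ steps, so that the totals come out to $m$, $m+1$ and $k$. The premise $\lones$ in the second reduction is the one genuinely delicate point, since it is not a $\MELL$ formula for $n\ge 2$ and must be handled through its $\lplus$-structure; this is where the extra step is spent.
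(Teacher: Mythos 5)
You take the route the paper intends: the paper gives no proof of this corollary, stating it as a direct consequence of \Cref{lemma:sumPermutations}, and your plan of iterating the permutations of \eqref{eq:sumZeroPerm} is exactly that intended argument. Your treatment of the third reduction is correct and yields precisely the claimed count $k$: one duplication step per rule of $\dD_B$ crossed by $\pur$, one per rule of $\dD_A$ crossed by $\pdr$ (granting that no cancellation of \eqref{eq:sumZeroPerm} fires along the way).

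The two $\lzero$-reductions, however, contain genuine gaps, and the decisive one is the second reduction. Every step in \eqref{eq:sumZeroPerm} (like every rule permutation in the paper's sense) preserves both the premise and the conclusion of the derivation it rewrites, so no sequence of such steps can turn a derivation with premise $\lones$ into the single rule $\zdr$, whose premise is $\lzero$; since $\lones \not\fequiv \lzero$ for $n \geq 1$, the claimed reduct is unreachable, and your ``single extra step'' that supposedly ``accounts for the collapse on the premise side'' is a step that fails --- one step could not collapse the $n$ summands of $\lones$ anyway. What is needed is either to prove the reachable variant $\lones \provevia{\set{\zur,\pdr}} \lzero \provevia{\set{\zdr}} A$ and then discard the initial segment (which is how the paper implicitly uses the corollary in \Cref{sect:commute}), or to flag explicitly that the statement, read literally, is not derivable by premise-preserving permutations; your proposal does neither. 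Two smaller points. First, your count for the first reduction cannot come out to exactly $m$: each absorption deletes one rule, each cancellation two, and the target retains a $\zur$ and a $\zdr$, so your own procedure takes at most $\sizeof{\dD_1}+\sizeof{\dD_2}-2$ steps (indeed for $\dD_2=\zur$ and $\dD_1=\zdr$ the source already equals the target, to which no step of \eqref{eq:sumZeroPerm} applies, yet $m=2$); so $m$ is at best an upper bound, and the verification you defer cannot succeed as stated. Second, your absorption scheme never removes a rule firing strictly between the chosen $\zur$ and $\zdr$, i.e.\ deep inside an annihilating ($\fequiv\lzero$) context: $\zur$ only absorbs rules above it and $\zdr$ only rules below it, so such interior rules survive your procedure, and your boundary-crossing analysis does not cover them.
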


Actually, we can further structure $\SDDIs$ derivations so as to separate an initial up-segment and a final down-segment (\Cref{thm:DDInorm}). 
To prove this, we use the two following lemmas.

	\begin{figure}[t!]
	\begin{center}\scriptsize
		$
		\vlderivation{\vlin{\ocdur}{}{A}{\vlin{\wnddr}{}{\wn A}{\vlhy{A}}}}
		\normred
		\vlderivation{A}
		$
		\qquad
		$
		\vlderivation{\vlin{\wndur}{}{A}{\vlin{\ocddr}{}{\oc A}{\vlhy{A}}}}
		\normred
		\vlderivation{A}
		$
		\qquad
		$
		\vlderivation{\vlin{\ocwur}{}{\lbot}{\vlin{\wnwdr}{}{\wn A}{\vlhy{\lbot}}}}
		\normred
		\vlderivation{\lbot}
		$
		\qquad
		$
		\vlderivation{\vlin{\wnwur}{}{\lone}{\vlin{\ocwdr}{}{\oc A}{\vlhy{\lone}}}}
		\normred
		\vlderivation{\lone}
		$
		\qquad
		$
		\vlderivation{\vlin{\ocdur}{}{A}{\vlin{\wnwdr}{}{\wn A}{\vlhy{\lbot}}}}
		\normred
		\vlderivation{\vlin{\zdr}{}{A}{\vlin{\zur}{}{\lzero}{\vlhy{\lbot}}}}
		$
		\qquad
		$
		\vlderivation{\vlin{\wndur}{}{A}{\vlin{\ocwdr}{}{\oc A}{\vlhy{\lone}}}}
		\normred
		\vlderivation{\vlin{\zdr}{}{A}{\vlin{\zur}{}{\lzero}{\vlhy{\lone}}}}
		$

		$
		\vlderivation{\vlin{\ocdur}{}{A}{\vlin{\wncdr}{}{\wn A}{\vlhy{\wn A \lpar \wn A}}}}
		\normred
		\vlderivation{\vlin{\pdr}{}{A}{\vlin{\pur}{}
				{
					\od{\odn{\odn{\wn A}{\ocdur}{A}{}\lpar \odn{\wn A}{\wnwur}{\lbot}{}}{\fequiv}{A}{}}
					\lplus
					\od{\odn{\odn{\wn A}{\wnwur}{\lbot}{}\lpar \odn{\wn A}{\ocdur}{A}{}}{\fequiv}{A}{}}
				}
				{\vlhy{\wn A \lpar \wn A}}}}
		$
		\qquad
		$
		\vlderivation{\vlin{\wndur}{}{A}{\vlin{\occdr}{}{\oc A}{\vlhy{\oc A \ltens \oc A}}}}
		\normred
		\vlderivation{\vlin{\pdr}{}{A}{\vlin{\pur}{}
				{
					\od{\odn{\odn{\oc A}{\wndur}{A}{}\ltens \odn{\oc A}{\ocwur}{\lone}{}}{\fequiv}{A}{}}
					\lplus
					\od{\odn{\odn{\oc A}{\ocwur}{\lone}{}\ltens \odn{\oc A}{\wndur}{A}{}}{\fequiv}{A}{}}
				}
				{\vlhy{\oc A \ltens \oc A}}}}
		$
		
		$
		\vlderivation{\vlin{\ocwur}{}{\lbot}{\vlin{\wncdr}{}{\wn A}{\vlhy{\wn A \lpar \wn A}}}}
		\normred
		\od{\odh{
				\od{\odi{\odh{\wn A}}{\ocwur}{\lbot}{}} \lpar \od{\odi{\odh{\wn A}}{\ocwur}{\lbot}{}}
		}}
		$
		\qquad
		$
		\vlderivation{\vlin{\wnwur}{}{\lone}{\vlin{\occdr}{}{\oc A}{\vlhy{\oc A \ltens \oc A}}}}
		\normred
		\od{\odh{
				\od{\odi{\odh{\oc A}}{\wnwur}{\lone}{}} \ltens \od{\odi{\odh{\oc A}}{\wnwur}{\lone}{}}
		}}
		$
		
		$
		\vlderivation{\vlin{\occur}{}{\wn A \lpar \wn A}{\vlin{\wncdr}{}{\wn A}{\vlhy{\wn A \lpar \wn A}}}}
		\normred
		\od{
			\odi{\odh{
					\odn{\wn A}{\occur}{\bclr{\wn A} \lpar \bclr{\wn A}}{} \lpar \odn{\wn A}{\occur}{\rclr{\wn A }\lpar \rclr{\wn A}}{}
			}}
			{\fequiv}{\odn{\bclr{\wn A }\lpar \rclr{\wn A}}{\wncdr}{\wn A}{} \lpar \odn{ \rclr{\wn A}    \lpar \bclr{\wn A}}{\wncdr}{\wn A}{}}{}
		}
		$
		\qquad
		$
		\vlderivation{\vlin{\wncur}{}{\oc A \ltens \oc A}{\vlin{\occdr}{}{\oc A}{\vlhy{\oc A \ltens \oc A}}}}
		\normred
		\od{
			\odi{\odh{
					\odn{\oc A}{\wncur}{\bclr{\oc A} \ltens \bclr{\oc A}}{} \ltens \odn{\oc A}{\wncur}{\rclr{\oc A }\ltens \rclr{\oc A}}{}
			}}
			{\fequiv}{\odn{\bclr{\oc A }\ltens \rclr{\oc A}}{\occdr}{\oc A}{} \ltens \odn{ \rclr{\oc A}    \ltens \bclr{\oc A}}{\occdr}{\oc A}{}}{}
		}
		$
		\qquad
		
	\end{center}
	\caption{Non-trivial rule permutations for $\oc$ and $\wn$ by $\set{\zur, \zdr, \pur, \pdr}$ in $\SDDI$.
}
	\label{fig:modRulesPerm}
\end{figure}

\begin{lemma}[Permutations of rules for $\oc$ and $\wn$]
	\label{lemma:modupVSmoddown}
		In $\SDDI$, the following rule permutations hold:
		\begin{enumerate}
			\item\label{item:INperm} \emph{Interaction-net permutations:} The rules in   $\set{\ocdur, \wndur,\ocwur,\wnwur,\occur,\wncur}$ permute over the rules in $\set{\ocddr, \wnddr, \ocwdr,\wnwdr,\occdr,\wncdr}$ by the rules in $\set{\pdr,\pur,\zdr,\zur}$;

			\item\label{item:INpermdown} 	The rules in $\set{\wnddr,\ocddr, \wnwdr, \ocwdr, \wncdr,\occdr}$ permutes under any rule in $\set{\aiur,\aidr,\swir}$;

			\item\label{item:INpermup} 	The rules in $\set{\wndur,\ocdur, \wnwur, \ocwur, \wncur,\occur}$ permutes over any rule in $\set{\aiur,\aidr,\swir}$.
			
		\end{enumerate}
\end{lemma}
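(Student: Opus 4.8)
The plan is to establish all three items by a case analysis on the two consecutive rules, cutting down the workload with the two evident symmetries of $\SDDI$: the \emph{$\oc/\wn$-swap} (which simultaneously exchanges $\oc$ with $\wn$, $\ltens$ with $\lpar$ and $\lone$ with $\lbot$, sending each rule to its modal mate, e.g.\ $\occur$ to $\wncur$) and the \emph{up/down duality} (which flips a derivation upside down and replaces every rule by its dual, exchanging $\SDDIdown$ and $\SDDIup$ while fixing $\swir$ and the set $\set{\pdr,\pur,\zdr,\zur}$ up to relabelling). Each symmetry preserves the shape of the configurations occurring in a given item, so it suffices to verify one representative per orbit.

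For item~\ref{item:INperm} I would first isolate the genuinely interacting configurations: a down-rule sitting immediately above an up-rule yields a redex only when the exponential formula created by the down-rule is the one consumed by the up-rule, i.e.\ the $\wn$-introducing down-rules $\set{\wnddr,\wnwdr,\wncdr}$ paired with the $\wn$-consuming up-rules $\set{\ocdur,\ocwur,\occur}$, and dually on the $\oc$ side. This leaves, up to the $\oc/\wn$-swap and the up/down duality, exactly the pairs dereliction/dereliction, weakening/weakening, dereliction/weakening, dereliction/contraction, weakening/contraction and contraction/contraction, which are the reductions displayed in \Cref{fig:modRulesPerm}. Each is checked by a direct rewriting using the rule definitions together with $\fequiv$ (chiefly distributivity of the connectives over $\lplus$ and absorption of $\lzero$). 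The annihilations (dereliction/dereliction, weakening/weakening) and the mismatches (dereliction/weakening, producing $\lzero$ via a $\zur$ above a $\zdr$) are immediate; the substantial cases are those involving a contraction, where $\pur$ and $\pdr$ must be inserted to record the \emph{non-deterministic choice} (dereliction/contraction splits into a sum of two slices) or to \emph{duplicate} a weakening, and the contraction/contraction step, which realises the ``all possible connections'' by applying the up-contraction twice above, rearranging the four copies by $\fequiv$, and applying the down-contraction twice below.

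For items~\ref{item:INpermdown} and~\ref{item:INpermup}, which are dual to each other under the up/down symmetry, I would argue that every such permutation is \emph{trivial} in the open-deduction sense: the principal formula of an exponential rule (an occurrence of $\oc\lD$, $\wn\lD$, $\oc\lD\ltens\oc\lD$ or $\wn\lD\lpar\wn\lD$) and the principal formula of a rule in $\set{\aiur,\aidr,\swir}$ (respectively a $\ltens$ of dual atoms, a $\lone$, or the pattern $(A\ltens B)\lpar C$) cannot overlap, so the two rules always act on disjoint subformulas of the common context and hence commute. The only point to verify is that no configuration forces the two redexes to coincide\,---\,e.g.\ a down-rule cannot consume the very $\lone$ introduced by $\aidr$, and $\swir$ cannot act on a $\lpar$ (or $\ltens$) that a $\wncdr$ (or $\occdr$) above it has already contracted, since afterwards the required connective is no longer present\,---\,which follows by inspecting the few shapes of the principal formulas.

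The main obstacle will be item~\ref{item:INperm}, and within it the contraction/contraction and dereliction/contraction cases: here one must check not only that the displayed derivations have the correct premise and conclusion, but that the $\fequiv$-rearrangements (commutativity and associativity of $\lpar$ and $\ltens$, distributivity over $\lplus$, neutrality of $\lbot$ and $\lone$, absorption of $\lzero$) are applied soundly, and that the resulting derivation indeed has the up-rule(s) above the down-rule(s)\,---\,the orientation required for the standard form targeted in \Cref{sect:normalization}. Everything else reduces, via the two symmetries and the disjointness argument, to these computations.
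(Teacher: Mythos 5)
Your treatment of item 1 is essentially the paper's proof: the paper verifies exactly the interacting pairs you isolate, via the reductions of \Cref{fig:modRulesPerm} (annihilation for dereliction/dereliction and weakening/weakening, a $\zur$ above a $\zdr$ for the mismatches, sums via $\pur$/$\pdr$ for dereliction/contraction, duplication of the weakening for weakening/contraction, and the double application of contractions rearranged by $\fequiv$ for contraction/contraction), closing the remaining pairs by up/down duality. One small correction: in the weakening/contraction case no $\pur$/$\pdr$ are inserted; the weakening is simply duplicated horizontally and $\fequiv$ absorbs $\lbot \lpar \lbot$ (dually $\lone \ltens \lone$).

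The genuine gap is in your items 2 and 3. The claim that the two rules ``always act on disjoint subformulas'', so that all these permutations are trivial, is false, because rules apply deep: the redex of the lower rule can be nested \emph{inside} the modal formula introduced by the exponential rule above it. Concretely, $\lone$ can be rewritten by $\wnddr$ to $\wn\lone$, and then $\aidr$ can be applied \emph{under} the $\wn$, yielding $\wn(a \lpar \cneg a)$; exchanging the order changes the instance of $\wnddr$ (its premise becomes $a \lpar \cneg a$ instead of $\lone$), which is not an identification of the open-deduction syntax. Worse, when the rule above is $\wncdr$, pushing it below $\aidr$ requires \emph{two} copies of $\aidr$ (one in each $\wn\lone$ of the premise), and when it is $\wnwdr$, the $\aidr$ is \emph{erased} (the premise $\lbot$ contains no $\lone$ at all); a rewriting that changes the number of rule occurrences cannot possibly be a syntactic identification. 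The same nesting arises with $\swir$: when the conclusion $\wn A$ or $\oc A$ of the exponential rule is the $A$ of the switch redex $A \ltens (B \lpar C)$, one needs the non-trivial permutation \eqref{eq:switchperm}. These nested configurations are precisely the content of the paper's proof of item 2 (item 3 then follows by up/down duality), and they matter beyond this lemma: together with \Cref{fig:modRulesPerm} they constitute the non-trivial normalization steps of \Cref{def:normalization} on which the Decomposition Theorem rests\,---\,if all the permutations of items 2 and 3 were trivial, normalization could never actually reorder a derivation into its segments. Your proof, by asserting that such overlaps cannot occur, omits these cases entirely.
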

\begin{proof}\hfill
	\begin{enumerate}
		\item By the (non-trivial) rule permutations  in  \Cref{fig:modRulesPerm} or by their duals obtained by up/down symmetry.
		\item 
		First, note that all rule permutations involving $\aidr$ in $\set{\wnddr,\ocddr, \wnwdr, \ocwdr, \allowbreak \wncdr,\occdr}$ are trivial.
		Moreover, any $\rho\in \set{\wnddr,\ocddr, \wnwdr, \ocwdr, \allowbreak\wncdr,\occdr}$ permutes under $\swir$ as follows.
		\begin{equation}\label{eq:switchperm}
		\od{
			\odi{\odh{\od{\odi{\odh{A'}}{\ruler}{A}{}} \ltens ((B\lpar C))}}{\swir}{((A\ltens B))\lpar C}{}
		}
		\normred
		\od{
			\odi{\odh{A' \ltens ((B\lpar C))}}{\swir}{((\od{\odi{\odh{A'}}{\ruler}{A}{}}\ltens B))\lpar C}{}
		}
		\end{equation}
		We conclude by the following rule permutations for $\wnddr$, $\wncdr$ and $\wnwdr$ (permutations for $\ocddr$, $\occdr$ and $\ocwdr$ are defined similarly).
	\end{enumerate}
		$$
		\footnotesize
		\od{\odi{\odh{\lone}}{\wnddr\!}{\wn \; \od{\odi{\odh{\lone}}{\aidr\!}{\la\lpar \cneg \la}{}}}{}}
		\normred
		\od{\odi{\odi{\odh{\lone}}{\aidr\!}{\la\lpar \cneg \la}{}}{\wnddr\!}{\wn((\la\lpar \cneg \la))}}
		\quad
		\od{
			\odi
			{\odh{\wn \lone \lpar \wn \lone}}
			{\wncdr\!}
			{\wn \; \od{\odi{\odh{\lone}}{\aidr\!}{\la\lpar \cneg \la}{}}}
			{}
		}
		\normred
		\od{
			\odi
			{
				\odh{
					\wn\; 
					\od{\odi{\odh{\lone}}{\aidr\!}{\la \lpar \cneg \la}{}}
					\lpar \wn \; 
					\od{\odi{\odh{\lone}}{\aidr\!}{\la\lpar \cneg \la}{}}
				}
			}
			{\wncdr\!}
			{\wn ((\la\lpar \cneg \la))}
			{}
		}
		\quad
		\od{
			\odi
			{\odh{\lbot}}
			{\wnwdr\!}
			{\wn \; \od{\odi{\odh{\lone}}{\aidr\!}{\la\lpar \cneg \la}{}}}
			{}
		}
		\normred
		\od{
			\odi
			{\odh{\lbot}}
			{\wnwdr\!}
			{\wn ((\la\lpar \cneg \la))}
		}
		$$
		
	\begin{enumerate}
		\item[3.] It can be obtained dually from \Cref{item:INpermdown}, by the up/down symmetry of rules.
		\qedhere
	\end{enumerate}
\end{proof}

Permutations in \Cref{fig:modRulesPerm} and their duals correspond to the cut-elimination steps for modalities $\wn$ and $\oc$ in the \emph{interaction-nets} presentation of $\SDill$, see \cite[Fig. 4]{Pagani09} and \cite{EhrhardRegnier06,Tranquilli09,Gimenez11}.
They take place when a down-rule for $\oc$ meets an up-rule for $\wn$, or vice-versa, and deal with the \emph{resource management} (see \Cref{sec:DiLL}).
Akin to interaction-nets and unlike the sequent calculus, these permutations on $\SDDI$ are perfectly symmetric.
Note the key role of the rules $\pur$, $\pdr$, $\zur$, $\zdr$ in some permutations. In particular,
\begin{itemize}
	\item formulas $\lzero$ appear when there is a mismatch between ``supply and demand'' ($\wnwdr/\ocdur$ and $\ocwdr/\wndur$),
	\item formulas with $\lplus$ appear when there is a non-deterministic choice on which request will be fed ($\wncdr/\ocdur$ and $\occdr/\wndur$).
\end{itemize}


\begin{lemma}[Linear permutations]
	\label{lemma:linearPerm}
	Let $A$ and $B$ be $\MELL$~formulas.
	\begin{enumerate}
		\item\label{MLLnorm} If  $B\provevia{\set{\aiur,  \aidr,  \swir}}A$ then $ B \provevia{\set{\aidr}} B'\provevia{\set{\swir}} A'\provevia{\set{\aiur}}A$
	for some $\MELL$~formulas $B'$ and $A'$.
	
		\item\label{cutEliMLL} If $\lone\provevia{\set{\aiur,  \aidr,  \swir}}A$ then $\lone \provevia{\set{\aidr}} A'\provevia{\set{\swir}} A$ for some $\MELL$ formula $A'$.
	\end{enumerate}
\end{lemma}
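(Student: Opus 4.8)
The plan is to prove both items by \emph{rule permutations}, working with the given derivation in its sequenced form (\Cref{rmk:deep}), so that inference steps are totally ordered and we can induct on their number. The target shape of the first item is the three-phase form $\aidr^\ast$, then $\swir^\ast$, then $\aiur^\ast$ (reading from premise to conclusion); that of the second is the two-phase form $\aidr^\ast$, then $\swir^\ast$, with all the atomic cuts removed.

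For the first item I would first establish the three elementary permutations that drive the sorting: $\aidr$ permutes over $\swir$, $\aidr$ permutes over $\aiur$, and $\swir$ permutes over $\aiur$ (equivalently, $\aiur$ permutes under $\swir$). The observation that makes all of these harmless is that $\aidr$ \emph{creates} a subformula of shape $a \lpar \cneg a$ (a $\lpar$), whereas $\aiur$ \emph{consumes} a subformula of shape $a \ltens \cneg a$ (a $\ltens$); hence an $\aidr$ and an $\aiur$ that are adjacent with no $\swir$ between them necessarily act on disjoint redexes and commute \emph{trivially} in the open-deduction sense. The permutations past $\swir$ are equally benign, since $\swir$ neither creates nor destroys occurrences of $\lone$ or $\lbot$ and only rearranges the $\ltens/\lpar$ skeleton of its (still $\MELL$) arguments: the unit on which an $\aidr$ fires, resp. the unit produced by an $\aiur$, already occurs in some structural position before, resp. after, the switch. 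None of these permutations duplicates a rule, so a standard sorting argument, with measure the number of adjacent out-of-order pairs, terminates and yields the three-phase form, establishing the first item.

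For the second item I would start from the three-phase form $\lone \provevia{\set{\aidr}} B' \provevia{\set{\swir}} A' \provevia{\set{\aiur}} A$ provided by the first item, and eliminate the final $\aiur$-block. Since the premise is the single unit $\lone$ and only $\aidr$ and $\swir$ act above it, up to $\fequiv$ the formula $B'$ is a $\ltens$ of atomic identity pairs $a_i \lpar \cneg a_i$ (the splittings $\lone \fequiv \lone \ltens \lone$ feed the $\aidr$'s, and the remaining $\fequiv$-steps only insert units), and the $\swir$'s merely relocate these pairs. Consequently a cut $\aiur$ can never consume both atoms of a single pair, as a $\lpar$ cannot become a $\ltens$: it always connects one atom of one pair with the dual atom of another. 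I would then remove cuts by a \emph{yanking}/composition step, processing a topmost $\aiur$: the tensor $a \ltens \cneg a$ it consumes is precisely the one presented by the $\swir$-block from two distinct identity pairs, and resolving the cut composes these two pairs into a single identity link, absorbing the switches that formed the tensor and strictly decreasing the number of atomic cuts. Induction on this number removes every $\aiur$ and leaves a derivation $\lone \provevia{\set{\aidr}} A'' \provevia{\set{\swir}} A$.

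The main obstacle is exactly this last composition step: it is (atomic) cut-elimination for $\MLL$ in deep inference. The real work is to verify that each atomic cut can always be resolved against the two identities supplying its atoms, that the required repositioning of the resulting link is achievable purely by $\swir$ (together with $\fequiv$), that the conclusion $A$ is preserved throughout, and that the chosen measure genuinely decreases so the procedure terminates. Particular care is needed where $\fequiv$ silently supplies or absorbs the units $\lone$ and $\lbot$, and where a $\swir$ is what forms the tensor redex presented to a cut, since there the cut cannot simply be permuted upward but must interact with that switch.
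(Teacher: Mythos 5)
Your proposal follows essentially the same route as the paper's own (sketched) proof: Item~1 by the non-trivial permutations of $\aidr$ over $\swir$ and, dually, $\aiur$ under $\swir$ (the adjacent $\aidr$/$\aiur$ case being trivial in open deduction), and Item~2 by observing that above $\lone$ the $\aidr$-phase yields a tensor of shallow identity pairs $a_i \lpar \cneg a_i$, so each $\aiur$ necessarily cuts atoms from two distinct pairs and can be eliminated by exactly the ``yanking'' configuration (two $\aidr$, two $\swir$, one $\aiur$ replaced by a single $\aidr$) that the paper exhibits, with induction on the number of $\aiur$'s. The residual verification work you flag (termination, preservation of the conclusion, the role of $\fequiv$ in supplying units) is likewise left implicit in the paper's proof sketch.
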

\begin{proof}[Proof (sketch)]
	This is a standard result in deep inference systems. 
	Nowadays, it is usually proved via splitting \cite{TubellaGuglielmi18,tubella:phd,tub:str:esslli19}, as it is a consequence of cut-elimination, but the hypotheses to apply the splitting technique do not hold in $\SDDI$. 
	In  \cite{guglielmi1999calculus,str:phd,str:MELLinCoS,gug:SIS}, which were written before the splitting technique was found, \Cref{MLLnorm,cutEliMLL} are proved using some sort of rule permutations.
	To prove \Cref{MLLnorm} it is enough to use the non-trivial rule permutations of $\aidr$ over $\swir$ shown in \eqref{eq:MLLcut-elim} below, and the dual rule permutations of $\aiur$ under $\swir$ obtained from \eqref{eq:MLLcut-elim} by up/down symmetry.
	\begin{equation}\label{eq:MLLcut-elim}
		\small
		\od{
			\odi{\odh{\lone \ltens ((B\lpar C))}}{\swir}{((\od{\odi{\odh{\lone}}{\aidr}{a\lpar \cneg a}{}}\ltens B))\lpar C}{}
		}
		\normred
		\od{
			\odi{\odh{\od{\odi{\odh{\lone}}{\aidr}{a\lpar \cneg a}{}} \ltens ((B\lpar C))}}{\swir}{(((a\lpar \cneg a)\ltens B))\lpar C}{}
		}
		\quad
		\quad
		\od{
			\odi{\odh{A \ltens ((B\lpar \lone))}}{\swir}{((A \ltens B))\lpar \od{\odi{\odh{\lone}}{\aidr}{a\lpar \cneg a}{}}}{}
		}
		\normred
		\od{
			\odi{\odh{A \ltens ((B\lpar \od{\odi{\odh{\lone}}{\aidr}{a\lpar \cneg a}{}}))}}{\swir}{((A\ltens B))\lpar \lone}{}
		}
	\end{equation}

	To have an intuition for the proof of \Cref{cutEliMLL}, it is enough to remark that if $\lone \provevia{\set{\aidr}} A'$, then there is a derivation of $A'$ with \emph{shallow} $\aidr$, that is, with $\aidr$ applied only in $\ltens$-context as the one below on the left:
	\vspace{-5pt}$$
	\small
	\begin{array}{c@{\qquad\qquad}|@{\qquad\qquad}c}
	\vlder{}{\set{\swir}}{A'}{
	\od{\odn{{\lone}}{\aidr}{a_1\lpar \cneg a_1}{}}
	\ltens\cdots\ltens
	\od{\odn{{\lone}}{\aidr}{a_n\lpar \cneg a_n}{}}
 	}
	&
	{\od{
		\odi{
			\odi{
				\odi{\odh{\lone}}{\fequiv}{
					\odn{\lone}{\aidr}{a\lpar \cneg a}{} \ltens {\odn{\lone}{\aidr}{a\lpar \cneg a}{}} 
				}{}
				}{2\times \swir}{a\lpar \odn{\cneg a\lpar a}{\aiur}{\lbot}{} \lpar \cneg a}{}
		}{\fequiv}{a\lpar \cneg a}{}
	}}
	\normred
	\vlinf{\aidr}{}{a\lpar \cneg a}{\lone}
	\end{array}\vspace{-15pt}
	$$
	Hence by \Cref{MLLnorm}, if $\lone\provevia{\set{\aiur,  \aidr,  \swir}}A$ then $A$ is provable by starting from shallow $\aidr$ rules; 
	and if there is a rule $\aiur$, 
	then there is at least one $\aiur$ that can be permuted up in the derivation, 
	until we can obtain a configuration as the one above on the right,
	which can be replaced by a rule $\aidr$.
\end{proof}

Rule permutations involved in the proof of \Cref{lemma:linearPerm}.\ref{cutEliMLL}
essentially correspond to \emph{$\MLL$ cut-elimination steps} in the $\SDill$ sequent calculus, indeed modalities $\oc$ and $\wn$ do not play any active role there.

\begin{definition}[Normalization step]
	\label{def:normalization}
	Any rewrite relation on $\SDDI$ derivations that is a non-trivial rule permutation used in the proofs of
	\Cref{lemma:sumPermutations,lemma:modupVSmoddown,lemma:linearPerm} is a \emph{normalization step} and denoted by $\normred$.
\end{definition}

Normalization steps 
rearrange rules in a $\DDIdown$ or $\SDDIs$ derivation in a fixed order, leaving unchanged its premise and conclusion.
So, derivability in $\DDIdown$ and $\SDDIs$ can be decomposed in~several~segments.
More precisely, every derivation in $\SDDIs$ can be rearranged in a symmetrical way so that:
\begin{enumerate}
	\item on the top there is an \emph{up-segment} where:
	\begin{enumerate}
		\item the first part consists of rules $\zur$ and $\pur$, which decompose the derivation in vertical slices;
		\item the second part consists of up-rules for $\oc$ and $\wn$, which deal with non-linear resources; 
	\end{enumerate}
	\item in the middle there is a \emph{linear segment}, roughly corresponding to $\MLL$ and to linear resources;
	\item on the bottom there is a \emph{down-segment} where:
		\begin{enumerate}
			\item the first part consists of down-rules for $\oc$ and $\wn$, which deal with non-linear resources; 
		\item the second part consists of rules $\zdr$ and $\pdr$, which merge the vertical slices of the derivation.
	\end{enumerate}
\end{enumerate}
The decomposition in $\DDIdown$ follows the same pattern but takes only down-rules, so there is \mbox{no up-segment}.

\begin{theorem}[Decomposition]
	\label{cor:decompose}
	\label{thm:DDInorm}	\label{lemma:SDDInorm}
	Let $A$ and $B$ be formulas. 
	
	\begin{enumerate}
		\item\label{p:decompose-DDI} \emph{$\DDIdown$-decomposition}: If $\deriv{\dD}{\lones} {\DDIdown}{A}$, then (for some additive normal formulas $A',A'',A'''$) there is a derivation $\dD'$ in $\SDDIdown$ such that $\dD \normred^* \dD'$ and
		$$
		\deriv{\dD'} {\lones}{\set{\aidr }}{{A'''}\provevia{\set{\swir}} A'' \provevia{\set{\wnddr, \ocddr,\wnwdr, \ocwdr,\wncdr, \occdr}}A' \provevia{\set{\zdr, \pdr}} A}.
		$$

		\item\label{p:decompose-SDDI} \emph{$\SDDI$-decomposition}:
		If $\deriv{\dD}{B}{\SDDIs}{A}$, then there is a derivation $\dD'$ (called \emph{standard}) in $\SDDIs$ from $B$ to $A$ 
		such that $\dD \normred^* \dD'$ and (for some additive normal  formulas $B',B'',B''',A''',A'',A'$):	
		$$
		\deriv{\dD'}
		{B}
		{\set{\zur, \pur}}
		{B'
		\provevia{\set{ \wndur, \ocdur,\ocwur, \wnwur, \wncur, \occur}}B''
		\provevia{\set{\aidr }}B'''
		\provevia{\set{\swir}}A'''
		\provevia{\set{\aiur}}A''
		\provevia{\set{\wnddr, \ocddr,\wnwdr,\ocwdr, \wncdr, \occdr}}A' 
		\provevia{\set{\zdr, \pdr}} A}.
		$$
	\end{enumerate}
\end{theorem}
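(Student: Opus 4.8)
The plan is to read the target shape in \Cref{p:decompose-SDDI} as a fixed \emph{layering} of the rule kinds, from top to bottom: (1)~$\{\zur,\pur\}$, (2)~the up-rules for $\oc,\wn$, (3)~$\aidr$, (4)~$\swir$, (5)~$\aiur$, (6)~the down-rules for $\oc,\wn$, (7)~$\{\zdr,\pdr\}$ (for \Cref{p:decompose-DDI} only layers $3,4,6,7$ are populated). A derivation is standard exactly when, read as a sequence (\Cref{rmk:deep}), the layer index never decreases going downwards. I would then run a \emph{bubble sort}: repeatedly locate an adjacent \emph{inversion}, i.e.\ a rule of layer $i$ immediately above a rule of layer $j<i$, and remove it by a normalization step. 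The crucial bookkeeping is that the three permutation lemmas jointly cover \emph{every} inversion: \Cref{lemma:sumPermutations} bubbles each $\zur,\pur$ above, and each $\zdr,\pdr$ below, any other rule (layers $1$ and $7$); \Cref{item:INperm} removes a down-exponential sitting above an up-exponential (layer $6$ over $2$); \Cref{item:INpermup} and \Cref{item:INpermdown} move up-exponentials above, and down-exponentials below, the linear rules $\aiur,\aidr,\swir$ (layers $3,4,5$); and \Cref{MLLnorm} reorders a block of linear rules into $\aidr$-then-$\swir$-then-$\aiur$. Each step leaves premise and conclusion fixed, so a terminating run produces the desired $\dD'$ with $\dD\normred^*\dD'$.

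For \Cref{p:decompose-DDI} this program is clean, because $\DDIdown$ has no up-rule, so layers $2$ and $5$ are empty and \Cref{item:INperm} never fires. Concretely I would first push all $\zdr,\pdr$ to the bottom using \Cref{lemma:sumPermutations}, leaving a $\DMELL$ derivation from $\lones$ to some $A'$ on top; then push every down-exponential below $\aidr$ and $\swir$ by \Cref{item:INpermdown}; and finally reorder the remaining $\{\aidr,\swir\}$ block by the $\aidr$-over-$\swir$ permutations used for \Cref{MLLnorm}. None of these permutations creates $\lplus$ or $\lzero$, and the only duplication is the bounded one a descending $\pdr$ produces in the first phase; hence every phase terminates and all intermediate formulas stay $\MELL$ (so trivially additive normal, invoking \Cref{rmk:additive} at the $\{\zdr,\pdr\}$ boundary).

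For \Cref{p:decompose-SDDI} I would add the dual moves: push $\zur,\pur$ to the top (\Cref{lemma:sumPermutations}), push up-exponentials above the linear rules (\Cref{item:INpermup}), and—crucially—above the down-exponentials (\Cref{item:INperm}); once no exponential is misplaced, the middle becomes a sum of linear \emph{slices} as organized by \Cref{lemma:sliceMan}, and I would sort each slice by \Cref{MLLnorm}, which applies since a slice runs between $\MELL$ formulas. The main obstacle is \textbf{termination}, and it is genuinely delicate here. Unlike the down-only case, the permutations $\wncdr/\ocdur$ and $\occdr/\wndur$ of \Cref{fig:modRulesPerm} create fresh $\pdr$/$\pur$ (and $\wnwdr/\ocdur$, $\ocwdr/\wndur$ create $\zdr$/$\zur$); pushing these new slice rules back to the boundary via \Cref{lemma:sliceMan} \emph{duplicates} the sub-derivations they cross, and can multiply down-exponentials still lying above up-exponentials, so the naive count of exponential inversions may transiently grow. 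I would therefore measure progress not by a single integer but by a well-founded lexicographic (or multiset) order in the style of \cite{str:phd,str:MELLinCoS}, designed so that every \Cref{item:INperm} step strictly decreases its leading component while the ensuing slice-propagation and linear-sorting steps leave that component untouched, the material duplicated by $\lplus$/$\lzero$ being copies that no longer contribute to it. Exhibiting such a measure, i.e.\ showing that this combined rewriting normalizes, is the heart of the proof and is exactly the counterpart of the termination of $\SDill$ cut-elimination (\Cref{thm:cutElimSequent}); once it is in place, \Cref{rmk:additive} lets me take the segment-boundary formulas $B',B'',B''',A''',A'',A'$ additive normal, completing the statement.
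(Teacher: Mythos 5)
Your proposal follows essentially the same route as the paper's proof: part~\ref{p:decompose-DDI} from \Cref{lemma:sumPermutations}, \Cref{lemma:modupVSmoddown}.\ref{item:INpermdown} and \Cref{lemma:linearPerm}.\ref{cutEliMLL}, and part~\ref{p:decompose-SDDI} by alternating \Cref{lemma:sumPermutations} and \Cref{lemma:modupVSmoddown} to reach the layered shape and then sorting the linear block with \Cref{lemma:linearPerm}.\ref{MLLnorm}. The termination difficulty you single out for part~\ref{p:decompose-SDDI} (the permutations of \Cref{fig:modRulesPerm} create $\pdr,\pur,\zdr,\zur$ whose propagation duplicates subderivations and can transiently multiply exponential inversions) is real, but the paper's proof does not resolve it either---it simply asserts that the alternation reaches the desired shape---so on this point you are, if anything, more careful than the paper.
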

\begin{proof}
		The decomposition of $\DDIdown$ derivations follows from \Cref{lemma:sumPermutations}, \Cref{lemma:modupVSmoddown}.\ref{item:INpermdown} and \Cref{lemma:linearPerm}.\ref{cutEliMLL}.
	
	To prove decomposition of $\SDDI$ derivations, we alternate applications of \Cref{lemma:sumPermutations,lemma:modupVSmoddown} until we obtain a derivation of the shape below.
	Then we conclude by applying \Cref{lemma:linearPerm}.\ref{MLLnorm}.
	\begin{equation*}
	B
	\provevia{\set{\zur, \pur}}B'
	\provevia{\set{ \wndur, \ocdur,\ocwur, \wnwur, \wncur, \occur}}B''
	\provevia{\set{\aidr,\swir,\aiur}}A''
	\provevia{\set{\wnddr, \ocddr,\wnwdr,\ocwdr, \wncdr, \occdr}}A' 
	\provevia{\set{\zdr, \pdr}} A.
	\qedhere
	\end{equation*}
\end{proof}



As a consequence, the up-fragment $\DDIup$ of $\SDDIs$ is \emph{superfluous} (\Cref{thm:SDDIcutElimInternal}): all that can be proved in $\SDDIs$, is already derivable in the down-fragment $\DDIdown$ of $\SDDIs$ by a standard derivation.
The existence of a standard derivation in $\DDIdown$ is obvious because the rule $\zdr$ makes every $\MELL$ formula derivable. 
The interesting part is that a standard derivation in $\DDIdown$ can be reached via normalization steps, hence in a computational way that is \emph{internal} to $\SDDI$.
Indeed, normalization of $\SDDIs$ derivations follows from $\SDDIs$ decomposition (\Cref{cor:decompose}.\ref{p:decompose-SDDI}), so it relies on the normalization steps defined on $\SDDIs$ derivations (\Cref{def:normalization}).
This normalization result is the deep inference version of \emph{cut-elimination}, since in $\DDIdown$ there is no analogue of the rule $\cutr$ ($\DDIdown$ is the ``cut-free'' fragment~of~$\SDDI$).

\begin{corollary}[Normalization]
	\label{thm:SDDIcutElimInternal}
	Let $A$ be a formula and $n \in \N$. 
	If $\deriv{\dD}{\lones}{\SDDIs}{A}$ then, for some $n' \in \N$, there exists a standard $\deriv{\dD'}{\lones\!}{\{\zur,\pur\}}{\lones' \provevia {\DDIdown}  A}$ such that $\dD \normred^* \dD'$. 
	In particular, $\lones' \provevia {\DDIdown}  A$ for some $n' \in \N$.
\end{corollary}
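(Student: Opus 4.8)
The plan is to obtain the statement directly from the $\SDDI$-decomposition (\Cref{cor:decompose}.\ref{p:decompose-SDDI}) applied with premise $\lones$, by checking that, in this special case, the up-segment for $\oc$/$\wn$ and the $\aiur$-segment both collapse. So first I would run \Cref{cor:decompose}.\ref{p:decompose-SDDI} on $\dD$ to get a standard $\dD''$ with $\dD \normred^* \dD''$ whose segments are, from top to bottom: an initial $\set{\zur,\pur}$-segment ending in an additive normal formula $B'$, an up-$\oc$/$\wn$ segment ending in $B''$, then an $\set{\aidr}$-segment, a $\set{\swir}$-segment and an $\set{\aiur}$-segment reaching $A''$, and finally a down-$\oc$/$\wn$ segment followed by a $\set{\zdr,\pdr}$-segment reaching $A$.

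Next I would pin down $B'$. Since $\pur$ and $\zur$ only reshape the $\lplus/\lzero$-structure of a formula, every intermediate formula of the $\set{\zur,\pur}$-segment is $\fequiv$ to some $\loness{k}$; as $B'$ is additive normal, this forces $B' = \lones'$ for some $n' \in \N$ (the case $n' = 0$ giving $B' = \lzero$). The formula $\lones'$ contains no $\oc$ nor $\wn$, while every rule in $\set{\wndur, \ocdur, \ocwur, \wnwur, \wncur, \occur}$ needs (reading top-down) a modality in its redex and none of them introduces one on a modality-free formula. Hence no such rule can fire, the up-$\oc$/$\wn$ segment is empty, and $B'' = B' = \lones'$.

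The crucial step is to erase the $\aiur$-segment. Here I would use that the linear part $\lones' \provevia{\set{\aidr,\swir,\aiur}} A''$ employs only rules that act inside $\MELL$ formulas and never create, delete, or cross a $\lplus$; reading the derivation in synchronal form, it is therefore a horizontal $\lplus$-composition $\dD_1 \lplus \cdots \lplus \dD_{n'}$ of per-slice derivations $\dD_i \colon \lone \provevia{\set{\aidr,\swir,\aiur}} A''_i$ with $A'' \fequiv A''_1 \lplus \cdots \lplus A''_{n'}$. Applying \Cref{lemma:linearPerm}.\ref{cutEliMLL} to each $\dD_i$ rewrites it by normalization steps into a derivation $\lone \provevia{\set{\aidr}} C_i \provevia{\set{\swir}} A''_i$ free of $\aiur$; recomposing the slices gives $\lones' \provevia{\set{\aidr}} C \provevia{\set{\swir}} A''$ with $C \fequiv C_1 \lplus \cdots \lplus C_{n'}$, so the $\aiur$-segment disappears and $A''' = A''$. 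Writing $\dD'$ for the resulting (still standard) derivation, we have $\dD \normred^* \dD'$, and everything below $\lones'$ now uses only rules of $\set{\aidr,\swir}\cup\set{\wnddr,\ocddr,\wnwdr,\ocwdr,\wncdr,\occdr}\cup\set{\zdr,\pdr} = \DDIdown$. Thus $\dD'$ has the required shape $\lones \provevia{\set{\zur,\pur}} \lones' \provevia{\DDIdown} A$, and in particular $\lones' \provevia{\DDIdown} A$.

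I expect the main obstacle to be precisely the slicing argument of the third paragraph: making rigorous that a $\set{\aidr,\swir,\aiur}$-derivation out of the sum $\lones'$ decomposes as an independent $\lplus$-sum of derivations from $\lone$, so that \Cref{lemma:linearPerm}.\ref{cutEliMLL} (stated only for premise $\lone$) applies slice-wise. This requires checking that none of the implicit $\fequiv$-steps available in the segment\,---\,in particular the distributivity of $\ltens$, $\lpar$, $\oc$, $\wn$ over $\lplus$\,---\,can bridge two different summands, which is what guarantees that the number of slices, and the per-slice boundaries, are preserved throughout the linear segment.
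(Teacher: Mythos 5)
Your proof is correct and takes essentially the same route as the paper's: apply the $\SDDI$-decomposition (\Cref{cor:decompose}.\ref{p:decompose-SDDI}), observe that the premise $\lones$ forces the $\set{\zur,\pur}$-segment to end in some $\lones'$ and the up-$\oc$/$\wn$-segment to be empty (no modality occurs in $\lones'$), and then eliminate the $\aiur$-segment via \Cref{lemma:linearPerm}.\ref{cutEliMLL}. The only difference is that you make explicit the slice-wise application of \Cref{lemma:linearPerm}.\ref{cutEliMLL} (which is stated for premise $\lone$, while here the premise is the sum $\lones'$)\,---\,a point the paper's proof glosses over\,---\,so your third paragraph is a refinement of, not a departure from, the published argument.
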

\begin{proof}
	By \Cref{lemma:SDDInorm}.\ref{p:decompose-SDDI}, if $\lones \provevia{\SDDIs} A$ then there is a standard derivation $$\lones \provevia {\set{\zur,\pur}} B \provevia{\set{ \wndur, \ocdur,\ocwur, \wnwur, \wncur, \occur}}B'
	\provevia{\set{\aidr,\swir,\aiur}}A''
	\provevia{\set{\wnddr, \ocddr,\wnwdr,\ocwdr, \wncdr, \occdr}}A' 
	\provevia{\set{\zdr, \pdr}} A.$$
	Moreover, $\lones \provevia {\set{\zur,\pur}} B$ implies that $B = \lones'$ for some $n' \in \N$.
	As no rule in $\set{\wnwur, \ocwur, \wncur, \occur, \wndur, \ocdur}$ can be applied to a formula of the form $\lones'$, we have $\lones' = B'$ and we conclude by \Cref{lemma:linearPerm}.\ref{cutEliMLL}.
\end{proof}

%



\section{Relation between Cut-elimination in \texorpdfstring{$\SDill$}{DiLL0} and Normalization in \texorpdfstring{$\SDDI$}{SDDI}}
\label{sect:commute}

%


In this section we investigate the correspondence between the cut-elimination procedure
in $\SDill$ sequent calculus (\Cref{sec:DiLL}) and the normalization procedure in
$\SDDI$ (\Cref{sect:normalization}).

We provided a translation $\toform{\cdot}$ of $\eta$-expanded $\SDill$ derivations to  derivations in $\DDIdown\cup\set{\iur}$ (\Cref{fig:seqToDeep}) and so in $\SDDI$ (via \Cref{lemma:idDer}). 
The translation preserves ``cut-freeness'' (\Cref{thm:SDilltoSDDIs}), and exhibits a one-to-one correspondence between weakening, contraction and dereliction rules of the two systems.
However, the translation $\toform{\cdot}$ 
does not commute with cut-elimination/\allowbreak normalization: an $\eta$-expanded derivation $\pi$ in $\SDill$ 
might reduce to a cut-free derivation $\widehat{\pi}$ via cut-elimination, but its translation $\toform{\pi}$ in $\SDDIs$ (including the transformation of the rules $\iur$ into $\aiur$, as described in \Cref{lemma:idDer}) normalizes to a $\DDIdown$ derivation $\widehat{\toform{\pi}}$ other than $\toform{\widehat{\pi}}$.
That is, diagram \eqref{eq:diagram} does~\emph{not~commute}:
\begin{equation}
\small
\label{eq:diagram}
\begin{tikzpicture}[x=30pt, y=15pt, baseline=(current  bounding  box.center)]
\node (v1) at (-1,0) {$\pi$};
\node (v3) at (1,0) {$ \toform\pi$};
\node (v4) at (1,-2) {$ \toform{\widehat\pi} \neq \widehat {\toform\pi}$};
\node (v2) at (-1,-2) {${\widehat\pi}$};
\draw[->] (v1) -- (v3) node [midway, below=.5pt, fill=white] {\tiny $\toform~$};
\draw[->]  (v2) -- (v4) node [midway, above=.5pt, fill=white] {\tiny $\toform~$};
\draw[->]  (v3) -- (v4) node [midway, right=4pt, fill=white] {\tiny $\mathsf{norm}$} node [very near end, right=.2pt] {\tiny $*$};
\draw[->]  (v1) -- (v2) node [midway, left=4pt, fill=white] {\tiny $\cutr$} node [very near end, right=.2pt] {\tiny $*$} ;
\end{tikzpicture}
\end{equation}

Technically, the lack of commutation is because
the rule $\cutr$ is translated as an instance of $\iur$, which is not a rule of $\SDDIs$ (it is not an atomic cut) and hence has to be rewritten according to \Cref{lemma:idDer}.  
But this rewriting in $\SDDIs$ might not match the resource distribution of the corresponding $\cutr$.
Consider the derivation $\pi$ below in $\SDill$ (with $\pi'$ cut-free and $\eta$-expanded) and its 
translation $\dD_\pi$ in $\SDDI$:

\vspace*{-1\baselineskip}
$$\footnotesize
\ebproofset{right label template=\tiny$\inserttext$, label separation=0.2em, separation = 1.2em}
\pi \ =  \
\begin{prooftree}
\hypo{}
\ellipsis{$\pi'$}{\vdash \Gamma}
\infer1[\wnwrule]{\vdash \Gamma, \wn a}
\infer0[\ocwrule]{\vdash \oc \widecneg{a}}
\infer2[\cutr]{\vdash \Gamma}
\end{prooftree}
\quad
\underset{\toform{\cdot}}{\overset{\Cref{fig:seqToDeep}}{\rightarrow}}
\quad
\od{
	\odi
	{
		\odi
		{\odh{\lones}}
		{\fequiv}
		{
			\od{
				\odi
				{\odd
					{\odh{\lones}}
					{\toform{\pi'}}
					{\toform\Gamma}
					{}
				}
				{\fequiv}
				{\toform{\Gamma} \lpar \odn{\lbot}{\wnwdr}{\wn a}{}}
				{}
			}
			\ltens 
			\od{
				\odi{\odh{\lone}}{\ocwdr}{\oc \cneg a}{}}
		}
		{}
	}
	{\swir}
	{\toform{\Gamma} \lpar 
		\od{
			\odi{
				\odh{
					\wn a\ltens\oc\cneg a
				}
			}
			{\iur}
			{\lbot}
			{}
		}
	}
	{}
}
\quad
\overset{\Cref{lemma:idDer}}{\rightarrow}
\quad
\od{\odi{\odi{\odh{\lones}}{\fequiv}
		{
			\od{\odi{\odd{\odh{\lones}}{\toform{\pi'}}{\toform\Gamma}{}}{\fequiv}{\toform \Gamma \lpar \odn{\lbot}{\wnwdr}{\wn a}{}}{}}
			\ltens 
			\od{\odi{\odh{\lone}}{\ocwdr}{\oc \cneg a}{}}}{}
	}{\swir}{\toform\Gamma \lpar 
		\od{\odi{\odh{\od{\odi{\odh{\wn a}}{\ocdur}{a}{}}\ltens\od{\odi{\odh{\oc\cneg a}}{\wndur}{\cneg a}{}}}}{\aiur}{\lbot}{}}
	}{}}
\ = \dD_\pi
$$

According to cut-elimination for $\SDill$, $\pi \cutred \pi'$ (one step). 
But $\dD_\pi$ in $\SDDI$ normalizes as follows:

\vspace*{-1\baselineskip}
$$
\small
\dD_\pi
\overset{\Cref{lemma:modupVSmoddown}.\ref{item:INpermup}}{\normred^*}
\od{
	\odi{\odi{
			\odi{\odh{\lones}}{\fequiv}{\od{
					\od{\odi{\odd{\odh{\lones}}{\toform{\pi'}}{\toform \Gamma}{}}{\fequiv}{\toform \Gamma \lpar 
							\od{
								\odi{\odi{\odh{\lbot}}{\wnwdr}{\wn a}{}}{\ocdur}{a}{}
							}
						}{}}
					\ltens
					\odi{\odi{\odh{\lone}}{\ocwdr}{\oc \cneg a}{}}{\wndur}{\cneg a}{}
			}}{}
		}{ \swir}{
			\toform \Gamma \lpar \od{\odi{\odh{a\ltens \cneg a}}{\aiur}{\lbot}{}}
		}{}}{\fequiv}{\toform \Gamma}{}
}
\quad \overset{\substack{\wnwdr/\ocdur\\(\Cref{fig:modRulesPerm})\\ \ }}{\normred}\quad
\od{
	\odi{\odi{
			\odi{\odh{\lones}}{\fequiv}{\od{
					\od{\odi{\odd{\odh{\lones}}{\toform{\pi'}}{\toform \Gamma}{}}{\fequiv}{\toform \Gamma \lpar 
							\od{
								\odi{\odi{\odh{\lbot}}{\zur}{\lzero}{}}{\zdr}{a}{}
							}
						}{}}
					\ltens
					\odi{\odi{\odh{\lone}}{\ocwdr}{\oc \cneg a}{}}{\wndur}{\cneg a}{}
			}}{}
		}{ \swir}{
			\toform \Gamma \lpar \od{\odi{\odh{a\ltens \cneg a}}{\aiur}{\lbot}{}}
		}{}}{\fequiv}{\toform \Gamma}{}
}
\overset{\Cref{lemma:sliceMan}}{\normred^*}
\od{\odi{\odh{\lzero}}{\zdr}{\toform \Gamma }{}}
$$

We observe that the  transformation of the general $\iur$-rule into $\aiur$ (\Cref{lemma:idDer}) converts any potential interaction of weakening, contraction and dereliction up- and down-rules to an interaction of a (weakening, contraction or dereliction) down-rule with a dereliction up-rule: it arbitrarily chooses to asks for a resource, or to make it available, exactly once.
In our example, the translation creates the ``mismatches'' $\wnwdr/\ocdur$ and $\ocwdr/\wndur$ even if in the original $\SDill$ derivation we had a ``matched'' interaction of a $\wnwrule$ with a $\ocwrule$. 
Due to these mismatches, the normal form of the derivation $\dD_\pi$ is 
{\small${\vlinf{\zdr}{}{\toform{\Gamma}}{\lzero}}$},
which is not the translation of $\pi'$ (the normal form of $\pi$ with respect to cut-elimination in $\SDill$) 
if $\pi' \neq
\begin{prooftree}[label separation = .2em]
\infer0[{\scriptsize$\zerorule$}]{\vdash \Gamma}
\end{prooftree}$.
%
More generally, this problem is related to the fact that $\SDill$ misses the promotion rule $\prule$ (\Cref{fig:scrules}), which would make a resource available at will \cite{Girard87} (see \cite{Pagani09,Tranquilli09,Gimenez11} for its cut-elimination in $\DiLL = \SDill \cup \set{\prule}$), while in the realm of $\SDill$ resources can be used only a finite number of times during cut-elimination.

In order to provide an \emph{internal} solution (\ie, in $\SDill$ without adding the 
rule $\prule$ to the system) to the commutation problem, we need to define a more sophisticated translation of $\cutr$-rules into $\SDDIs$. 

\paragraph{A commutative translation.}


We define a new translation $\toformnew{\cdot}$ from $\SDill$ to $\SDDI$ so that diagram \eqref{eq:diagram} commutes, when $\toform{\cdot}$ is replaced by $\toformnew{\cdot}$.
The idea is that the translation $\toformnew{\cdot}$ ``bends'' a derivation $\pi$ of 
$\provevia{\SDill} \Gamma, A$ to a derivation  {\small$\vlderivation{\vlde{\toformnew{\pi}}{\SDDIs}{\toform{\Gamma}}{\vlhy{\widecneg{A}}}}$} so as to avoid using the rule $\iur$.
In this way, roughly, the translation $\toformnew{\cdot}$ converts the rule $\cutr$ below (where $\pi_1$ and $\pi_2$ are $\eta$-expanded and cut-free $\SDill$ derivations) as follows:
\vspace{5pt}
\begin{center}
{\small$
\vlsmash
{\vlderivation{\vliin{}{\cutr}{\vdash \Gamma, \Delta}{\vlpr{}{\pi_1}{\vdash \Gamma, \lA}}{\vlpr{}{\pi_2}{\vdash \Delta , \widecneg \lA}}} 
\ \translatesnewto  \ \
\od{ \odd{\odh{\lones_1}}{\toform{\pi_1}}{\toform \Gamma \lpar \od{\odd{\odh{A}}{\toformnew{\pi_2}}{\toform \Delta}{}}{}}}}
$}
\qquad where $\toform{\pi_1}$ is the translation of $\pi_1$ defined	 in \Cref{fig:seqToDeep}.
\end{center}

\vspace{20pt}

To define properly the translation $\toformnew{\pi}$ of an $\eta$-expanded $\SDill$ derivation, we first need to declare which formulas in $\pi$ have to be ``bent'', selecting one of the two cut formulas for each $\cutr$ in $\pi$. 
Formally, given an $\eta$-expanded $\SDill$ derivation $\pi$, a \emph{translation} $\toformnew{\pi}$ of $\pi$ into $\SDDI$ is defined in two steps.
\begin{enumerate}
	\item For each occurrence of the rule $\cutr$ in $\pi$, we mark exactly one of its two cut formulas, say $A$, with $\isneg{A}$. 
	We propagate this mark bottom-up in $\pi$ to the subformula occurrences of $A$ in $\pi$: if the principal formula of a rule is marked, so are the active formulas in the premises (the other formulas in the sequent preserve their mark, if any).
	For instance, if the conclusion of a rule $\lpar$ is $\vdash \isneg{A}, \isneg{(B \lpar C)}, D$ with $B \lpar C$ as principal formula, then its premise is $\vdash \isneg{A}, \isneg{B}, \isneg{C}, D$.
	\item We translate $\pi$ decorated with marks $\isneg{(\cdot)}$ into a derivation $\toformnew{\pi}$ in $\SDDI$ according to the definition in \Cref{fig:seqToDeepUp} (given by induction on $\pi$).
\end{enumerate}

\begin{figure}[p]
	\def\myskip{\hskip2em}
	\def\myskiip{\hskip6em}
	\begin{center}
		\scriptsize
		$
		\vlinf{}{\axrule}{\vdash \la, \cneg \la}{}
		\ \translatesnewto \ \
		\vlinf{\aidr}{}{a \lpar \cneg a}{\lone}
		$
		\myskiip 
		$
		\vlinf{}{\axrule}{\vdash \isneg\la, \isneg{\cneg \la}}{}
		\ \translatesnewto \ \
		\vlinf{\aiur}{}{\lbot}{a \ltens \cneg a}
		$
		\myskiip 
		$
		\vlinf{}{\axrule}{\vdash \la, \isneg{\cneg \la}}{}
		\ \translatesnewto \ \
		a
		$
		\myskiip 
		$
		\vlinf{}{\axrule}{\vdash \isneg\la, {\cneg \la}}{}
		\ \translatesnewto \ \
		\cneg a
		$
		
		\smallskip
		$
		\vlderivation{
			\vlin{}{\lpar}{\vdash \Gamma_1,\isneg \Gamma_2, \isneg{(\lA \lpar \lB)}}{\vlpr{}{\pi}{\vdash \Gamma_1,\isneg \Gamma_2, \isneg\lA, \isneg\lB}}
		}
		\translatesnewto \
		\vlderivation{\od{\odd{\odh{\widecneg{\toform{\Gamma_2}} \ltens  \widecneg\lA \ltens \widecneg\lB}}{\toformnew{\pi}}{\toform{\Gamma_1}}{}} }
		$
		\myskiip 
		$
		\vlderivation{
			\vlin{}{\lpar}{\vdash \Gamma_1,\isneg \Gamma_2, \lA \lpar \lB}{\vlpr{}{\pi}{\vdash \Gamma_1,\isneg \Gamma_2, \lA, \lB}}
		}
		\translatesnewto \
		\vlderivation{\od{\odd{\odh{\widecneg{\toform{\Gamma_2}}}}{\toformnew{\pi}}{\toform{\Gamma_1} \lpar \lA \lpar \lB}{}} }
		$
		
		\smallskip
		$
		\vlderivation{\vliin{}{\ltens}{\vdash \Gamma_1, \isneg \Gamma_2,  \lA \ltens  \lB , \Delta_1, \isneg \Delta_2}{\vlpr{}{\pi_1}{\vdash \Gamma_1, \isneg \Gamma_2,  \lA}}{\vlpr{}{\pi_2}{\vdash \Delta_1\isneg \Delta_2 , \lB}}} 
		\translatesnewto \
		\vlderivation{
			\vlin
			{2\times \swir}
			{}
			{\toform{\Gamma_1} \lpar (\lA \ltens \lB) \lpar \toform{\Delta_1}} 
			{\vlhy{
					\od{\od{\odd{\odh{\widecneg{\toform{\Gamma_2}}}}{\toformnew{\pi_1}}{\toform{\Gamma_1} \lpar \lA}{}}} 
					\ltens 
					\od{\od{\odd{\odh{\widecneg{\toform{\Delta_2}}}}{\toformnew{\pi_2}}{\toform{\Delta_1} \lpar \lB}{}}} 
			}}
		}
		$
		\myskiip 
		$
		\vlderivation{\vliin{}{\ltens}{\vdash \Gamma_1, \isneg \Gamma_2, \isneg{( \lA \ltens \lB)} , \Delta_1, \isneg \Delta_2}{\vlpr{}{\pi_1}{\vdash \Gamma_1, \isneg \Gamma_2, \isneg \lA}}{\vlpr{}{\pi_2}{\vdash \Delta_1\isneg \Delta_2 , \isneg\lB}}} 
		\translatesnewto \
		\vlderivation{
			\vlin{2\times \swir}{}{
				\od{\od{\odd{\odh{\widecneg{\toform{\Gamma_2}}\ltens\cneg \lA}}{\toformnew{\pi_1}}{\toform{\Gamma_1}}{}}}
				\lpar
				\od{\od{\odd{\odh{\widecneg{\toform{\Delta_2}}\ltens\cneg \lB}}{\toformnew{\pi_2}}{\toform{\Delta_1}}{}}}
			}
			{\vlhy{\widecneg{\toform{\Gamma_2}}\ltens (\cneg A \lpar \cneg B)\ltens \widecneg{\toform{\Delta_2}}}}
		}
		$
		
		\smallskip
		$
		\vlderivation{\vlin{}{\botrule}{\vdash \Gamma_1,\isneg \Gamma_2, \lbot}{\vlpr{}{\pi}{\vdash \Gamma_1,\isneg \Gamma_2}}}
		\translatesnewto
		\vlderivation{
			\vlin
			{\fequiv}
			{}
			{\toform{\Gamma_1} \lpar \lbot}
			{\odd{\odh{\widecneg{\toform{\Gamma_2}}}}{\toformnew{\pi}}{\toform{\Gamma_1}}{} }
		}
		$
		\myskiip 
		$
		\vlderivation{\vlin{}{\botrule}{\vdash \Gamma_1,\isneg \Gamma_2, \isneg{\bot}}{\vlpr{}{\pi}{\vdash \Gamma_1,\isneg \Gamma_2}}}
		\translatesnewto \
		\od{
			\vlde
			{\toformnew{\pi}}
			{}
			{\toform{\Gamma_1}}
			{\vlin
				{\fequiv}{}
				{\widecneg{\toform{\Gamma_2}}}
				{\vlhy{\widecneg{\toform{\Gamma_2}} \ltens {\lone}}}
		}}
		$
		\myskiip 
		$
		\vlinf{}{\lone}{\vdash {\lone}}{} 
		\ \translatesnewto \ \
		{\lone}
		$
		\myskiip 
		$
		\vlinf{}{\lone}{\vdash \isneg{\lone}}{} 
		\ \translatesnewto \ \
		{\lbot}
		$
		
		\smallskip
		$
		\vlderivation{\vliin{}{\cutr}{\vdash  \Gamma_1, \isneg \Gamma_2,  \Delta_1, \isneg \Delta_2}{\vlpr{}{\pi_1}{\vdash  \Gamma_1, \isneg \Gamma_2, \lA}}{\vlpr{}{\pi_2}{\vdash  \Delta_1, \isneg \Delta_2 , \isneg{\widecneg \lA}}}} 
		\translatesnewto  \ 
		\od{
			\vlin
			{\swir}
			{}
			{\toform{\Gamma_1} \lpar 
				\od{
					\odd
					{\odh{\lA \ltens \widecneg{\toform{\Delta_2}}}}
					{\toformnew{\pi_2}}
					{\toform{\Delta_1}}
					{}
				}
			}
			{
				\vlhy{
					\od{
						\odd
						{\odh{\widecneg{\toform{\Gamma_2}}}}
						{\toformnew{\pi_1}}
						{\toform{\Gamma_1}\lpar \lA}
						{}
					}
					\ltens {\widecneg{\toform{\Delta_2}}}
				}
			}
		}
		$
		\myskiip
		$
		\vlderivation{\vliin{}{\cutr}{\vdash  \Gamma_1, \isneg \Gamma_2,  \Delta_1, \isneg \Delta_2}{\vlpr{}{\pi_1}{\vdash  \Gamma_1, \isneg \Gamma_2, \isneg\lA}}{\vlpr{}{\pi_2}{\vdash  \Delta_1, \isneg \Delta_2 , {\widecneg \lA}}}} 
		\translatesnewto  \ 
		\od{
			\vlin
			{\swir}
			{}
			{\od{
					\odd
					{\odh{\widecneg{\toform{\Gamma_2}} \ltens \cneg{\lA}}}
					{\toformnew{\pi_1}}
					{\toform{\Gamma_1}}
					{}
				}
				\lpar
				\toform{\Delta_1}
			}
			{
				\vlhy{
					{\widecneg{\toform{\Gamma_2}}}
					\ltens
					\od{
						\odd
						{\odh{\widecneg{\toform{\Delta_2}} }}
						{\toformnew{\pi_2}}
						{\cneg{\lA} \lpar \toform{\Delta_1}}
						{}
					}
				}
			}
		}
		$

		\smallskip
		$
		\vlderivation{\vliin{}{\sumrule}{\vdash   \Gamma_1,\isneg \Gamma_2}{\vlpr{}{\pi_1}{\vdash   \Gamma_1,\isneg \Gamma_2}}{\vlpr{}{\pi_2}{\vdash   \Gamma_1,\isneg \Gamma_2}}} 
		\ \translatesnewto \ \
		\vlderivation{
			\vlin
			{\pdr}
			{}
			{ \toform{\Gamma_1} } 
			{
				\vlin{\pur}{}{
					\od{\od{\odd{\odh{\widecneg{\toform{\Gamma_2}}}}{\toformnew{\pi_1}}{\toform{\Gamma_1}}{}}} 
					\lplus 
					\od{\od{\odd{\odh{\widecneg{\toform{\Gamma_2}}}}{\toformnew{\pi_2}}{\toform{\Gamma_1}}{}}}
				}
				{\vlhy{\widecneg{\toform{\Gamma_2}}}}
			}
		}
		$
		\myskiip
		$
		\vlderivation{\vlin{}{\zerorule}{\vdash \Gamma_1,\isneg \Gamma_2}{\vlhy{}}}
		\ \translatesnewto \ \
		\vlderivation{\vlin{\zdr}{}{\toform{\Gamma_1}}{\vlin{\zur}{}{\lzero}{\vlhy{\widecneg{\toform{\Gamma_2}}}}}}
		$
		
		\smallskip
		$
		\vlderivation{\vlin{}{\wnwrule}{\vdash \Gamma_1,\isneg \Gamma_2, \wn \lA}{\vlpr{}{\pi}{\vdash \Gamma_1,\isneg \Gamma_2}}}
		\translatesnewto \
		\vlderivation{
			\vlin
			{\fequiv}
			{}
			{\toform{\Gamma_1} \lpar \od{\odn{\lbot}{\wnwdr}{\wn  \lA}{}}}
			{\odd{\odh{\widecneg{\toform{\Gamma_2}}}}{\toformnew{\pi}}{\toform{\Gamma_1}}{} }
		}
		$
		\myskip 
		$
		\vlderivation{\vlin{}{\wnwrule}{\vdash \Gamma_1,\isneg \Gamma_2, \isneg{(\wn \lA)}}{\vlpr{}{\pi}{\vdash \Gamma_1,\isneg \Gamma_2}}}
		\translatesnewto \
		\od{
			\vlde
			{\toformnew{\pi}}
			{}
			{\toform{\Gamma_1}}
			{\vlin
				{\fequiv}{}
				{\widecneg{\toform{\Gamma_2}}}
				{\vlhy{\widecneg{\toform{\Gamma_2}} \ltens {\odn{\oc \cneg \lA}{\wnwur}{\lone}{}}}}
		}}
		$
		\myskip 
		$
		\vlinf{}{\ocwrule}{\vdash {\oc \lA}}{} 
		\translatesnewto \
		\vlinf{\ocwdr}{}{\oc  \lA}{\lone}
		$
		\myskip 
		$
		\vlinf{}{\ocwrule}{\vdash \isneg{(\oc \lA)}}{} 
		\translatesnewto \
		\vlinf{\ocwur}{}{\lbot}{\wn \cneg\lA}
		$
		
		$
		\vlderivation{\vlin{}{\ocdrule}{\vdash \Gamma_1,\isneg \Gamma_2, \oc \lA}{\vlpr{}{\pi}{\vdash \Gamma_1,\isneg \Gamma_2, \lA}}}
		\translatesnewto
		\vlderivation{
			\od{
				\odd{\odh{\widecneg{\toform{\Gamma_2}}}}{\toformnew{\pi}}{\toform{\Gamma_1} \lpar 
					\odn{\lA}{\ocddr}{\oc \lA}{} }{}
			} 
		}
		$
		\myskiip 
		$
		\vlderivation{\vlin{}{\ocdrule}{\vdash \Gamma_1,\isneg \Gamma_2, \isneg{(\oc \lA)}}{\vlpr{}{\pi}{\vdash \Gamma_1,\isneg \Gamma_2, \isneg\lA}}}
		\translatesnewto
		\od{
			\vlde{\toformnew{\pi}}{}{\toform{\Gamma_1}}
			{\vlhy{\widecneg{\toform{\Gamma_2}}\ltens \odn{{\wn \cneg A}}{\ocdur}{\cneg A}{}}}
		}
		$
		
		\smallskip
		$
		\vlderivation{\vlin{}{\wndrule}{\vdash  \Gamma_1,\isneg \Gamma_2, \wn \lA}{\vlpr{}{\pi}{\vdash  \Gamma_1,\isneg \Gamma_2, \lA}}}
		\translatesnewto
		\od{\odd{\odh{\widecneg{\toform{\Gamma_2}}}}{\toformnew{\pi}}{\toform{\Gamma_1} \lpar \odn{\lA}{\wnddr}{\wn \lA}{} }{}} 
		$
		\myskiip 
		$
		\vlderivation{\vlin{}{\wndrule}{\vdash  \Gamma_1,\isneg \Gamma_2,\isneg{(\wn \lA)}}{\vlpr{}{\pi}{\vdash  \Gamma_1,\isneg \Gamma_2,\isneg \lA}}}
		\translatesnewto
		\od{\odd{\odh{\widecneg{\toform{\Gamma_2}}\ltens \odn{\oc\cneg\lA}{\wndur}{\cneg\lA}{}}}{\toformnew{\pi}}{\toform{\Gamma_1}  }{}} 
		$
		
		\smallskip
		$
		\vlderivation{\vlin{}{\wncrule}{\vdash  \Gamma_1,\isneg \Gamma_2, \wn \lA}{\vlpr{}{\pi}{\vdash  \Gamma_1,\isneg \Gamma_2,  \wn \lA, \wn \lA}}}
		\translatesnewto
		\vlderivation{
			\od{
				\odd{\odh{\widecneg{\toform{\Gamma_2}}}}
				{\toformnew{\pi}}
				{\toform{\Gamma_1} \lpar \odn{\wn \lA \lpar \wn \lA}{\wncdr}{\wn \lA}{} }
				{}
			}
		}
		$
		\myskiip 
		$
		\vlderivation{\vlin{}{\wncrule}{\vdash  \Gamma_1,\isneg \Gamma_2, \isneg{(\wn \lA)}}{\vlpr{}{\pi}{\vdash  \Gamma_1,\isneg \Gamma_2,  \isneg{(\wn \lA)}, \isneg{(\wn \lA)}}}}
		\translatesnewto
		\vlderivation{
			\od{
				\odd{\odh{\widecneg{\toform{\Gamma_2}}\ltens \odn{\oc \cneg \lA \ltens \oc \cneg\lA}{\wncur}{\oc \cneg \lA}{} }}
				{\toformnew{\pi}}
				{\toform{\Gamma_1} }
				{}
			}
		}
		$
		
		\smallskip
		$
		\vlderivation{\vliin{}{\occrule}{\vdash  \Gamma_1,\isneg \Gamma_2, \oc \lA ,  \Delta_1,\isneg \Delta_2}{\vlpr{}{\pi_1}{\vdash  \Gamma_1,\isneg \Gamma_2, \oc \lA}}{\vlpr{}{\pi_2}{\vdash  \Delta_1,\isneg \Delta_2 ,\oc \lA}}} 
		\translatesnewto \ 
		\vlderivation{
			\vlin
			{2\times \swir}
			{}
			{ \toform{\Gamma_1} \lpar \od{\odn{\oc \lA \ltens \oc{\lA}}{\occdr}{\oc \lA}{}\lpar \toform{\Delta_1}} } 
			{
				\vlhy{
					\od{\od{\odd{\odh{\widecneg{\toform{\Gamma_2}}}}{\toformnew{\pi_1}}{\toform{\Gamma_1} \lpar \oc\lA}{}}}
					\ltens 
					\od{\od{\odd{\odh{\widecneg{\toform{\Delta_2}}}}{\toformnew{\pi_2}}{\toform{\Delta_1} \lpar \oc\lA}{}}}
				} 
			}
		}
		$
		\myskip 
		$
		\vlderivation{\vliin{}{\occrule}{\vdash  \Gamma_1,\isneg \Gamma_2, \isneg{(\oc \lA)} ,  \Delta_1,\isneg \Delta_2}{\vlpr{}{\pi_1}{\vdash  \Gamma_1,\isneg \Gamma_2, \isneg{(\oc \lA)}}}{\vlpr{}{\pi_2}{\vdash  \Delta_1,\isneg \Delta_2 ,\isneg{(\oc \lA)}}}} 
		\translatesnewto \ 
		\od{
			\vlin{2\times \swir}{}{{\od{\odd{\odh{\widecneg{\toform{\Gamma_2}}\ltens\wn{\cneg \lA}}}{\toformnew{\pi_1}}{\toform{\Gamma_1}}{}}}
				\lpar 
				{\od{\odd{\odh{\widecneg{\toform{\Delta_2}}\ltens\wn{\cneg \lA}}}{\toformnew{\pi_2}}{\toform{\Delta_1}}{}}}}
			{\vlhy{\widecneg{\toform{\Gamma_2}} \ltens {\odn{\wn \cneg \lA}{\occur}{\wn \cneg \lA \lpar \wn \cneg \lA}{}} \ltens \widecneg{\toform{\Delta_2}}}}}
		$
		
	\end{center}
	\caption{Translation of $\eta$-expanded $\SDill$ sequent calculus derivations into  $\SDDIs \setminus \set{\aiur}$ derivations (where if $\Gamma = A_1, \dots, A_n$, then $\isneg{\Gamma} = \isneg{A_1}, \dots, \isneg{A_n}$).}
	\label{fig:seqToDeepUp}
\end{figure}

Note that an $\eta$-expanded derivation $\pi$ in $\SDill$ may have several translations $\toformnew{\pi}$, depending on the initial selection of cut formulas in $\pi$ to mark.
We omit this dependency in the notation $\toformnew{\pi}$.  
When we state a property of a translation $\toformnew{\pi}$, we mean that it holds for \emph{any} initial selection of cut formulas~in~$\pi$.

\begin{lemma}[Target of the translation $\toformnew{\cdot}$]
	\label{lemma:target}
	Let $\pi$ be an $\eta$-expanded derivation in $\SDill$.
	Then, $\toformnew{\pi}$ is a derivation in $\SDDI$.
	If, moreover, $\pi$ is cut-free, then 
	$\toformnew{\pi} = \dD \circ \toform{\pi}$ where $\deriv{\dD}{\lone\!}{\{\zur,\pur\}}{\lones}$ 	for some $n \in \N$, 
	and $\toform{\pi}$ (defined in \Cref{fig:seqToDeep}) is a derivation in $\DDIdown$.
\end{lemma}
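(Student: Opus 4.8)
The plan is to prove both statements by induction on the $\eta$-expanded derivation $\pi$, but with two different inductive invariants. For the first statement I would strengthen it so as to expose the ``bending'': if $\pi$ has conclusion $\vdash \Gamma_1, \isneg{\Gamma_2}$, where $\isneg{\Gamma_2}$ collects exactly the marked formula occurrences, then $\toformnew{\pi}$ is a derivation $\deriv{\toformnew{\pi}}{\widecneg{\toform{\Gamma_2}}}{\SDDI}{\toform{\Gamma_1}}$. With this invariant the induction is a routine traversal of \Cref{fig:seqToDeepUp}: in each clause I check that every displayed rule belongs to $\SDDI$ and that the premise and conclusion of the composite derivation are $\widecneg{\toform{\Gamma_2}}$ and $\toform{\Gamma_1}$, using $\widecneg{\toform{\Gamma_2,\Delta_2}} = \widecneg{\toform{\Gamma_2}}\ltens\widecneg{\toform{\Delta_2}}$ and the involutivity of $\widecneg{(\cdot)}$ to align the types (the $\aiur$-case and the $\cutr$-cases need $\widecneg{\widecneg{\lA}}=\lA$). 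The base cases are the four $\axrule$ clauses; the interesting inductive cases are the two $\cutr$ clauses and the modality clauses, where one verifies that the $\swir$-steps (resp.\ the up-rules $\ocdur,\wndur,\wncur,\ocwur,\wnwur,\occur,\aiur$) apply with the right instantiation, again up to $\fequiv$.

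For the second statement I first observe that cut-freeness of $\pi$ means no $\cutr$ occurs, so no formula is ever marked: only the unmarked clauses of \Cref{fig:seqToDeepUp} are used, and there every $\widecneg{\toform{\Gamma_2}}$ collapses to $\widecneg{\toform{\emptyset}}=\lone$. Comparing these unmarked clauses one by one with the clauses of \Cref{fig:seqToDeep} defining $\toform{\cdot}$, I show they differ only in how the premise is produced: $\toform{\pi}$ feeds a power $\lones$ into the body, whereas $\toformnew{\pi}$ feeds $\lone$ and rebuilds the same power \emph{inside} the derivation using only $\zur$ (at $\zerorule$) and $\pur$ (at $\sumrule$), together with the horizontal compositions at $\ltens$ and $\occrule$ where the counts multiply. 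Concretely I prove by induction that $\toformnew{\pi}=\dD_\pi\circ\toform{\pi}$ with $\deriv{\dD_\pi}{\lone}{\{\zur,\pur\}}{\lones}$: the count-preserving clauses ($\lpar,\botrule,\wnwrule,\ocwrule,\ocdrule,\wndrule,\wncrule$) simply append body-rules below and inherit $\dD_\pi$ from the sub-derivation, while the count-combining clauses build $\dD_\pi$ from the sub-$\dD$'s, namely $\dD_\pi=\dD_{\pi_1}\ltens\dD_{\pi_2}$ at $\ltens$ and $\occrule$ (giving $\lone\fequiv\lone\ltens\lone\to\lones_1\ltens\lones_2\fequiv\loness{m}$ with $m=n_1\times n_2$), $\dD_\pi=\pur\circ(\dD_{\pi_1}\lplus\dD_{\pi_2})$ at $\sumrule$ (giving $\lone\to\loness{n_1+n_2}$), and $\dD_\pi=\zur$ at $\zerorule$; the base cases $\axrule,\onerule,\ocwrule$ set $n=1$ with $\dD_\pi$ the identity. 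That $\toform{\pi}\in\DDIdown$ is already recorded in \Cref{thm:SDilltoSDDIs}.

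The engine of each count-combining step is the interchange law of open deduction, which lets me rewrite $(\dD_{\pi_1}\circ\toform{\pi_1})\symbol(\dD_{\pi_2}\circ\toform{\pi_2})$ as $(\dD_{\pi_1}\symbol\dD_{\pi_2})\circ(\toform{\pi_1}\symbol\toform{\pi_2})$ for $\symbol\in\{\ltens,\lplus\}$, thereby floating all $\zur$ and $\pur$ occurrences to the top into a single $\dD_\pi$ and leaving a body that is literally the body of $\toform{\pi}$. I expect this bookkeeping to be the main obstacle: the equality $\toformnew{\pi}=\dD_\pi\circ\toform{\pi}$ is an \emph{exact} identity of open-deduction derivations, not merely an identity up to $\normred$, so one must check that every implicit use of $\fequiv$\,---\,notably $\lone\ltens\lone\fequiv\lone$, $\lones_1\ltens\lones_2\fequiv\loness{n_1\times n_2}$ and $\lones_1\lplus\lones_2\fequiv\loness{n_1+n_2}$\,---\,is absorbed by the $\fequiv$-steps already present in $\toform{\pi}$, and that floating the $\zur,\pur$ rules past the horizontal compositions does not disturb the $\swir$ and $\occdr$ body-rules sitting below them.
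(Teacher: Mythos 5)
Your proposal is correct and follows essentially the same route as the paper's proof: an induction on $\pi$ checking clause by clause in \Cref{fig:seqToDeepUp} that only $\SDDI$ rules (never $\iur$) appear, and observing that in the cut-free case no formula is marked, so each clause acts like the corresponding clause of \Cref{fig:seqToDeep} up to prepending $\zur$/$\pur$ rules. The paper states this in three sentences and leaves implicit exactly the details you make explicit\,---\,the strengthened invariant $\deriv{\toformnew{\pi}}{\widecneg{\toform{\Gamma_2}}}{\SDDI}{\toform{\Gamma_1}}$ and the interchange-law bookkeeping, which in open deduction holds by definition of the synchronal form, so your anticipated ``main obstacle'' dissolves.
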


\begin{proof}
	By induction on $\pi$. 
	Each step in \Cref{fig:seqToDeepUp} introduces only rules in $\SDDI$, in particular no step introduces the rule $\iur$.
	If $\pi$ is cut-free, then no formula in $\pi$ is marked; 
	thus, each step in \Cref{fig:seqToDeepUp} acts like $\toform{\cdot}$ defined in \Cref{fig:seqToDeep} except possibly for adding some rules $\pur$ and $\zur$ on top, and it does not introduce any other rule in $\DDIup$ except $\swir$ (which is also in $\DDIdown$).
\end{proof}

There is trivial way to ``bend'' a derivation $\pi$ of $\provevia{\SDill} \Gamma, A$ to a derivation in $\SDDIs$: take {\small$\vlupsmash{\vlderivation{\vlin{\zdr}{}{\toform{\Gamma}}{\vlin{\zur}{}{\lzero}{\vlhy{\cneg A}}}}}$}.
However, such a translation does not make diagram \eqref{eq:diagram} commute, because it does not keep track of resources.
The translation $\toformnew{\pi}$, instead, is \emph{resource-sensitive}, thanks to a one-to-one correspondence between the occurrences of rules for weakening, dereliction, and contraction in $\pi$ and~in~$\toformnew{\pi}$ (proved by induction~on~$\pi$).

\begin{lemma}[Resources]
	\label{lemma:bend}
	For any $\eta$-expanded derivation $\pi$ in $\SDill$, 
there is a one-to-one correspondence between the rule occurrences of $\ocdrule$ in $\pi$ and the rule occurrences of $\set{\ocddr, \ocdur}$ in $\toformnew{\pi}$, and similarly between $\wndrule$ and $\set{\wnddr, \wndur}$, $\occrule$ and  $\set{\occdr,\occur}$, $\wncrule$ and $\set{\wncdr, \wncur}$, $\ocwrule$ and $\set{\ocwdr, \ocwur}$, $\wnwrule$ and $\set{\wnwdr,\wnwur}$.
\end{lemma}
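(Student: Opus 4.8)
The plan is to proceed by induction on the $\eta$-expanded derivation $\pi$ (equivalently, on its number of inference rules), building the six required bijections simultaneously and incrementally as we unfold the inductive definition of $\toformnew{\cdot}$ given in \Cref{fig:seqToDeepUp}. For the base cases ($\axrule$, $\onerule$, $\zerorule$) the derivation $\pi$ contains none of the six exponential sequent rules, and its translation contains none of the twelve corresponding deep-inference rules (only $\aidr$, $\aiur$, or, for $\zerorule$, the rules $\zdr$ and $\zur$); hence each of the six correspondences is the empty bijection. For the inductive step I would inspect each translation clause in \Cref{fig:seqToDeepUp} and record exactly which rules among $\set{\ocddr,\ocdur,\wnddr,\wndur,\occdr,\occur,\wncdr,\wncur,\ocwdr,\ocwur,\wnwdr,\wnwur}$ it freshly introduces on top of the translations of the premises.

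The key observation, to be checked clause by clause, is twofold. First, the translation of each of the six exponential sequent rules introduces exactly one new deep-inference rule from the matching family, and the down- or up-variant is selected precisely by the mark on the principal formula: an \emph{unmarked} $\wndrule$ (\resp $\ocdrule$, $\wncrule$, $\occrule$, $\wnwrule$, $\ocwrule$) produces a single $\wnddr$ (\resp $\ocddr$, $\wncdr$, $\occdr$, $\wnwdr$, $\ocwdr$), whereas its \emph{marked} counterpart produces a single $\wndur$ (\resp $\ocdur$, $\wncur$, $\occur$, $\wnwur$, $\ocwur$). Second, the translation of every \emph{other} sequent rule\,---\,$\axrule$, $\lpar$, $\ltens$, $\cutr$, $\onerule$, $\botrule$, $\sumrule$, $\zerorule$, in both their marked and unmarked variants\,---\,introduces no rule from these six families at all, but only occurrences of $\aidr$, $\aiur$, $\swir$, $\pdr$, $\pur$, $\zdr$, $\zur$ and implicit uses of $\fequiv$. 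Consequently, the bijection for, say, $\wndrule$ versus $\set{\wnddr,\wndur}$ is obtained from the inductive hypotheses on the premises by adjoining the single new pair whenever the last rule is $\wndrule$, and is left unchanged (the disjoint union of the premise bijections, for a binary rule) otherwise; and symmetrically for the other five families.

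The main obstacle is not conceptual but the exhaustiveness and care of the case analysis. One must verify that the marks propagate correctly bottom-up, so that the inductive hypothesis genuinely applies to each premise\,---\,for instance that a marked principal $\lpar$ or $\ltens$ yields marked active subformulas in the premise, as stipulated in the definition of the marking\,---\,and that for the binary rules ($\ltens$, $\cutr$, $\occrule$, $\sumrule$) the two premise bijections are combined by disjoint union with neither over- nor under-counting. A minor point worth isolating is that the number of deep exponential rules produced is independent of the initial choice of which cut formula to mark: each exponential sequent rule contributes exactly one deep rule regardless of its mark, so the six correspondences hold for \emph{any} such selection, as required by the convention fixed before the statement.
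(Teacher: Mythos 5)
Your proposal is correct and takes essentially the same route as the paper: the paper offers no more than the remark that the lemma is ``proved by induction on $\pi$'', and your clause-by-clause inspection of \Cref{fig:seqToDeepUp}\,---\,each exponential sequent rule introduces exactly one deep rule of the matching family, with the down- or up-variant selected by the mark on its principal formula, while every other clause introduces only rules outside the twelve exponential ones\,---\,is precisely the content of that induction. Your additional observation that the count is independent of the initial choice of marked cut formulas is a faithful reading of the convention the paper fixes just before the statement.
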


Let $\normeq$ be the reflexive, transitive and symmetric closure of $\normred$.

\begin{theorem}[Simulation]
	\label{thm:cutStep}
	If $\pi$ is an $\eta$-expanded $\SDill$ derivation and $\pi\cutred \pi' $, then $\toformnew{\pi} \normeq \toformnew{\pi'}$.
\end{theorem}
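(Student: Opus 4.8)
The plan is to proceed by induction on the structure of the $\eta$-expanded derivation $\pi$, reducing the statement to a finite case analysis on the local cut-elimination steps of \Cref{fig:cut-elim}. Since $\cutred$ is the contextual closure of the root rewrites in \Cref{fig:cut-elim}, I would read the statement as: for every marking of $\pi$ there is an induced marking of $\pi'$ (obtained by transporting the marks of the reduced cut to the residual cut formulas) making the two translations $\normeq$-equivalent. Fixing such a marking, I restrict it to the redex and to the surrounding derivation context, and exploit that $\toformnew{\cdot}$ is compositional: for each last rule $r$, the derivation $\toformnew{\pi}$ is obtained from the translations of the immediate sub-derivations by a fixed operation $\mathcal{T}_r$ (a horizontal/vertical composition together with the rules that $r$ adds, as read off \Cref{fig:seqToDeepUp}).

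The first key observation is that $\normeq$ is a congruence for every such $\mathcal{T}_r$. Indeed, each normalization step of \Cref{def:normalization} is a non-trivial rule permutation of a short segment, and by \Cref{rmk:deep} any rule, hence any such permutation, may be applied deep inside an arbitrary context while leaving the rest of the derivation untouched; since the premise and conclusion of a sub-derivation are preserved both by $\cutred$ and by $\normred$, the surrounding structure imposed by $\mathcal{T}_r$ is unaffected. Consequently, when the redex of $\pi\cutred\pi'$ lies strictly inside a sub-derivation $\sigma$ (so $\sigma\cutred\sigma'$ and $\pi'$ is $\pi$ with $\sigma$ replaced by $\sigma'$), the induction hypothesis gives $\toformnew{\sigma}\normeq\toformnew{\sigma'}$, and congruence yields $\toformnew{\pi}=\mathcal{T}_r(\dots,\toformnew{\sigma},\dots)\normeq\mathcal{T}_r(\dots,\toformnew{\sigma'},\dots)=\toformnew{\pi'}$. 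This disposes of the inductive step and of all commutative cut-elimination cases, which translate to trivial permutations, i.e. identities in the open-deduction syntax, hence instances of $\normeq$ by reflexivity.

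It then remains to treat the base case where the redex is at the root, i.e. the key steps of \Cref{fig:cut-elim}. For each, I would compute $\toformnew{\pi}$ and $\toformnew{\pi'}$ from \Cref{fig:seqToDeepUp}. The decisive feature is that $\toformnew{\cdot}$ sends the \emph{unmarked} cut formula to a down-rule and the \emph{marked} one to the dual up-rule, while \Cref{lemma:bend} guarantees that the weakening/dereliction/contraction rules of $\pi$ are reflected exactly, so no spurious rule is introduced. For the multiplicative, axiom and unit steps ($\ltens/\lpar$, $\axrule$, $\lone/\lbot$) the two exponential sub-systems are inert and the equivalence follows from the linear permutations of \Cref{lemma:linearPerm} (the $\aidr$/$\swir$/$\aiur$ steps). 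For the exponential steps, I would first bring the two interacting modal rules next to each other by permuting the up-rule over the intervening $\swir$ (and $\aiur,\aidr$) using \Cref{lemma:modupVSmoddown}.\ref{item:INpermup}, obtaining an up-over-down pair acting on the same modal formula; applying the corresponding interaction-net permutation of \Cref{fig:modRulesPerm} (or its up/down dual) reproduces exactly the sequent-calculus outcome: a cancellation for matched supply/demand ($\wndrule/\ocdrule$, $\wnwrule/\ocwrule$), a $\lzero$ for a mismatch ($\wndrule/\ocwrule$, $\wnwrule/\ocdrule$), a $\lplus$ for a non-deterministic choice ($\wncrule/\ocdrule$, $\wndrule/\occrule$), and the all-ways rewiring for $\wncrule/\occrule$. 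I would then propagate the resulting $\lzero$/$\lplus$ through the rest of the derivation by slice management (\Cref{lemma:sliceMan}, itself from \Cref{lemma:sumPermutations}) to match the shape of $\toformnew{\pi'}$, where $\pi'$ carries the rules $\zerorule$/$\sumrule$.

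The main obstacle I anticipate is precisely this last bookkeeping in the sum- and zero-producing exponential cases: one must check that, for \emph{either} choice of marked cut formula, the $\lplus$/$\lzero$ created by the interaction-net permutation — after slice propagation — lands on the very subderivations that the sequent step $\cutred$ duplicates or erases, so that the two translations genuinely meet (here $\normeq$ is used in both directions, since by \Cref{lemma:target} $\toformnew{\pi'}$ may carry extra $\pur,\zur$ rules on top). The observation that makes both markings work uniformly is that each combination not drawn in \Cref{fig:modRulesPerm}, such as $\ocwur/\wnddr$, is the up/down dual of a listed one (here $\wndur/\ocwdr$), so its outcome (a $\lzero$) is independent of which cut formula was marked; verifying this symmetry across all the exponential cases is the core computation of the proof.
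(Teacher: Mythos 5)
Your proposal is correct and follows essentially the same route as the paper's (sketched) proof: case analysis on the cut-elimination steps, with exponential key cases handled by the interaction-net permutations of \Cref{fig:modRulesPerm} (via \Cref{lemma:bend}), multiplicative/axiom/unit cases by the linear permutations of \Cref{lemma:linearPerm}, commutative cases by trivial permutations, all interleaved with the slice-management and switch permutations \eqref{eq:sumZeroPerm}, \eqref{eq:switchperm} used in both directions of $\normeq$. The details you add---the congruence of $\normeq$ under contexts, the transport of markings from $\pi$ to $\pi'$, and the up/down-duality check showing either marking choice yields the same outcome---are points the paper leaves implicit.
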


\begin{proof}[Proof (sketch)]
	By case analysis of cut-elimination steps for $\SDill$, and the corresponding normalization steps in $\SDDI$.
	By \Cref{lemma:bend}, cut-elimination steps for the rules in $\set{\ocdrule,\wndrule, \occrule, \wncrule, \ocwrule, \wnwrule}$ correspond to the (non-trivial) interaction-net rule permutations in \Cref{fig:modRulesPerm} (\Cref{lemma:modupVSmoddown}.\ref{item:INperm}).
	A cut-elimination step $\ltens$/$\lpar$ corresponds to linear permutations to prove  \Cref{lemma:linearPerm}.
	Commutative cut-elimination steps correspond to trivial rule permutations.
	All permutations are interleaved by steps \eqref{eq:sumZeroPerm} and \eqref{eq:switchperm}, \mbox{in both~directions.}
\end{proof}

\Cref{thm:cutStep} says that normalization steps in $\SDDI$ (\Cref{def:normalization}) 
mimic $\SDill$ cut-elimination via translation $\toformnew{\cdot}$.
As a consequence, cut-elimination/normalization commutes with translation $\toformnew{\cdot}$.

\begin{corollary}[Commutation]
	If $\pi$ is an $\eta$-expanded $\DiLL_0$ derivation and $\pi\cutred^* \hat\pi $ with $\hat\pi$ cut-free, then $\toformnew{\pi} \normred^* \toformnew{\hat\pi}$ and $\toformnew{\hat\pi}$ is normal for $\normred$.
\end{corollary}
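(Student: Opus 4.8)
The plan is to bootstrap the single-step Simulation theorem (\Cref{thm:cutStep}) to the iterated statement and then convert the resulting symmetric equivalence into a directed reduction. Iterating \Cref{thm:cutStep} along $\pi \cutred^* \hat\pi$ immediately gives $\toformnew{\pi} \normeq \toformnew{\hat\pi}$, since $\normeq$ is by definition the reflexive–transitive–symmetric closure of $\normred$. Two tasks then remain: (i) to show that $\toformnew{\hat\pi}$ is a $\normred$-normal form, and (ii) to upgrade the equivalence $\normeq$ into an oriented reduction $\normred^*$ that reaches exactly $\toformnew{\hat\pi}$.

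For (i), I would use \Cref{lemma:target}: since $\hat\pi$ is cut-free, $\toformnew{\hat\pi} = \dD \circ \toform{\hat\pi}$ with $\deriv{\dD}{\lone}{\set{\zur,\pur}}{\lones}$ sitting on top and $\toform{\hat\pi}$ a derivation in $\DDIdown$. In particular $\toformnew{\hat\pi}$ contains no instance of $\aiur$ and no up-rule for $\oc$ or $\wn$. Inspecting the normalization steps of \Cref{def:normalization}, this already rules out every interaction-net redex (\Cref{lemma:modupVSmoddown}.\ref{item:INperm}, which needs an $\oc/\wn$ up-rule above an $\oc/\wn$ down-rule) and every $\MLL$ cut redex (\Cref{lemma:linearPerm}, which needs an $\aiur$); moreover the only up-rules present, $\zur$ and $\pur$, are already gathered at the top by $\dD$, so no slice-management step ``$\pur/\zur$ up'' applies. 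The one remaining danger is a slice-management step pushing a $\pdr$ or $\zdr$ downward past a rule of $\toform{\hat\pi}$, which occurs precisely when a $\sumrule$ or $\zerorule$ of $\hat\pi$ is not at the bottom. Here I would invoke canonicity (\Cref{fact:canon}), working with $\hat\pi$ in canonical form so that all its $\sumrule$ and $\zerorule$ instances sit at the bottom; then the corresponding $\pdr,\zdr$ form a bottom segment of $\toform{\hat\pi}$ and no such redex survives, so $\toformnew{\hat\pi}$ matches the standard shape of \Cref{thm:DDInorm}.\ref{p:decompose-DDI} and is $\normred$-normal.

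For (ii), I would combine (i) with the $\SDDI$-decomposition (\Cref{thm:DDInorm}.\ref{p:decompose-SDDI}): applied to $\toformnew{\pi}$ it produces a standard derivation $\dD'$ with $\toformnew{\pi} \normred^* \dD'$, and $\dD'$, being standard, is $\normred$-normal. Since $\normred^* \subseteq \normeq$, we obtain $\dD' \normeq \toformnew{\pi} \normeq \toformnew{\hat\pi}$ with both $\dD'$ and $\toformnew{\hat\pi}$ normal. It then suffices to know that $\normeq$-equivalent normal forms coincide to conclude $\dD' = \toformnew{\hat\pi}$, and hence $\toformnew{\pi} \normred^* \toformnew{\hat\pi}$, as required.

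The main obstacle is exactly this last uniqueness property, i.e.\ confluence of $\normred$ (or at least that two $\normeq$-equivalent standard derivations are equal). The Simulation theorem is deliberately stated with the symmetric $\normeq$ because simulating a $\cutred$ step interleaves forward permutations with the reversible bookkeeping steps \eqref{eq:sumZeroPerm} and \eqref{eq:switchperm}; consequently a naive induction on the length of $\pi \cutred^* \hat\pi$ cannot orient these steps by itself. I expect the cleanest route is to establish local confluence of $\normred$ together with termination (the latter is essentially \Cref{thm:DDInorm}), so that Newman's lemma yields unique normal forms; alternatively one argues directly that the reversible bookkeeping steps can always be postponed and absorbed once the target is already normal. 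Verifying that these bookkeeping permutations do not obstruct confluence\,---\,especially their interaction with the duplicating slice-management steps\,---\,is where the real work lies.
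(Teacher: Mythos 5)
Your proposal follows the same route as the paper, but the paper's actual proof is far shorter than what you envisage: it is, in essence, only your first paragraph. It invokes \Cref{thm:cutStep} to get $\toformnew{\pi} \normeq \toformnew{\hat\pi}$, then asserts that $\toformnew{\hat\pi}$ is normal for $\normred$ ``according to \Cref{lemma:target}'', and concludes ``hence $\toformnew{\pi} \normred^* \toformnew{\hat\pi}$''. In other words, the two points you isolate as requiring real work are both passed over in silence there: the paper does not pass to a canonical form of $\hat\pi$ before translating (it reads normality directly off the shape $\dD \circ \toform{\hat\pi}$ given by \Cref{lemma:target}), and it performs the step from ``$\normeq$-equivalent to a normal form'' to ``$\normred$-reduces to that normal form'' without invoking, let alone proving, confluence of $\normred$. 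No confluence or uniqueness-of-normal-forms statement appears anywhere in the paper, nor does your detour through \Cref{thm:DDInorm}.\ref{p:decompose-SDDI}.

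So the comparison cuts in both directions. As a submission, your proof is incomplete: you explicitly leave the uniqueness/confluence property open (``where the real work lies''), and without it your argument---exactly like the paper's---does not close, since \Cref{thm:cutStep} is proved with genuinely bidirectional bookkeeping steps and $\normeq$ cannot be oriented for free. But your diagnosis is correct, and it is in fact a criticism of the paper: the corollary's ``hence'' hides precisely the confluence assumption you name. Your worry about $\pdr$/$\zdr$ is likewise a refinement the paper skips: cut-elimination need not output a canonical $\hat\pi$, and a $\pdr$ sitting above other rules in $\toform{\hat\pi}$ is a redex for \eqref{eq:sumZeroPerm}. Be aware, though, that your own normality argument is also not airtight: even for canonical $\hat\pi$, the translation of a $\ltens$ rule places the $\swir$ instances \emph{below} the translations of its premises, so a down-rule such as $\wnwdr$ occurring in a premise sits above those $\swir$'s, which is an \eqref{eq:switchperm}-redex; hence $\toformnew{\hat\pi}$ does not automatically ``match the standard shape'' of \Cref{thm:DDInorm}.\ref{p:decompose-DDI} as you claim. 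In short: same approach as the paper, but you have identified---without filling---gaps that the paper's two-line proof does not acknowledge, and your patch for normality needs the same kind of scrutiny.
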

\begin{proof}
	By simulation (\Cref{thm:cutStep}), from $\pi\cutred^* \hat\pi $ it follows that $\toformnew{\pi} \normeq \toformnew{\hat\pi}$.
	As $\hat\pi$ is cut-free, $\toformnew{\hat\pi}$ is normal for $\normred$ according to \Cref{lemma:target}, hence $\toformnew{\pi} \normred^* \toformnew{\hat\pi}$.
\end{proof}

\section{Conclusions and Future Work}


In this paper we introduced the first sound and complete deep inference system, $\SDDI$, for the promotion-free fragment of differential linear logic, $\SDill$ \cite{EhrhardRegnier06}.
The deep inference syntax recovers the symmetry of this logic lacking in $\SDill$ sequent calculus---but which can be found in the interaction-net formalism for $\SDill$ \cite{EhrhardRegnier06}---and keeps the inductive and handy tree-like structure of sequent calculus derivations---missing in interaction nets.
The deep inference formalism  allows  us to reduce cuts to atomic formulas, and provides a tool for a more fine-grained study of cut-elimination.
Moreover, the syntax explicitly represents and internalizes the notion of slices of a derivation. 

The inference rules of $\SDDIs$ present an up/down symmetry and we proved that the up-fragment of $\SDDIs$ is derivable from the down-fragment $\DDIdown$.
To prove this result we provided a normalization procedure based on rule permutations. 
In fact, the presence of the connective $\lplus$ and its unit $\lzero$ prevent the use of the general normalization result for splittable systems \cite{TubellaGuglielmi18}.
In our normalization procedure for $\SDDI$, we are able to distinguish different kinds of rule permutations depending on their computational behavior:
some rule permutations correspond to linear (in terms of resource) cut-elimination steps, some to resource management cut-elimination steps and some to slice management operations.
Thanks to \Cref{lemma:sliceMan} we could implement a reduction strategy alternating slice management and proper cut-elimination steps inside each slice.
The internal 
normalization procedure in $\SDDIs$ to prove \Cref{thm:SDDIcutElimInternal} provides ``cut-free'' derivations. And the translation $\toform{\cdot}$ defined in \Cref{fig:seqToDeep} maps cut-free $\SDill$ derivations to $\DDIdown$, the ``cut-free'' fragment of $\SDDIs$ (\Cref{thm:SDillSDDIs}.\ref{p:SDillSDDIs-cut-free}). 
We showed that cut-elimination/normalization does not commute with translation $\toform{\cdot}$, but it does with the translation $\toformnew{\cdot}$ defined in \Cref{fig:seqToDeepUp}, a resource-sensitive refinement of $\toform{\cdot}$.

\paragraph{Translation of $\DiLL$ proof-nets.}

An ongoing work is to extend our deep inference system to represent the \emph{full} differential linear logic $\DiLL = \SDill \cup \{\prule\}$ \cite{Pagani09,Tranquilli09,PaganiTranquilli17} (including the promotion rule), possibly with the rule $\mixr$ which allows one to derive $\lA \lpar \lB$ from $\lA \otimes \lB$.
The presence of promotion $\prule$ allows us to define a translation that commutes with cut-elimination for the reasons discussed in \Cref{sect:commute}.

In this extended deep inference system, we can translate not only the $\DiLL$ sequent calculus but also $\DiLL$ proof-nets, via a direct embedding that does not pass through the sequent calculus.
Indeed, the open deduction formalism \cite{gug:gun:par:2010} allows a direct encoding of proof-nets, plus a handy and inductive syntax.



\paragraph{Computational meaning and non-determinism.}

In $\SDill$ interaction nets, when a cut-elimination step creates a new construct $\sumrule$ (expressing a non-deterministic choice) or $\zerorule$ (expressing mismatch on demanded and supplied resources), this construct is instantaneously propagated to all the interaction net where it is plugged in, without any computational step. 
It is like $\SDill$ interaction nets allow one to deal with canonical forms only, in the sense of \Cref{def:canonical}.

A feature of our deep inference formalism is that the constructs $\lplus$ (non-determinism) and $\lzero$ (resource mismatch) are internalized in the syntax, and when they appear during the normalization process, they are propagated all along the derivation by means of normalization steps (slice management, \Cref{lemma:sumPermutations} and \Cref{lemma:sliceMan}).
Is there a computational meaning in these kind of steps? Is it possible to interpret them in a model of computation which intrinsically represents non-determinism, parallelism and concurrency?

The $\pi$-calculus~\cite{milner:pi} (a model of concurrent computation) can be encoded in $\SDill$ \cite{EhrhardLaurent10}, but Mazza \cite{Mazza18} pointed out that the non-determinism expressed by $\SDill$ is too weak for true concurrency.
Deep inference may shed new light on the quest for a convincing proof-theoretic counterpart \mbox{of concurrency.}

\paragraph{Acknowledgments.} The authors thank Andrea Aler Tubella, Alessio Guglielmi, Lutz Stra\ss{}burger and the anonymous reviewers for their insightful comments.
This work has been partially supported by the EPSRC grant EP/R029121/1
``Typed Lambda-Calculi with Sharing and Unsharing''.

\bibliographystyle{eptcs}
\bibliography{biblio}

\newpage
\appendix

\end{document}